\newtheorem{theorem}{Theorem}
\newtheorem{lemma}[theorem]{Lemma}
\newtheorem{proposition}[theorem]{Proposition}
\newtheorem{remark}[theorem]{Remark}
\newcommand{\bes} {\begin{subequations}}
\newcommand{\ees} {\end{subequations}}
\newcommand{\bea} {\begin{eqnarray}}
\newcommand{\eea} {\end{eqnarray}}
\newcommand{\be} {\begin{equation}}
\newcommand{\ee} {\end{equation}}
\def\>{\rangle}
\def\<{\langle}
\def\Tr{\textrm{Tr}}
\newcommand{\ignore}[1]{}
\begin{document}	
	\title{Operational Interpretation of Quantum Fisher Information in Quantum Thermodynamics }
	 
	
	\author{Iman Marvian}
\affiliation{Departments of Physics \& Electrical and Computer Engineering, Duke University, Durham, North Carolina 27708, USA}

	\begin{abstract}
In the framework of quantum thermodynamics preparing a  quantum system in a general state requires the consumption of two distinct resources, namely, work and coherence. It has been shown that the work cost of preparing a quantum state is determined by its free energy. Considering a similar setting, here we determine the coherence cost of preparing a general state when there are no restrictions on work consumption. More precisely, the coherence cost is defined as the minimum rate of consumption of systems in a pure coherent state, that is needed to prepare copies of the desired system.  We show that the coherence cost of any system is determined by its quantum Fisher information about the time parameter, hence introducing a new operational interpretation of this central quantity of quantum metrology.  Our resource-theoretic approach also reveals a previously unnoticed connection between two fundamental properties of quantum Fisher information.
	\end{abstract}
		
		\maketitle

Information-theoretic approach  to quantum thermodynamics  and,  more specifically, the resource-theoretic approach \cite{chitambar2019quantum}  has proven to be extremely fruitful.  This, for instance, 
 has lead to the discovery of new aspects of quantum coherence in thermodynamics (See, e.g., \cite{lostaglio2015description,lostaglio2015quantumPRX,korzekwa2016extraction, narasimhachar2015low, marvian2020coherence,  marvian2019no, lostaglio2019coherence, streltsov2017colloquium}).   In this approach, which is partly inspired by the entanglement theory, one studies the inter-convertability of systems under a limited set of operations, which presumably can be implemented with negligible thermodynamic costs (this assumption relies on certain idealizations about available resources and achievable control of quantum systems).    A popular 
choice  is the set of thermal operations, i.e.,  those  that can be implemented by coupling the  system to a thermal bath via energy-conserving unitaries \cite{janzing2000thermodynamic, brandao2013resource}.   

From a thermodynamics point of view,  preparing  a general quantum state requires consumption of both work and energetic coherence, i.e., coherence between states with different energies, which can also be understood as asymmetry with respect to time translations \cite{marvian2014modes, lostaglio2015quantumPRX, marvian2016quantum,marvian2016quantify}. In the resource-theoretic framework of quantum thermodynamics, it has been shown that the work cost of preparing   many independent and identically distributed (iid) copies of any  quantum system 
is determined by its free energy   \cite{brandao2013resource}.  
On the other hand,  characterizing the coherence cost of preparing  quantum systems has remained an open question \cite{winter2016operational, streltsov2017colloquium}. 


In this Letter we settle this question and show that the coherence cost of preparing  a quantum system in a general state is determined by the   Quantum Fisher Information (QFI) \cite{Helstrom:book, Holevo:book, BraunsteinCaves:94} of the system about the time parameter.  More precisely, to prepare copies of the desired system in the iid regime, the minimum rate of consumption of systems in a fixed pure coherent state is determined by the ratio of QFI's of the desired system to the input pure system (See Fig.\ref{Fig1}).  Interestingly, a similar result does not hold for the reverse process, called  coherence distillation:  for generic mixed input states the rate of conversion to pure coherent states is zero  \cite{marvian2020coherence}.  

Hence, our result reveals a novel operational interpretation of QFI, which is the  central quantity of  quantum metrology \cite{Giovannetti:05, giovannetti2011advances}.  
Remarkably, our resource-theoretic approach also   clarifies a close connection between two different fundamental properties of QFI, namely QFI as a convex roof of variance and QFI as the variance of purification of state.  While QFI has been extensively studied in quantum metrology, to our knowledge this connection has not been appreciated before.

 To focus on coherence as a resource independent of work,  one can supplement thermal operations  with a battery or work reservoir that can provide an  unlimited amount of work (In other words, one can make work a free resource). It turns out \cite{keyl1999optimal, proof_thesis, marvian2020coherence} that in this way one can  implement all and only  time-Translationally Invariant (TI) operations, i.e.,  completely positive trace-preserving maps satisfying  the covariance condition,
\be\label{cov}
e^{-i H_\text{out} t}\  \mathcal{E}_{\text{TI}}(\sigma)\ e^{i H_\text{out} t}=\mathcal{E}_{\text{TI}}\big(e^{-i H_\text{in} t}\sigma e^{i H_\text{in} t}\big)\ ,
\ee
for all density operators $\sigma$ and all times $t$ \cite{QRF_BRS_07, gour2008resource, marvian2013theory, Marvian_thesis}. Here,   $ H_\text{in}$ and $H_\text{out}$ are, respectively, the input and output  Hamiltonians. TI operations can not generate (energetic) coherence:  to prepare systems containing coherence via TI operations, one needs an input that contains  coherence. On the other hand, preparing incoherent states, i.e., those that commute with the system Hamiltonian, does not require consuming coherence.   In summary, to understand  coherence as a resource independent of work, we  study state conversions under TI operations. It is also worth noting that going beyond TI operations makes  coherence a free resource: 
using any non-TI operation it is possible to generate energetic coherence 
from incoherent states, albeit this may require correlation between the input of the operation and an auxiliary system \cite{marvian2020coherence}.

TI operations and the notion of coherence cost also arise in the study of quantum clocks.  While  coherent states and non-TI operations  should be defined relative to a background reference clock,  Eq.(\ref{cov}) means that TI operations can be defined and implemented without access to this  clock 
  \cite{QRF_BRS_07, gour2008resource, marvian2020coherence}.  Suppose  one does not have access to the  reference clock, but is given  quantum clocks that are synchronized with it.   What is the minimum rate of consumption of quantum clocks in pure states, that is needed to prepare copies of a desired system  (See Fig.\ref{Fig1})?  
   Again, we find that the answer is given by the QFI of the system about the time parameter.   \\

\begin{figure} [h]
\begin{center}
\includegraphics[scale=.65]{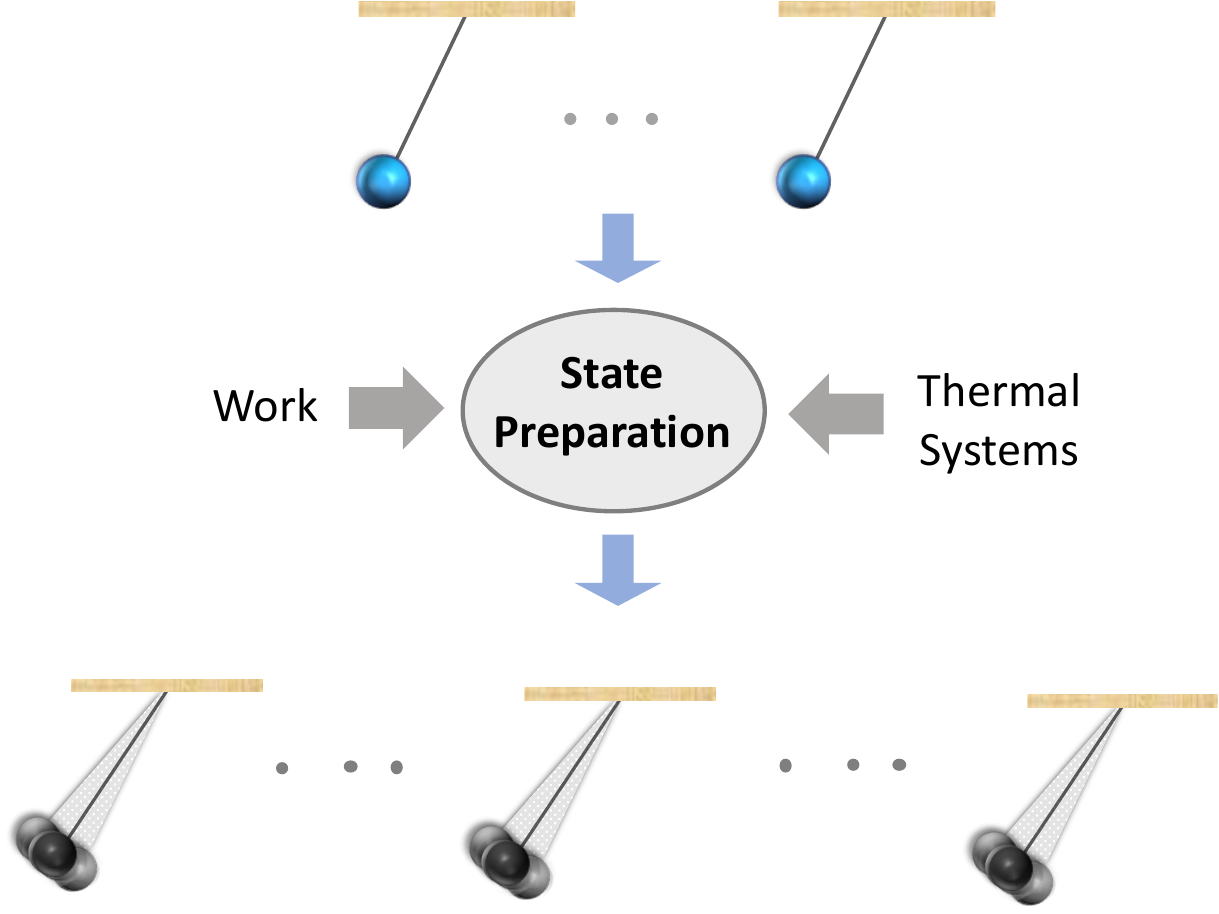}
\caption{Preparing a  quantum system in a general state requires consumption of both work and coherence. Here, we study the coherence cost of preparing state, when there are no limitations on work consumption.  Equivalently, we characterize the minimum rate of consumption of quantum clocks that is needed to prepare a general state, when one does not have access to the standard reference clock. }\label{Fig1}
\end{center}
\end{figure}
\noindent \emph{Pure states in the iid regime}---We study systems with finite-dimensional Hilbert spaces. Each system is specified by its Hamiltonian $H$ and density operator $\rho$.  We assume the systems under consideration have periodic dynamics with a fixed but arbitrary period $\tau$ such that $\tau=\inf\{t>0: e^{-i H t}\rho e^{i H t}=\rho\}$.   Under TI operations, a system with period $\tau$ can  only be converted to systems with period $\tau/k$, for an integer $k$. In the following, we consider $n$ copies of a system with Hamiltonian $H$ and state $\rho$,  which means their joint state is $\rho^{\otimes n}$ and their total Hamiltonian is $\sum_{j=0}^{n-1} I^{\otimes j}\otimes H \otimes I^{\otimes (n-j-1)}$.

Consider many copies of a system with Hamiltonian $H_1$, pure state $\psi_1$ and period $\tau$.  Is it possible to convert these systems to many copies of another system with the same period $\tau$, in pure state $\psi_2$ and Hamiltonian $H_2$,  using  TI operations? Since exact conversions are often 
impossible and physically intractable, as usual we allow a vanishing  error quantified, e.g., in terms of the trace distance $D(\rho,\sigma)=\|\rho-\sigma\|_1/2$.  
In the following,      $V_H(\psi)=\langle\psi|H^2|\psi\rangle-\langle\psi|H|\psi\rangle^2$  denotes the energy variance of pure state $\psi$ with respect to Hamiltonian $H$.  Our first main result is    
\begin{theorem}\label{Thmold} 
Consider a pair of systems with pure states $\psi_1$ and $\psi_2$ and Hamiltonians $H_1$ and $H_2$, with equal periods.  Using TI operations the state conversion 
\be\nonumber
 |\psi_1\rangle^{\otimes n}\xrightarrow{TI}\stackrel{\epsilon_n}{ \approx}  |\psi_2\rangle^{\otimes \lceil R n\rceil} \ \ \ \ \ \ \   \text{as } n\rightarrow\infty\ ,\ \   \epsilon_n\rightarrow 0\ ,
\ee
with vanishing error $\epsilon_n$ in trace distance is possible if rate $R\le V_{H_1}(\psi_1)/V_{H_2}(\psi_2)$ and is impossible  if $R> V_{H_1}(\psi_1)/V_{H_2}(\psi_2)$. 
 \end{theorem}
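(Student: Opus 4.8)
\emph{Proof strategy}---For a pure state, $V_H(\psi)=\tfrac14 F_Q(\psi,H)$ is one quarter of the quantum Fisher information $F_Q$ about the time parameter, a quantity additive over copies, so $\psi_1^{\otimes n}$ and $\psi_2^{\otimes m_n}$ (with $m_n:=\lceil Rn\rceil$) carry ``coherence'' $nV_{H_1}(\psi_1)$ and $m_nV_{H_2}(\psi_2)$; the theorem asserts that under TI operations this content is an asymptotically reversible resource, the analogue of the entropy of entanglement for bipartite pure states under LOCC. I would prove the two inequalities separately. The object underlying both is the energy distribution $p^{(i)}_E$ of $\psi_i^{\otimes n}$: equal finite periods force the energies of $\psi_i$ onto a lattice $E_0^{(i)}+\tfrac{2\pi}{\tau}\mathbb Z$, so the lattice local central limit theorem (CLT) gives that $p^{(i)}$ is, up to vanishing total-variation distance, a discretized Gaussian of variance $nV_{H_i}(\psi_i)$ concentrated in an $O(\sqrt n)$-wide window.

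\emph{Impossibility for $R>V_{H_1}(\psi_1)/V_{H_2}(\psi_2)$}---For pure $\psi$ and small $\theta$, $D\big(\psi,e^{-iH\theta}\psi e^{iH\theta}\big)=\sqrt{1-|\langle\psi|e^{-iH\theta}|\psi\rangle|^2}=\theta\sqrt{V_H(\psi)}\,(1+O(\theta))$: the ``speed'' along the orbit is $\sqrt{V_H(\psi)}$. Write $s_n(\theta)$ and $t_n(\theta)$ for this quantity evaluated at $\psi_1^{\otimes n}$ and $\psi_2^{\otimes m_n}$ (with the respective total Hamiltonians). Because a TI operation $\mc E_n$ intertwines the two time translations, Eq.~(\ref{cov}), and contracts trace distance, $\psi_1^{\otimes n}\xrightarrow{TI}\stackrel{\eps_n}{\approx}\psi_2^{\otimes m_n}$ forces $t_n(\theta)\le s_n(\theta)+2\eps_n$ for all $\theta$ by the triangle inequality. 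Choosing $\theta=\theta_n:=\delta/\sqrt{m_nV_{H_2}(\psi_2)}$ and using $|\langle\psi_i|e^{-iH_i\theta}|\psi_i\rangle|^2=1-V_{H_i}(\psi_i)\theta^2+O(\theta^3)$ together with $|\langle\psi_i^{\otimes k}|e^{-iH_i^{(k)}\theta}|\psi_i^{\otimes k}\rangle|=|\langle\psi_i|e^{-iH_i\theta}|\psi_i\rangle|^{k}$, one finds $s_n(\theta_n)^2\le\delta^2V_{H_1}(\psi_1)/(RV_{H_2}(\psi_2))+o(1)$ and $t_n(\theta_n)^2\ge1-e^{-\delta^2}-o(1)$; combining and letting $n\to\infty$ and then $\delta\to0$ yields $R\le V_{H_1}(\psi_1)/V_{H_2}(\psi_2)$. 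Working at a fixed small $\theta$ is what makes this robust: only contractivity of channels and low-order Taylor expansions are used, so the argument is untroubled by the notorious discontinuity of $F_Q$.

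\emph{Achievability for $R\le V_{H_1}(\psi_1)/V_{H_2}(\psi_2)$}---Adjoin a fresh register $B$ in a fixed eigenstate $|e_B\rangle$ of the target Hamiltonian, an incoherent (hence free) input. In energy bases write $\psi_1^{\otimes n}=\sum_E e^{i\phi_E}\sqrt{p^{(1)}_E}\,|\chi_E\rangle_A$ and $\psi_2^{\otimes m_n}=\sum_{E'}e^{i\phi'_{E'}}\sqrt{p^{(2)}_{E'}}\,|\chi'_{E'}\rangle_B$ with $|\chi_E\rangle,|\chi'_{E'}\rangle$ normalized within the relevant eigenspaces. I would build an energy-conserving (hence covariant) isometry $V\colon\mc H_A\to\mc H_A\otimes\mc H_B$ taking $|\chi_E\rangle_A\otimes|e_B\rangle$ to a superposition over $E'$ of $|\,\cdot\,\rangle_A\otimes|\chi'_{E'}\rangle_B$ conserving the total energy $E+e_B$; requiring the $B$-marginal to be $\psi_2^{\otimes m_n}$ forces the residual weight on $A$ to follow a distribution $q$ with $q*p^{(2)}=p^{(1)}$, and, given such a $q\ge0$, fixes $V$ up to phases. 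Since $p^{(1)},p^{(2)}$ are (local CLT) discretized Gaussians of variances $nV_{H_1}(\psi_1)$ and $m_nV_{H_2}(\psi_2)$, the required $q$ exists---essentially a discretized Gaussian of variance $nV_{H_1}(\psi_1)-m_nV_{H_2}(\psi_2)$, whose non-negativity is exactly $R\le V_{H_1}(\psi_1)/V_{H_2}(\psi_2)$ modulo an $o(n)$ overshoot absorbed into the error. Discarding $A$ gives a TI operation sending $\psi_1^{\otimes n}$ to a state converging to $\psi_2^{\otimes m_n}$ in trace distance.

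\emph{Main obstacle}---I expect achievability to be where the real work lies: one must upgrade the heuristic deconvolution to a quantitative statement, which needs Berry--Esseen-type control on the lattice energy sums, truncation to typical energy windows so that $q$ remains a genuine, appropriately supported distribution and the amplitudes defining $V$ stay bounded, and a verification that the block isometry extends to a covariant channel with the stated asymptotic action. The equal-period hypothesis enters precisely here---to align the energy lattices of $A$ and $B$ and to license the local CLT---and is automatically inherited by tensor powers of a pure state (for which it also makes the convex-roof and purification characterizations of $F_Q$ coincide with the variance). The impossibility half is by comparison soft.
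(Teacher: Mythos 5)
Your converse argument is correct, and it takes a genuinely different (and, for this pure-state theorem, more elementary) route than the paper. You work directly with trace distance: covariance plus data processing give $D\big(\mathcal{E}_n(\psi_1^{\otimes n}),\,\mathcal{U}_{\mathrm{out}}(\theta)[\mathcal{E}_n(\psi_1^{\otimes n})]\big)\le s_n(\theta)$, and two uses of the triangle inequality give $t_n(\theta)\le s_n(\theta)+2\eps_n$; evaluating at $\theta\sim 1/\sqrt{n}$ and Taylor-expanding the pure-state overlaps (legitimate here because the Hilbert spaces are finite dimensional, so $H_{1,2}$ are bounded and the error terms $n\,O(\theta_n^3)=O(1/\sqrt n)$ are uniform) yields $\sqrt{1-e^{-\delta^2}}\le \delta\sqrt{V_{H_1}(\psi_1)/(R\,V_{H_2}(\psi_2))}$ and hence $R\le V_{H_1}(\psi_1)/V_{H_2}(\psi_2)$ after $\delta\to 0$. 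The paper instead proves the stronger mixed-output statement, Eq.~(\ref{boundrate}) via Theorem \ref{Thm:mono}, using multiplicativity and data processing of fidelity together with a Bures-metric triangle inequality (Lemma \ref{lem:fid}), with the same $\Delta t=t/\sqrt n$ scaling. Your version buys simplicity (no Bures/fidelity machinery, no second-order fidelity expansion for mixed states) at the price of applying only when the ideal target is pure, which is all Theorem \ref{Thmold} requires.

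The achievability half has the right architecture (reduce to lattice energy distributions, build an energy-conserving block isometry), and the deconvolution device $q\ast p^{(2)}\approx p^{(1)}$ with the residue carried off on a discarded register is a legitimate alternative to the paper's direct matching of $p_{\psi_1^{\otimes n}}$ to an integer translate of $p_{\psi_2^{\otimes\lceil Rn\rceil}}$ (Proposition \ref{prop:TI}). However, the analytic core is missing, and it is exactly where the paper does its work. First, you need convergence in total variation, not weak or pointwise convergence: the paper obtains this from the Barbour--Cekanavicius translated-Poisson approximation (Theorem \ref{lem:prob}) together with Lemma \ref{lem:dis} to compare nearby variances; a bare lattice local CLT must still be upgraded to a TV statement with tail control, and your step $q\ast G_2\approx G_1$ for discretized Gaussians needs its own quantitative estimate (including the regime $R$ at the critical ratio, where the residual variance $nV_{H_1}(\psi_1)-\lceil Rn\rceil V_{H_2}(\psi_2)$ can be $O(1)$ or slightly negative). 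Second, any such local-limit/TV statement requires an aperiodicity (span-one) condition on $p_{\psi_i}$, and ``equal period $\tau$'' does not give it for a single copy (cf.\ the $(|0\rangle+|2\rangle+|5\rangle)/\sqrt{3}$ example); the paper converts full period into the needed condition for a finite tensor power via a gcd/Bezout argument (Lemma \ref{lem:app:54}), a step absent from your sketch. Third, your isometry stores the residual energy $E+e_B-E'$ back in register $A$, but those values need not be eigenvalues of the $n$-copy input Hamiltonian (the offsets $nE_0^{(1)}$ versus $\lceil Rn\rceil E_0^{(2)}$ generically misalign); this is fixable by adjoining, as a free incoherent resource, an ancilla with a ladder spectrum of spacing $2\pi/\tau$ and placing the junk there, but as written the construction can fail. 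None of these is an unfixable flaw, but together they constitute the actual proof of the achievability direction, which your proposal defers.
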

Hence, in the iid regime   oscillators in pure states with the same frequencies are equivalent resources, in the sense that by adding or absorbing sufficient amount of energy their coherence content, or equivalently, their information content about time,  can be converted from one form to another. Note that  the maximal achievable rate from system 1 to 2, namely  $V_{H_1}(\psi_1)/V_{H_2}(\psi_2)$,  is the inverse of the maximal rate from system 2 to 1. In this sense the process is reversible.    Consequently,  in this regime the usefulness of a clock can be quantified by a single number, namely its energy variance.  In other words, we can pick a standard \emph{clock-bit} (coherence-bit) or \emph{c-bit} with period $\tau$ and quantify the amount of resource of a general state relative to this standard. A convenient choice is a two-level system with Hamiltonian $H_\text{c-bit}=\pi \sigma_z/\tau$ and state $|\Theta\rangle_{\text{c-bit}}=(|0\rangle+|1\rangle)/\sqrt{2}$, with the energy variance $\pi^2/\tau^2$.

This theorem, which is proven in the Supplementary Material (SM),   strengthens and generalizes a previously known result  \cite{schuch2004nonlocal, schuch2004quantum, gour2008resource}  in multiple ways. The common intuition behind all these results, first discussed in \cite{schuch2004nonlocal, schuch2004quantum}, is based on the Central Limit Theorem which implies that the total energy distribution of many copies of a state converges to a Gaussian distribution, and hence is characterized by its variance and mean, which are both additive.  Then, as the mean energy can be changed arbitrarily by TI operations, the conversion rate is determined by the ratio of  variances.

 One aspect of theorem  \ref{Thmold} that makes it stronger than the previous result is the requirement of convergence in the trace distance, whose significance  arises from Helstrom's theorem \cite{Helstrom:book, nielsen2000quantum, wilde2013quantum}. According to this theorem states with vanishing trace distance are indistinguishable and therefore equivalent resources.  
  Another new aspect of the above result is the rigorous upper bound on the achievable rate $R$.    
 Since variance is  additive for uncorrelated systems and is non-increasing in exact state conversions under TI operations, it is straightforward to show that  the  
rate $R> V_{H_1}(\psi_1)/V_{H_2}(\psi_2)$ is not achievable in  exact  state  conversions  \cite{gour2008resource}.   However, this argument fails in the presence of error $\epsilon_n$:  For a pair of output states with trace distance $\epsilon_n$, the energy variances can differ by order $\epsilon_n \lceil R n\rceil^2  \|H\|^2 $. Hence,  the variance per copy can differ by order  $ \epsilon_n  \lceil R n\rceil  \|H\|^2$, which does not necessarily vanish, even if  $\epsilon_n\rightarrow0$ in the limit $n\rightarrow\infty$.      We overcome this complication and show that with   $R> V_{H_1}(\psi_1)/V_{H_2}(\psi_2)$, error cannot vanish in the limit $n\rightarrow\infty$ (See Eq.(\ref{boundrate}) below for a more  general result).  

  Theorem \ref{Thmold} only applies   to pure states. In the rest of this paper we consider a variant of this scenario where the output  are mixed states. But, first we need to discuss the physical significance of the energy variance in this theorem.\\


\noindent\emph{Quantum Fisher Information (QFI) }-- Consider the family of states $\{e^{-i H t}\rho e^{i H t}\}_t$ corresponding to the time-evolved versions of a system in the initial state $\rho$ and Hamiltonian $H$.  The QFI relative to the time parameter $t$ for this family of state is  
\be\label{Def:Fisher}
F_H(\rho)= 2 \sum_{j,k} \frac{(p_j-p_k)^2 } {p_j+p_k}\ |\langle\phi_j| H |\phi_k\rangle|^2\ ,
\ee
where $\rho=\sum_j p_j |\phi_j\rangle\langle\phi_j|$ is the spectral decomposition of $\rho$. Equivalently, QFI can be expressed as the second derivative of the fidelity of states $\rho$ and $e^{-i H t}\rho e^{i H t}$ with respect to the parameter $t$ \cite{hayashi2006quantum}.   According to the standard interpretation of this quantity in quantum estimation,  $F_H(\rho)$ determines   how well one can estimate the unknown parameter $t$, by measuring $n\gg1$ copies of  state $e^{-i H t}\rho e^{i H t}$:  the mean squared error $\langle\delta t^2\rangle$ for any unbiased estimator satisfies the Cram\'{e}r-Rao bound $\langle\delta t^2\rangle \ge [nF_{H}(\rho)]^{-1}$, which is attainable in the asymptotic regime \cite{Holevo:book, Helstrom:book,  paris2009quantum, BraunsteinCaves:94}.   QFI has found  extensive applications in different areas of physics, beyond quantum metrology   
 (See, e.g.  \cite{girolami2013characterizing, kim2018characterizing, zhang2017detecting, girolami2017witnessing, pires2016generalized, marvian2016quantum, mondal2016quantum, zanardi2007information, zanardi2008quantum, zanardi2007bures, campos2007quantum, lashkari2016canonical}). In particular,  it has been studied  as a measure of asymmetry and coherence    \cite{yadin2016general, kwon2018clock}.  



QFI has various nice properties,  including   (\textbf{i}) Faithfulness: It is zero if, and only if, state is incoherent. (\textbf{ii}) Monotonicity: It is non-increasing under any TI operation $\mathcal{E}_\text{TI}$, i.e., $F_H(\mathcal{E}_\text{TI}(\rho))\le F_H(\rho)$. In particular, it remains invariant under energy-conserving unitaries. (\textbf{iii}) Additivity: For a composite non-interacting system with the total Hamiltonian $H_\text{tot}=H_1\otimes  I_2+I_1\otimes  H_2$, QFI is additive for uncorrelated states, i.e., $F_{H_\text{tot}}(\rho_1\otimes \rho_2)=F_{H_1}(\rho_1)+F_{H_2}(\rho_2)$. (\textbf{iv})  Convexity: For any $p\in[0,1]$ and states $\rho$ and $\sigma$, $F_H(p\rho+(1-p) \sigma)\le p F_H(\rho)+(1-p) F_H(\sigma)$.



For pure states, QFI reduces to the energy variance,  namely      $F_H(\psi)=4 V_H(\psi)$. 
 Therefore, theorem \ref{Thmold} means that  in the iid regime, the maximal rate of conversion between pure states is determined by the ratio of their QFI's.  
 This interpretation suggests that to generalize the result to mixed states, the role of variance should be replaced by QFI.  As we show  below, this conjecture  is partially correct, namely when the output  states are mixed but the input states are still pure.  On the other hand, the result of \cite{marvian2020coherence} shows that this conjecture  fails for generic mixed input states.   It is also worth noting that the state conversion   described in theorem \ref{Thmold} requires  coherent interactions between the input and output systems:  
unless the output $\psi_2$ is an energy eigenstate,  it is not possible to achieve a positive rate $R>0$ with a  vanishing error, using  measure-and-prepare (i.e., entanglement-breaking)   TI operations  \cite{marvian2020coherence}.   
This  again suggests that the operational interpretation  of QFI in the context of  parameter estimation  
cannot fully explain the special role of  variance  in theorem \ref{Thmold}.\\

\noindent\emph{Coherence cost}---Consider a system with state $\rho$ and Hamiltonian $H$ with period $\tau$. We define the coherence cost  $C^\text{TI}_c(\rho)$ of this system as the minimal rate at which c-bits with period $\tau$ (i.e., two-level systems with state $|\Theta\rangle_{\text{c-bit}}=(|0\rangle+|1\rangle)/\sqrt{2}$ and Hamiltonian $H_\text{c-bit}=\pi \sigma_z/\tau$) have to be consumed for preparing copies of this system in the iid regime, i.e., 
\be\nonumber
C^\text{TI}_c(\rho)=\inf R: \Theta_\text{c-bit}^{\otimes\lceil R n\rceil} \xrightarrow{TI}\stackrel{\epsilon_n}{ \approx}  \rho^{\otimes n}\  \text{as } n\rightarrow\infty, \epsilon_n\rightarrow 0\ , 
\ee
where the vanishing error $\epsilon_n$ is quantified in the trace distance.   This  quantity can be thought of as the counterpart of  the \emph{entanglement cost}  in  entanglement  theory \cite{horodecki2009quantum}  (Note that a different notion of coherence cost for speakable coherence is previously studied  in \cite{winter2016operational,  lami2019completing}).  Our second main result is
\begin{theorem}\label{Thm0}
The coherence cost of  a  system with  Hamiltonian $H$, state $\rho$,  and period $\tau$ is propositional to its QFI. That is  
\be
C^\text{TI}_c(\rho)=\frac{F_H(\rho)}{F_{\text{c-bit}}}= (\frac{\tau}{2\pi})^2 \times F_H(\rho) \ .
\ee
 \end{theorem}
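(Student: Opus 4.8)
\smallskip
\noindent\emph{Proof plan.}---The statement is a pair of matching bounds. Writing $F_\text{c-bit}=4V_{H_\text{c-bit}}(\Theta_\text{c-bit})=4\pi^2/\tau^2$, the goal is $C^\text{TI}_c(\rho)\le F_H(\rho)/F_\text{c-bit}$ (achievability) and $C^\text{TI}_c(\rho)\ge F_H(\rho)/F_\text{c-bit}$ (converse). Both follow from a mixed-output strengthening of Theorem~\ref{Thmold}: under TI operations, iid copies of a pure state $\psi_1$ convert to iid copies of a possibly mixed state $\rho_2$ at rate $R$ with vanishing trace-distance error if and only if $R\le F_{H_1}(\psi_1)/F_{H_2}(\rho_2)$. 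Theorem~\ref{Thm0} is the instance $\psi_1=\Theta_\text{c-bit}$, $\rho_2=\rho$, read as a reciprocal rate; when $\rho$ is pure it collapses to Theorem~\ref{Thmold} since then $F_H(\rho)=4V_H(\rho)$. A recurring technicality is periodicity: the auxiliary pure systems below must carry Hamiltonians whose joint dynamics is periodic and commensurate with $\tau$, so that Theorem~\ref{Thmold} applies verbatim; enlarging the environment and perturbing its Hamiltonian infinitesimally arranges this while moving every variance by an arbitrarily small amount, so the rates are unaffected.

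\noindent\emph{Achievability.} I would use the representation of QFI as the minimal energy variance over purifications (the ``variance of a purification'' property mentioned earlier): for a fixed purification $|\chi\rangle_{SE}$ of $\rho$ there is a Hamiltonian $H_E$ on $E$ with $4\,V_{H\otimes I+I\otimes H_E}(\chi)=F_H(\rho)$. Then $|\chi\rangle^{\otimes n}$ is $n$ iid copies of a pure state whose energy variance per copy equals $\tfrac14 F_H(\rho)=\tfrac14 F_\text{c-bit}\cdot\big(F_H(\rho)/F_\text{c-bit}\big)$, so Theorem~\ref{Thmold} supplies, for every $R>F_H(\rho)/F_\text{c-bit}$, a TI conversion $\Theta_\text{c-bit}^{\otimes\lceil Rn\rceil}\to\sigma_n'$ with $D\big(\sigma_n',|\chi\rangle\langle\chi|^{\otimes n}\big)=\epsilon_n\to0$. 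Composing with the partial trace over the $n$ copies of $E$---an energy-conserving, hence TI, operation that cannot increase trace distance---yields a state within $\epsilon_n$ of $\rho^{\otimes n}$. Hence every such $R$ is admissible, giving $C^\text{TI}_c(\rho)\le F_H(\rho)/F_\text{c-bit}$.

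\noindent\emph{Converse.} Here monotonicity plus additivity of QFI only give $F(\sigma_n)\le m_nF_\text{c-bit}$ for a TI protocol $\Theta_\text{c-bit}^{\otimes m_n}\to\sigma_n$, and one cannot pass from this to $F(\sigma_n)\approx nF_H(\rho)$ because QFI is discontinuous---the obstacle flagged in the main text. I would instead use a \emph{finite-time} surrogate that is automatically error-robust: for $t>0$ set $\Phi_\rho(t):=F\!\left(\rho,e^{-iHt}\rho e^{iHt}\right)$, the Uhlmann fidelity between a state and its time shift. Three facts combine. (i) By the covariance condition~(\ref{cov}) a TI channel $\mathcal E$ commutes with the time shift, $\mathcal E(\sigma)_t=\mathcal E(\sigma_t)$, so monotonicity of fidelity gives $\Phi_{\mathcal E(\sigma)}(t)\ge\Phi_\sigma(t)$. (ii) $\Phi$ is multiplicative, $\Phi_{\rho^{\otimes n}}(t)=\Phi_\rho(t)^n$, and for pure states $\Phi_\psi(t)=|\langle\psi|e^{-iHt}|\psi\rangle|^2$, so $\Phi_{\Theta_\text{c-bit}^{\otimes m_n}}(t)=\cos^{2m_n}(\pi t/\tau)$. (iii) Since the time shift is an isometry, $D\big((\sigma_n)_t,(\rho^{\otimes n})_t\big)=D(\sigma_n,\rho^{\otimes n})\le\epsilon_n$; combining $1-\sqrt F\le D$ with the triangle inequality for the Bures angle $\arccos\sqrt F$ then controls the difference of $\arccos\sqrt{\Phi}$ for the two pairs by $2\arccos(1-\epsilon_n)\to0$, uniformly in $t$. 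Feeding the protocol's input into (i) and then (iii) yields $\arccos\big(\Phi_\rho(t)^{n/2}\big)\le\arccos\big(\cos^{m_n}(\pi t/\tau)\big)+2\arccos(1-\epsilon_n)$. Now set $t=a/\sqrt n$ for fixed $a>0$ and use the even expansions $\Phi_\rho(t)=1-\tfrac14 F_H(\rho)t^2+O(t^4)$ (the fidelity--QFI relation) and $\cos^2(\pi t/\tau)=1-\tfrac14 F_\text{c-bit}t^2+O(t^4)$: along any subsequence with $m_n/n\to r$ the left side tends to $\arccos\big(e^{-a^2F_H(\rho)/8}\big)$ and the right side to $\arccos\big(e^{-a^2F_\text{c-bit}r/8}\big)$, forcing $F_H(\rho)\le F_\text{c-bit}\,r$. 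Hence $\liminf_n m_n/n\ge F_H(\rho)/F_\text{c-bit}$, i.e.\ $C^\text{TI}_c(\rho)\ge F_H(\rho)/F_\text{c-bit}$; with $\Theta_\text{c-bit}$ replaced by an arbitrary pure input and $\rho$ by $\rho_2$ this is the general rate bound quoted above.

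\noindent The step I expect to be the genuine crux is this construction of an error-robust finite-time monotone and the check that its $t\to0$ asymptotics recover exactly the QFI ratio without incurring the non-vanishing $O(\epsilon_n\lceil Rn\rceil^2\|H\|^2)$ corrections that defeat the naive variance argument; periodicity bookkeeping, the purification characterization, and the continuity estimates are then routine. Finally, the two matching bounds equate the purification-variance value of $F_H(\rho)$ used in achievability with the convex-roof value $4\min_{\{p_i,\varphi_i\}}\sum_i p_i V_H(\varphi_i)$ natural to an alternative achievability scheme---the connection between the ``two properties'' of QFI advertised in the introduction.
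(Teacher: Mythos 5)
Your converse argument is essentially the paper's own proof (Theorem \ref{Thm:mono} in the SM): the same finite-time fidelity/Bures monotone, covariance plus data processing, multiplicativity, a triangle-inequality stability estimate against the error $\epsilon_n$, and the $t\propto 1/\sqrt{n}$ scaling that recovers the QFI ratio; the paper merely optimizes over $t$ to get the quantitative bound $g(T)\le 4\sqrt{\delta}$. That half is fine.

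The gap is in achievability. You prepare iid copies of a variance-optimal purification $|\chi\rangle_{SE}$ via Theorem \ref{Thmold} and then trace out $E$, dismissing the periodicity hypothesis of Theorem \ref{Thmold} with the claim that ``enlarging the environment and perturbing its Hamiltonian infinitesimally'' makes the joint dynamics periodic and commensurate with $\tau$ while moving the variance by an arbitrarily small amount. This is precisely the obstruction the paper flags right after Theorem \ref{Thm:Fisher}, and the proposed patch fails. Commensurability is a rigid lattice condition: every occupied energy difference of $|\chi\rangle_{SE}$ under $H\otimes I+I\otimes H_E$ must be an exact integer multiple of $2\pi/\tau$. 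The $S$-side of this condition cannot be adjusted at all: every purification of $\rho$ has the same $S$-Schmidt vectors (the eigenvectors of $\rho$), so the occupied $S$-energies and their residues mod $2\pi/\tau$ are fixed by $\rho$ and $H$, no matter how large an environment you take. On the $E$-side, the optimal Hamiltonian of Eq.(\ref{Ham}) has a spectrum (and an eigenbasis misaligned with the $A$-Schmidt basis) that generically violates the lattice condition by amounts of order $2\pi/\tau$ --- the same scale as the c-bit gap --- so forcing its occupied eigenvalue differences onto the lattice is not an infinitesimal perturbation and changes the variance, hence the rate, by a non-negligible amount. Indeed, restricting the minimization in Theorem \ref{Thm:Fisher} to purifications with $\tau$-periodic dynamics generically raises the minimum strictly above $F_H(\rho)/4$: the natural periodic choice $H_A=-H_S^{\ast}$ gives variance $2W_{H_S}(\rho)$, which exceeds $F_{H_S}(\rho)/4$ strictly whenever $\rho$ is full rank with nonzero QFI. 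So no perturbative repair of the purification route reaches rate $F_H(\rho)/F_{\text{c-bit}}$, and your ``only technicality'' is in fact the crux. The paper's achievability proof avoids purifications altogether: it uses the convex-roof form of QFI (Theorem \ref{Thm:Petz}), whose second part supplies an optimal ensemble of pure states on $S$ alone whose periods are integer fractions of $\tau$ with unit gcd (conceptually, measuring $A$ in the $H_A$ eigenbasis is a TI operation that removes the need for $A$'s dynamics to be periodic), and then a typicality argument (Section \ref{Sec:QFI:iid}, Lemma \ref{app:lem12} with Theorem \ref{Thm:main:app}) prepares the typical tensor products of ensemble members from c-bits at rate $(\tau/2\pi)^2\,(F_H(\rho)+4\delta\|H\|^2)$ for every $\delta>0$. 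To fix your argument you would need either such a period-controlled decomposition of $\rho$ (which is exactly what Theorem \ref{Thm:Petz}'s second part provides) or a version of Theorem \ref{Thmold} that dispenses with periodicity; neither follows from an infinitesimal perturbation of $H_E$.
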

 The lower bound $C^\text{TI}_c(\rho)\ge {F_H(\rho)}/{F_{\text{c-bit}}}$ is a special case of a more general result, which is of independent interest:  Consider a pairs of systems with states $\rho_1$ and $\rho_2$ and Hamiltonians $H_1$ and $H_2$. 
 If there exists a sequence of TI operations  converting copies of system 1 to 2 with rate $R(\rho_1\rightarrow\rho_2)$ and with a vanishing error in the trace distance (in the sense defined above), then 
  \be\label{boundrate}
R(\rho_1\rightarrow\rho_2)\le \frac{ F_{H_1}(\rho_1)}{F_{H_2}(\rho_2)}\ .
  \ee
Although this bound might be  expected from the monotonicity and additivity of QFI, as we discussed  in the case of variance,   in the presence of  
a non-zero vanishing error these properties do not necessarily imply Eq.(\ref{boundrate}).
In SM, we prove this bound using the connection between QFI and  Bures distance.   
  At the end of this Letter we sketch the proof of the other side of  theorem \ref{Thm0}.   But,  first we discuss how QFI appears in the single-copy regime. \\
%
%


\noindent\emph{From pure to mixed states in single-copy regime--}  A natural way to quantify the coherence content of a mixed state $\rho$ is to find the minimum QFI of a purification of $\rho$. More precisely,  consider an auxiliary system A  with Hamiltonian $H_A$ and let  $|\Phi_\rho\rangle_{SA}$ be a pure joint state of SA,  with the reduced state $\Tr_A(|\Phi_\rho\rangle\langle\Phi_\rho|_{SA})=\rho$.  What is the minimum possible energy variance, or, equivalently the QFI of such pure states  with respect to the total Hamiltonian  of systems S and A?  
\begin{theorem}\label{Thm:Fisher}
QFI of  system $S$ with state $\rho$ and Hamiltonian $H_S$, is four times the minimum  energy variance of all purifications of $\rho$ with auxiliary systems not interacting with $S$, i.e.
\begin{align}\label{wsingle-shot}
F_{H_S}(\rho)=\min_{\Phi_\rho, H_A}\ F_{H_\text{tot}}(\Phi_\rho)= 4\times\min_{\Phi_\rho, H_A}\ V_{H_\text{tot}}(\Phi_\rho)\ ,
\end{align}
where $H_\text{tot}=H_S\otimes I_A+I_S\otimes H_A$, and  the minimization is over all pure states $|\Phi_\rho\rangle_{SA}$ satisfying $\Tr_A(|\Phi_\rho\rangle\langle\Phi_\rho|_{SA})=\rho$, and all Hamiltonians $H_A$ of  system $A$. 
\end{theorem}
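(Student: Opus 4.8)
The strategy is to bound $\min_{\Phi_\rho,H_A} V_{H_{\mathrm{tot}}}(\Phi_\rho)$ from both sides. One inequality is immediate from the properties of QFI already listed: for any non-interacting $H_{\mathrm{tot}}=H_S\otimes I_A+I_S\otimes H_A$, the partial trace over $A$ is a TI operation, because $\Tr_A\big(e^{-iH_{\mathrm{tot}}t}(\cdot)\,e^{iH_{\mathrm{tot}}t}\big)=e^{-iH_S t}\,\Tr_A(\cdot)\,e^{iH_S t}$; hence by monotonicity (property (ii)) together with $F=4V$ on pure states, $F_{H_S}(\rho)=F_{H_S}\big(\Tr_A|\Phi_\rho\rangle\langle\Phi_\rho|\big)\le F_{H_{\mathrm{tot}}}(\Phi_\rho)=4V_{H_{\mathrm{tot}}}(\Phi_\rho)$ for every purification. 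Taking the infimum gives $F_{H_S}(\rho)\le 4\min_{\Phi_\rho,H_A}V_{H_{\mathrm{tot}}}(\Phi_\rho)$, so only the existence of a saturating purification remains.

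For the reverse inequality I would construct an optimal pair $(\Phi_\rho,H_A)$ explicitly. Since all purifications of $\rho$ differ by an isometry on the ancilla, it suffices to start from the canonical purification $|\Phi\rangle=\sum_j\sqrt{p_j}\,|\phi_j\rangle_S|j\rangle_A$ associated with the spectral decomposition $\rho=\sum_j p_j|\phi_j\rangle\langle\phi_j|$ (with $A$ of dimension $\mathrm{rank}\,\rho$) and to optimize over the Hermitian matrix $(h_{jk})$ defining $H_A=\sum_{j,k}h_{jk}|j\rangle\langle k|$. Writing $(H_S)_{jk}=\langle\phi_j|H_S|\phi_k\rangle$, one expands $V_{H_{\mathrm{tot}}}(\Phi)$ as a function of $(h_{jk})$; it is a convex quadratic, coercive once the trivial freedom of shifting $H_A$ by a constant is fixed, so the minimum is attained. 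The off-diagonal entries $h_{jk}$ ($j\ne k$) enter only through terms of the form $(p_j+p_k)|h_{jk}|^2+4\sqrt{p_jp_k}\,\mathrm{Re}\big(h_{jk}(H_S)_{jk}\big)$ and decouple pair by pair; the diagonal entries $h_{jj}$ form a separate small optimization involving only $(H_S)_{jj}$ and $\Tr(\rho H_S)$. Minimizing each piece and symmetrizing in $j\leftrightarrow k$ should reproduce exactly $\tfrac14 F_{H_S}(\rho)=\tfrac12\sum_{j\ne k}\frac{(p_j-p_k)^2}{p_j+p_k}|(H_S)_{jk}|^2$.

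A shorter, more conceptual route to the same inequality uses the representation of the QFI as a convex roof of the variance, $\tfrac14 F_{H_S}(\rho)=\min\big\{\sum_i q_iV_{H_S}(\psi_i):\ \rho=\sum_i q_i|\psi_i\rangle\langle\psi_i|\big\}$, which is precisely the ``other'' fundamental property the Letter ties to Theorem~\ref{Thm:Fisher}. Taking a (finite, by Carath\'eodory) decomposition achieving this minimum, setting $|\Phi_\rho\rangle_{SA}=\sum_i\sqrt{q_i}\,|\psi_i\rangle_S|i\rangle_A$ and $H_A=-\sum_i\langle\psi_i|H_S|\psi_i\rangle\,|i\rangle\langle i|$, one finds $\langle\Phi_\rho|H_{\mathrm{tot}}|\Phi_\rho\rangle=0$ and $V_{H_{\mathrm{tot}}}(\Phi_\rho)=\big\|H_{\mathrm{tot}}|\Phi_\rho\rangle\big\|^2=\sum_i q_i V_{H_S}(\psi_i)=\tfrac14 F_{H_S}(\rho)$, already saturating the lower bound of the first paragraph.

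The main obstacle is not the algebra of the generic (full-rank) case but the boundary case where $\rho$ is not invertible: $H_S|\phi_k\rangle$ may have a component in $\ker\rho$ that no ancilla Hamiltonian acting on the support-sized $A$ can cancel, so $V_{H_{\mathrm{tot}}}(\Phi)$ retains an irreducible remainder $\sum_{k\in\mathrm{supp},\,j\in\ker}p_k|(H_S)_{jk}|^2$; one must check that this is exactly the contribution of the support--kernel index pairs to the QFI formula (where $\frac{(p_j-p_k)^2}{p_j+p_k}$ reduces to $p_k$). Beyond that, the only care needed is standard bookkeeping --- tracking conjugates and the Hermiticity constraint $h_{kj}=\overline{h_{jk}}$ in the off-diagonal optimization, and (in either route) confirming that the infimum over the noncompact family of ancilla Hamiltonians is actually a minimum.
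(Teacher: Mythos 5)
Your main argument (monotonicity of QFI under the TI operation $\Tr_A$ for the lower bound, plus explicit minimization over the Hermitian matrix $h_{jk}$ on the canonical purification $\sum_j\sqrt{p_j}|\phi_j\rangle_S|j\rangle_A$) is correct and is essentially the paper's first proof: the paper performs the same direct minimization, obtaining the stationarity condition $\tfrac12(H_A^T\rho+\rho H_A^T)=-\sqrt{\rho}\,H_S\sqrt{\rho}$ and solving it by vectorization, and your entry-wise minimizer $h_{jk}=-\tfrac{2\sqrt{p_jp_k}}{p_j+p_k}\,\overline{(H_S)_{jk}}$ is exactly the paper's optimal $H_A$ of Eq.~(\ref{Ham}). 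Your pairwise decoupling makes global optimality of the stationary point transparent and your explicit treatment of the kernel--support terms in the rank-deficient case is a nice bookkeeping addition, but these are organizational refinements rather than a different route.

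One caution about your ``shorter, more conceptual route'': within this paper it is circular, since the convex-roof identity (Theorem~\ref{Thm:Petz}) is \emph{derived from} Theorem~\ref{Thm:Fisher} here, precisely via the construction you propose run in reverse (measuring the optimal ancilla in its energy eigenbasis). The construction itself is sound and would be legitimate if you cite Yu's independent proof of the convex-roof formula, but it cannot replace the direct argument if one wants to preserve the paper's logical order Theorem~\ref{Thm:Fisher} $\Rightarrow$ Theorem~\ref{Thm:Petz}.
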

This theorem is closely related to the result of \cite{escher2011general, escher2012quantum}  in the  context of quantum metrology   (See SM for further discussion).  
  In SM we present two different  proofs of theorem \ref{Thm:Fisher}; one proof is based on  Uhlmann's theorem  \cite{nielsen2000quantum, wilde2013quantum} and the connection between fidelity and QFI  (which is similar to the argument of \cite{escher2011general}) and  the second proof is via direct minimization. The latter  approach implies that 
  for  purification $|\Phi_\rho\rangle_{SA}=\sum_j \sqrt{p_j} |\phi_j\rangle_S|\phi_j\rangle_A$ of state $\rho=\sum_jp_j|\phi_j\rangle\langle\phi_j|$ the minimum in Eq.(\ref{wsingle-shot}) is achieved for  Hamiltonian  
 \begin{align}\label{Ham}
{H_A}=-2 \sum_{j,k} \frac{\sqrt{p_j p_k} } {p_j+p_k} |\phi_j\rangle\langle\phi_k| H_S |\phi_j\rangle\langle\phi_k|\  .
\end{align}
For this Hamiltonian we find  $F_{H_S}(\rho)=4(V_{H_S}(\rho)-V_{H_A}(\rho))$,  i.e., QFI of system S is 4 times the difference between the energy variances of systems S and A. Furthermore, the QFI of  A  is  non-zero, provided that 
the QFI of S is non-zero and $\rho$ is full-rank.  This means that starting from  $|\Phi_\rho\rangle_{SA}$ by discarding A, one does not loose any QFI, even though  the discarded system itself carries non-zero QFI.  This immediately implies that the rate of distilling pure coherent states from mixed ones cannot be determined by the ratio of their QFI's. Otherwise, by distilling coherence from both systems A  
and S, we could increase 
QFI unboundedly, which is in contradiction with the general bound in Eq.(\ref{boundrate}). 
  In fact,  the distillation rate for generic mixed states is zero \cite{marvian2020coherence}.

Does this theorem determine the coherence  cost of $\rho$? From theorem \ref{Thmold} one may expect that purification $\Phi_\rho$   can be obtained by consuming c-bits at rate $({\tau}/{2\pi})^2 F_{H_\text{tot}}(\Phi_\rho)$, which in turn would imply $\rho$ can be obtained with this coherence cost. And the above theorem implies that  $F_{H_\text{tot}}(\Phi_\rho)$ can be as low as $F_{H_S}(\rho)$.    However, there is a problem with this argument: theorem \ref{Thmold} only  applies to periodic systems, whereas in general, the dynamics of   $\Phi_\rho$ under Hamiltonian $H_\text{tot}$ is not periodic.  Imposing the requirement of periodicity, in general increases the minimum variance of purification. For instance, suppose for the same purification $\Phi_\rho$ instead of Hamiltonian in Eq.(\ref{Ham}) one chooses  $H_A=-H^\ast_S$, that is the complex conjugate of  $H_S$ in the basis $\{|\phi_j\rangle\}$.  Then,  the period of  the joint system will be generally $\tau$. But,  now the energy variance is equal to $2 W_{H_S}(\rho)\ge F_{H_S}(\rho)$, where $W_{H_S}(\rho)=-\Tr([\sqrt{\rho}, H_S]^2)/2$ is the Wigner-Yanase skew information, which is another quantifier of coherence and asymmetry \cite{Marvian_thesis, marvian2014extending,  girolami2014observable, takagi2019skew}.

To overcome this issue, instead of purification, we use a different approach for preparing  $\rho$: we consider ensemble of pure states with density operator $
\rho$.  Interestingly, it turns out that there exists an optimal ensemble for which the average QFI is equal to the QFI of state $\rho$. 
\begin{theorem}\label{Thm:Petz}
QFI is four times the \emph{convex roof} of  variance. That is 
\be\label{Eq125}
F_H(\rho)= \min_{\{q_k,\eta_k\}}\sum_k q_k F_H(\eta_k)= 4\times \min_{\{q_k,\eta_k\}}\sum_k q_k V_H(\eta_k)\ ,
\ee
where the minimization is over  all ensembles of pure states $\{q_k,\eta_k\}$ satisfying $\sum_k q_k |\eta_k\rangle\langle\eta_k|=\rho$.  Furthermore, assuming the dynamics of $\rho$ under $H$ is periodic, then  the optimal ensemble can be chosen such that each $\eta_k$ is either an  eigenstate of Hamiltonian $H$ or its period under $H$ is an integer fraction of the period of $\rho$ under $H$.   
\end{theorem}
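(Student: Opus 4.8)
\emph{Proof strategy for Theorem~\ref{Thm:Petz}.} The plan is to prove the two inequalities separately, using Theorem~\ref{Thm:Fisher} for the nontrivial direction. For the ``$\ge$'' half of the minimization — that $F_H(\rho)/4$ lower‑bounds the average variance of every pure‑state ensemble of $\rho$ — I would invoke the convexity of QFI (property (iv) above) together with $F_H(\psi)=4V_H(\psi)$ for pure states: for any ensemble $\{q_k,\eta_k\}$ with $\sum_k q_k|\eta_k\rangle\langle\eta_k|=\rho$ one has $F_H(\rho)\le\sum_k q_k F_H(\eta_k)=4\sum_k q_k V_H(\eta_k)$. Since by Carath\'eodory's theorem and continuity the infimum over ensembles is attained on a compact set, this gives $F_H(\rho)\le 4\min_{\{q_k,\eta_k\}}\sum_k q_k V_H(\eta_k)$.

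The substance is exhibiting an ensemble that saturates this. Here I would use the sharp purification supplied by (the direct‑minimization proof of) Theorem~\ref{Thm:Fisher}: with $\rho=\sum_j p_j|\phi_j\rangle\langle\phi_j|$, take $|\Phi\rangle_{SA}=\sum_j\sqrt{p_j}|\phi_j\rangle_S|\phi_j\rangle_A$ and the auxiliary Hamiltonian $H_A$ of Eq.~(\ref{Ham}), so that $H_\text{tot}=H_S\otimes I_A+I_S\otimes H_A$ obeys $V_{H_\text{tot}}(\Phi)=\tfrac14 F_{H_S}(\rho)$. Now measure $A$ in an eigenbasis $\{|a_k\rangle\}$ of $H_A$; this produces a pure‑state ensemble $\{q_k,\eta_k\}$ of $S$ with $q_k=\langle\Phi|(I_S\otimes|a_k\rangle\langle a_k|)|\Phi\rangle$, $|\eta_k\rangle=q_k^{-1/2}\langle a_k|_A|\Phi\rangle$, and $\sum_k q_k|\eta_k\rangle\langle\eta_k|=\Tr_A|\Phi\rangle\langle\Phi|=\rho$. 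The decisive point is that the projectors $I_S\otimes|a_k\rangle\langle a_k|$ commute with $H_\text{tot}$ (each $|a_k\rangle$ is an $H_A$‑eigenvector, and $H_S$ acts on the other factor), so inserting this measurement before a measurement of $H_\text{tot}$ does not change the $H_\text{tot}$ statistics. The law of total variance then gives $V_{H_\text{tot}}(\Phi)=\sum_k q_k V_{H_\text{tot}}\big(|\eta_k\rangle\otimes|a_k\rangle\big)+\Var_k\big(\langle H_\text{tot}\rangle_{|\eta_k\rangle\otimes|a_k\rangle}\big)\ge\sum_k q_k V_{H_\text{tot}}\big(|\eta_k\rangle\otimes|a_k\rangle\big)$, and since $|\eta_k\rangle\otimes|a_k\rangle$ is a product state of non‑interacting systems with $|a_k\rangle$ an $H_A$‑eigenstate, $V_{H_\text{tot}}(|\eta_k\rangle\otimes|a_k\rangle)=V_{H_S}(\eta_k)+V_{H_A}(a_k)=V_{H_S}(\eta_k)$. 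Hence $\sum_k q_k V_{H_S}(\eta_k)\le\tfrac14 F_{H_S}(\rho)$, which with the previous paragraph forces equality throughout; this ensemble is thus optimal, establishing Eq.~(\ref{Eq125}).

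For the periodicity refinement, suppose the dynamics of $\rho$ under $H$ has period $\tau$. Writing $H=\sum_E E\,\Pi_E$ in its eigenprojectors, $e^{-iH\tau}\rho e^{iH\tau}=\rho$ forces $\Pi_E\rho\Pi_{E'}=0$ whenever $(E-E')\tau\notin 2\pi\mathbb{Z}$; hence $\rho$ is block diagonal with respect to the orthogonal decomposition $\mathcal{H}=\bigoplus_r\mathcal{H}_r$, where $\mathcal{H}_r$ is the span of those $H$‑eigenspaces whose eigenvalues lie in a fixed residue class modulo $2\pi/\tau$. Each $\mathcal{H}_r$ is $H$‑invariant and all Bohr frequencies inside $\mathcal{H}_r$ lie in $(2\pi/\tau)\mathbb{Z}$, so a pure state supported in a single $\mathcal{H}_r$ is either an $H$‑eigenstate or has period $\tau/m$ for a positive integer $m$. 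I would then re‑run the construction above, choosing the $\rho$‑eigenbasis $\{|\phi_j\rangle\}$ to respect this block decomposition (possible since $\rho$ is block diagonal): then $\langle\phi_j|H_S|\phi_k\rangle$ vanishes between distinct blocks, so the $H_A$ of Eq.~(\ref{Ham}) is itself block diagonal, its eigenvectors $|a_k\rangle$ can be taken inside single blocks, and therefore each $|\eta_k\rangle=q_k^{-1/2}\langle a_k|_A|\Phi\rangle$ lies in a single $\mathcal{H}_r$. The ensemble remains optimal by the previous paragraph (which never used the choice of eigenbasis), and each of its states now has the asserted form.

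The step I expect to be the main obstacle is pinning down the right ancilla measurement and keeping the variance bookkeeping tight: one must measure precisely in the $H_A$‑eigenbasis so that the measurement commutes with $H_\text{tot}$ and the ancilla contributes \emph{zero} local variance, and one must use the \emph{optimal} purification of Theorem~\ref{Thm:Fisher} (with the explicit $H_A$ of Eq.~(\ref{Ham})) rather than a generic one, so that $V_{H_\text{tot}}(\Phi)$ equals — not merely bounds — $\tfrac14 F_{H_S}(\rho)$; the convexity bound then closes the argument. The periodicity part is comparatively routine, the only care needed being to verify that the block structure forced by periodicity of $\rho$ is inherited by $H_A$ and hence by the post‑measurement ensemble.
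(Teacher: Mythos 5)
Your proposal is correct. For the main equality (\ref{Eq125}) it is essentially the paper's own argument: you use the optimal purification and ancilla Hamiltonian $H_A$ of Theorem~\ref{Thm:Fisher} (Eq.~(\ref{Ham})), measure $A$ in an $H_A$-eigenbasis, and close with convexity of QFI; the only difference is the bookkeeping for the key inequality $\sum_k q_k V_{H_S}(\eta_k)\le V_{H_\text{tot}}(\Phi_\rho)$. The paper gets it by sandwiching $F_{H_\text{tot}}(\sigma_{SA})$ between $F_{H_S}(\rho)$ and $F_{H_\text{tot}}(\Phi_\rho)$ via monotonicity of QFI under TI operations, together with the fact that QFI of a state flagged by orthogonal energy eigenstates equals the ensemble-averaged QFI; you instead observe that the ancilla measurement commutes with $H_\text{tot}$, so the $H_\text{tot}$ statistics are preserved, and invoke the law of total variance plus $V_{H_A}(a_k)=0$ — equally valid and somewhat more elementary, since it avoids the direct-sum property of QFI. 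For the periodicity refinement your route genuinely differs: the paper (SM) takes an \emph{arbitrary} optimal ensemble and dephases it with the block projectors $P_r$ (energies grouped mod $2\pi/\tau$), using concavity of variance and convexity of QFI to show the dephased ensemble stays optimal and is block-supported; you instead choose the $\rho$-eigenbasis to respect the blocks, note that $H_A$ in Eq.~(\ref{Ham}) inherits the block structure so its eigenvectors, and hence the post-measurement states $|\eta_k\rangle$, lie in single blocks, and reuse the optimality already established. Both work: the paper's dephasing argument applies to any optimal ensemble, while yours is more economical because it piggybacks on the explicit construction; the only points worth stating explicitly are that the Theorem~\ref{Thm:Fisher} computation is insensitive to which eigenbasis of a possibly degenerate $\rho$ is used, and that a pure state supported in a single residue-class block satisfies $|\langle\eta|e^{-iH\tau}|\eta\rangle|=1$ and is therefore an $H$-eigenstate or has period $\tau/m$ (closed-subgroup argument), which is exactly the asserted form.
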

In analogy with the entanglement theory, the right-hand side of Eq.(\ref{Eq125}) can be called \emph{coherence of formation}  \cite{toloui2011constructing}.  The first part of this theorem was originally conjectured by Toth and Petz \cite{toth2013extremal} and was   later proven by Yu \cite{yu2013quantum}. Since then this result has found various applications in quantum metrology (See, e.g. \cite{bromley2017there}). Note that the convexity of $F_H$ immediately implies that if  $\sum_k q_k |\eta_k\rangle\langle\eta_k|=\rho$ then  $F_H(\rho)\le \sum_k q_k F_H(\eta_k)$, and the achievability of this bound is established in   \cite{yu2013quantum}.    Our resource-theoretic approach reveals a simple and more intuitive  proof of this fundamental property of QFI, via theorem \ref{Thm:Fisher}: Let $|\Phi_\rho\rangle_{SA}$ and $H_A$ be, respectively,  an optimal purification of  $\rho$, and the corresponding Hamiltonian of the auxiliary  system A satisfying Eq.(\ref{wsingle-shot}).  Let 
$\{|E_k\rangle\}$ be an eigenbasis of Hamiltonian $H_A$. By measuring system A in this basis,  one obtains the average joint state   $\sigma_{SA}=\sum_k  q_k \ |\eta_k\rangle\langle\eta_k|_S\otimes |E_k\rangle\langle E_k|_A$,  where $q_k$ is the probability of observing  
 $|E_k\rangle$   and  $
|\eta_k\rangle_S =\langle E_k|\Phi\rangle_{SA}/\sqrt{q_k}$ is the corresponding state of S. Then,
\be\label{bounds244}
F_{H_S}(\rho) \le F_{H_\text{tot}}(\sigma_{SA})\le F_{H_\text{tot}}(\Phi_\rho)\ .
\ee
Here, both bounds follow from the monotonicity of QFI under TI operations: State $\rho$ of system S can be obtained from $\sigma_{SA}$ by discarding system  A,  and $\sigma_{SA}$ is obtained from  $\Phi_\rho$, by measuring A in the energy eigenbasis; both operations are clearly TI.    Then, the fact that  $F_{H_\text{tot}}(\Phi_\rho)=F_{H_S}(\rho)$, implies that both bounds hold as equality. Finally, since energy eigenstates $\{|E_k\rangle\}$ have zero QFI and are orthogonal, QFI of $\sigma_{SA}$  is equal to the expected QFI of the ensemble $\{q_k, |\eta_k\rangle\}$, i.e.,  $\sum_k q_k F_{H_S}(\eta_k)=F_{H_\text{tot}}(\sigma_{SA})=F_{H_S}(\rho)$. Thus, Eq.(\ref{Eq125}) holds with   
 $|\eta_k\rangle=(\sum_j  U_{kj} \sqrt{p_j}  |\phi_j\rangle)/\sqrt{q_k}$, and  probability $ q_k=\langle E_k|\rho|E_k\rangle=\sum_j p_j |U_{kj}|^2$, where $U_{kj}=\langle E_k|\phi_j\rangle$ are the matrix elements of the unitary that diagonalizes  $H_A$ in Eq.(\ref{Ham}) in the eigenbasis of $\rho$ (Interestingly, this is the ensemble found by Yu \cite{yu2013quantum}). In summary,   the fact that QFI is  the minimum variance of purifications (Theorem \ref{Thm:Fisher}) implies that  QFI is also the  convex roof of variance (Theorem \ref{Thm:Petz}).  
The second part of theorem \ref{Thm:Petz} is shown in SM.   \\

\noindent\emph{Sketch of Proof of Theorem \ref{Thm0}}-- By combining theorems \ref{Thmold} and  \ref{Thm:Petz} with the standard typicality arguments (e.g., in \cite{hayden2001asymptotic, winter2016operational}), we show that the coherence cost of any state is determined by its QFI.  Let $(q_k, |\eta_k\rangle): k\in \mathbb{S}$ be the optimal ensemble satisfying Eq.(\ref{Eq125}). As we saw in the above proof, $\mathbb{S}$ is a finite set.     
Then, $\rho^{\otimes m}=\sum_{\textbf{k} }  q_{\textbf{k} } |\eta_\textbf{k}\rangle\langle\eta_\textbf{k}|$,  
where $\textbf{k}=k_1\cdots k_m$,    $q_{\textbf{k}}=q_{k_1}\cdots q_{k_m}$  and $|\eta_\textbf{k}\rangle=|\eta_{k_1}\rangle\cdots |\eta_{k_m}\rangle$.
  For any $k\in \mathbb{S}$ let $n_l(\textbf{k})$ be the number of occurrence of state $|\eta_l\rangle$ in $|\eta_\textbf{k}\rangle$.  Then, for $\delta >0$  define the set of typical strings as those for which the relative frequency of any  $l\in \mathbb{S}$ is between $q_l-\delta$ and $q_l+\delta$, i.e., 
 $\{\textbf{k}=k_1\cdots  k_m|\  \forall l\in \mathbb{S}:\ |\frac{n_l(\textbf{k})}{m}-q_l|\le \delta \}$. Then, 
\be
\rho^{\otimes m}=\sum_{\textbf{k}\in\text{typical} }  q_{\textbf{k} } |\eta_\textbf{k}\rangle\langle\eta_\textbf{k}|+\sum_{\textbf{k}\notin \text{typical}}  q_{\textbf{k} } |\eta_\textbf{k}\rangle\langle\eta_\textbf{k}|\  .
 \ee
 Based on this decomposition, we define a sequence of TI operations that prepare state $\rho^{\otimes m}$ with a vanishing error as $m\rightarrow\infty$: We sample string $\textbf{k}$ with probability $q_{\textbf{k} }$. If $\textbf{k}$ is not a typical string, we prepare a fixed incoherent state, which does not consume any c-bits.  By the law of large numbers, as $m\rightarrow \infty$ the probability of such events goes to zero and therefore this introduces a vanishing error.  For typical $\textbf{k}$, up to a permutation,  $|\eta_\textbf{k}\rangle$  can be written as $\bigotimes_l |\eta_l\rangle^{\otimes n_l(\textbf{k})}$, and typicality implies $n_l(\textbf{k}) \le m (q_l+\delta)$.  Therefore, $|\eta_\textbf{k}\rangle$ can be obtained from   $\bigotimes_l |\eta_l\rangle^{\otimes \lceil m (q_l+\delta) \rceil }$, which has the  energy variance $\sum_l \lceil m (q_l+\delta)\rceil  V_H(\eta_l)$.  Using the second part of theorem \ref{Thm:Petz}, one can show that  the period of this state is equal to $\tau$, the period of $\rho$.   Then, using a simple variant of theorem \ref{Thmold} we show that as  $m\rightarrow\infty$, by consuming  $({\tau}/{\pi})^2 \sum_l \lceil m (q_l+\delta)\rceil V_H(\eta_l)$ c-bits, we can prepare state  $|\eta_\textbf{k}\rangle$ with a vanishing error  (Note that the energy variance of  c-bit is $\pi^2/\tau^2$).  Using the facts that $\sum_l q_l V_H(\eta_l)=F_H(\rho)/4$ and $V_H(\eta_l)\le \|H\|^2$, where $\|H\|$ is the operator norm,  we conclude  that for any $\delta>0$,  
by consuming c-bits at rate $({\tau}/{2\pi})^2   \times (F_H(\rho)+4\delta \|H\|^2) $ per copy, one can prepare copies of the desired system with vanishing error. This proves one direction of  theorem \ref{Thm0}. See SM for further details and the proof of the other direction.  \\

\noindent\emph{Conclusion}--
In summary, preparing a general state requires consumption of both work and coherence. When coherence is a free resource, the work cost is  determined by the free energy of the system, and when work is a free resource the coherence cost is determined by QFI.  In a more complete picture both of these resources should be taken into account, and this can lead to a tradeoff between  the resources costs.  Understanding this tradeoff remains an  open question. Also, generalizing the present results to the case of non-Abelian groups, such as SO(3) will be interesting (See, e.g., \cite{yang2017units,  alexander2021infinitesimal} for progress in this direction).  Our resource-theoretic approach enabled us to clarify a previously unnoticed relation between  fundamental properties of QFI, which is arguably the most studied quantity in quantum metrology and estimation theory.  As QFI has found  extensive applications in different areas of physics,  it will be interesting to explore possible implications of  theorems  \ref{Thm0} and \ref{Thm:Fisher}  in these areas. \\

\noindent\emph{Acknowledgments}-- I am grateful to Gilad Gour and David Jennings for reading an earlier version of the manuscript carefully, and providing many useful comments and suggestions.  
Also, I would like to thank Gerardo Adesso, Anna Jen\v cov\' a,  Keiji Matsumoto,  
and Mil\'an Mosonyi for helpful discussions.   This work was supported by   NSF FET-1910571 and  NSF Phy-2046195. 
 
 \newpage

\bibliography{Ref_2018, Ref_2018_Copy, Ref_2019}



\newpage

\color{black}

\newpage

\onecolumngrid

\onecolumngrid

\newpage

\maketitle
\vspace{-5in}
\begin{center}

\Large{Supplementary Material:\\ $ $ \\   Operational Interpretation of Quantum Fisher information in Quantum Thermodynamics  }
\end{center}
\appendix

	
	\title{\textbf{Supplementary Material}:\\ $ $ \\   Coherence distillation machines are impossible in quantum thermodynamics}
	
	\author{Iman Marvian}
\affiliation{Departments of Physics \& Electrical and Computer Engineering, Duke University, Durham, North Carolina 27708, USA}



\maketitle

\color{black}

$$  $$

{\Large{\textbf{Contents}}}
$$   $$
\begin{itemize}
\item \textbf{Section \ref{Sec:app1}: Pure state transformations in the asymptotic (iid) regime}\\
In this section we prove the first part of theorem \ref{Thmold} in the paper, which determines the rate of interconversion between pure states.

\item \textbf{Section  \ref{Sec:Fisher}: Quantum Fisher Information: Preliminaries}\\

\item \textbf{Section \ref{Sec:QFIsingle}: Quantum Fisher Information as the minimum variance of purification}\\
In this section we prove theorem \ref{Thm:Fisher} in the paper, and present a new proof of theorem \ref{Thm:Petz}.

\item \textbf{Section \ref{Sec:Fisher-mon}: Monotonicity of Fisher information in  the iid regime  }\\
In this section we prove that Fisher information cannot increase in the  iid regime (Eq.(\ref{boundrate}) in the paper). This implies that Quantum Fisher information is a lower bound on coherence cost. It also proves the second part of theorem \ref{Thmold} in the paper.

\item \textbf{Section \ref{Sec:QFI:iid}: Quantum Fisher Information as the coherence cost: iid regime  }\\
In this section we prove prove theorem \ref{Thm0} in the paper which states that the coherence cost is equal to Quantum Fisher Information.

\end{itemize}

\newpage

\section{Pure state transformations in the asymptotic (iid) regime (Proof of the first part of Theorem \ref{Thmold})}\label{Sec:app1} 

In this section we study pure state conversions in the many-copy (iid) regime and prove the first part of theorem  \ref{Thmold}, namely the fact that  the interconversion between pure states is possible with a rate less than or equal to the ratio of the energy variances of the input to the output. Then, in Appendix \ref{Sec:Fisher-mon} we prove the second part of this theorem, which  implies that the  conversion is not possible with a higher rate.  

\subsection{Review of single-copy pure state transformations}\label{Sec:review}
We start by reviewing a few useful results about the single-copy pure state to pure state conversions  under TI operations.  A fundamental fact about such conversions is that the only relevant property of a pure state is its energy distribution. Let 
\be
H= \sum_E E\  \Pi_{E} 
\ee
 be the spectral decomposition of the system Hamiltonian $H$, 
where $\Pi_E$ is the projector to the subspace with energy $E$. Consider a state $\psi$ with periodic time evolution under Hamiltonian $H$ with period $\tau$, such that 
\be 
\tau=\inf_{t} \{t>0: |\langle\psi|e^{-i H t}|\psi\rangle|=1\}\ .
 \ee 
 This means that the set of energy levels $E$ with nonzero probability,  i.e., the set $\{E: \langle\psi|\Pi_{E}|\psi\rangle \neq 0\}$, can be written as
 \be\label{en-n}
 E=  n \frac{2\pi}{\tau}+E_0\  \ \ \ \ \ \ \ \ \ n\in \mathbb{Z} \ ,
 \ee 
for a fixed energy $ E_0$, satisfying
\be
0\le  E_0< \frac{2\pi}{\tau}\ .  
\ee
Then, we can describe the energy distribution of state $|\psi\rangle$ relative to Hamiltonian $H$, by a probability distribution $p_\psi$ over integer numbers $\mathbb{Z}$, defined by  
\begin{align}\label{weight}
p_\psi(n)\equiv \int_0^{2\pi} \frac{d\theta}{2\pi} \  \exp[{i (\frac{E_0 \tau}{2\pi} +n) \theta }] \times  \langle\psi| \exp({-i H  \frac{ \theta \tau}{2\pi} }) |\psi\rangle\ .
\end{align}
If $E_0+2\pi n/\tau $ is an eigenvalue of $H$, then $p_\psi(n)=\langle\psi|\Pi_{E_0+2\pi n/\tau}|\psi\rangle$ is the probability that state $|\psi\rangle$ has energy $E_0+2\pi n/\tau$, and  $p_\psi(n)=0$, otherwise.   In the following, we sometimes refer to $p_\psi$  as the energy distribution of $\psi$. 

Consider two different pure states $\psi$ and $\phi$ of a system with Hamiltonian $H$.
It can be  shown \cite{gour2008resource, marvian2013theory, marvian2014asymmetry} that  $\psi$  can be transformed to $\phi$ via an energy-conserving unitary $V$ such that  $V|\psi\rangle=|\phi\rangle$ and $[V,H]=0$, if and only if   they have the same energy distributions, i.e., 
\be
\forall n\in\mathbb{Z}:\ \ \ \  p_\psi(n)=p_\phi(n)\ ,
\ee
or equivalently, if and only if,  they have the same \emph{characteristic functions} \cite{marvian2013theory, marvian2014asymmetry}, i.e.
\be
\forall \theta\in(0,2\pi]:\ \ \ \  \langle\psi| e^{-i H\tau \frac{\theta}{2\pi} } |\psi\rangle=\langle\phi|e^{-i H\tau \frac{\theta}{2\pi}} |\phi\rangle\ .
\ee
Therefore, for a given Hamiltonian $H$,   the probability distribution $p_\psi$ specifies all the relevant information about state $\psi$ from the point of view of state conversion  under TI operations.

\begin{remark}\label{rem21}\textbf{(Shifting energy levels by a constant)} 
It is worth noting that from the point of view of state interconversions 
under TI operations, a system with Hamiltonian $H$ and a system with the Hamiltonian $H-E_0 I$, which is the shifted version of $H$, have exactly the same properties. This is because adding or subtracting a constant energy to all energy levels, is a TI operation (i.e., can be realized without interaction with a   synchronized clock). Hence, in the following discussion we always assume  
$E_0$ in Eq.(\ref{en-n}) is zero. That is, for a system with period $\tau$ the energy levels with non-zero probability are all in the form of an integer times $2\pi/\tau$, $E=  n 2\pi/{\tau}$ for $n\in\mathbb{Z}$. This condition is equivalent to $e^{-i \tau H}|\psi\rangle=|\psi\rangle$. 
\end{remark}



The above result can be generalized to the case of approximate state conversions:  if the energy distributions $p_\psi$ and  $p_\phi$ are close to each other in the \emph{total variation distance} (trace distance) then there exists a unitary transformation that converts $\psi$ to a state close to $\phi$ \cite{marvian2013theory}. In particular, there exists a unitary $V$ which commuting with the system Hamiltonian $H$ such that 
\begin{align}
|\langle\phi|V|\psi\rangle|=\sum_n \sqrt{p_\psi(n)p_\phi(n)} &\ge 1-\frac{1}{2} \|p_\psi-p_\phi\|_1\ , \\ &\equiv 1- d_{\text{TV}}(p_\psi, p_\phi)\ ,
\end{align}
where
\be
d_{\text{TV}}(p_\psi, p_\phi)\equiv \frac{1}{2} \|p_\psi-p_\phi\|_1= \frac{1}{2} \sum_n |p_\psi(n)-p_\phi(n)|\ ,
\ee
is the total variation distance (trace distance) between the two distributions (Theorem 3 in \cite{marvian2013theory}). Here, the bound follows from the Fuchs-van de Graaf inequality $1-\sqrt{\text{Fid}(\rho,\sigma)}\le \frac{1}{2}\|\rho-\sigma\|_1$.

In terms of trace distance, this means that there exists an energy-conserving unitary $V$ such that
\be
\frac{1}{2}\|V|\psi\rangle\langle\psi|V^\dag-|\phi\rangle\langle\phi| \|_1=\sqrt{1-|\langle\phi|V|\psi\rangle|^2} =\sqrt{1- \big[\sum_n \sqrt{p_\psi(n)p_\phi(n)}\big]^2}\le \sqrt{2 d_\text{TV}(p_\psi,p_\phi)}\ ,
\ee
where the bound again follows from the Fuchs-van de Graaf inequality $1-\sqrt{\text{Fid}(\rho,\sigma)}\le \frac{1}{2}\|\rho-\sigma\|_1$, which implies $1-{\text{Fid}(\rho,\sigma)}\le \|\rho-\sigma\|_1\ . $

In summary, if the energy distributions are close in the total variation distance, then the pair of states can be converted to each other, with good approximation. 

In addition to the energy-conserving unitaries, TI operations also include operations  that do not conserve the system energy. In particular, consider two systems  with Hamiltonian $H_1$ and $H_2$ and states $\psi_1$ and $\psi_2$, respectively, and assume both states  have period $\tau$.  Let $p_{\psi_1}$ and $p_{\psi_2}$ be the energy distributions for these two states defined via Eq.(\ref{weight}). 
Then, these states are interconvertable to each other via TI operations if, and only if  there exists an integer $k$ such that 
\be\label{eq:en}
\forall n\in\mathbb{Z}:\ \ \ \ \ \ p_{\psi_1}(n)=p_{\psi_2}(n+k)\ .
\ee
 In the special case where the input and output Hamiltonians are identical, i.e., $H_1=H_2$  this operation adds  energy $-k2\pi/\tau$ to the system.  Again, if Eq.(\ref{eq:en}) holds approximately, then the conversion can be implemented approximately via a TI operation, with an error  determined by the total variation distance between the probability distribution $p_{\psi_1}(n)$ and $p_{\psi_2}(n+k)$. 

The following proposition summarizes these results

\begin{proposition}\label{prop:TI}\textbf{(based on \cite{gour2008resource, marvian2013theory, marvian2014asymmetry} )}\label{prop}
Suppose two systems  with Hamiltonian $H_1$ and $H_2$ and states $\psi_1$ and $\psi_2$ both  have period $\tau$.  Let $p_{\psi_1}$ and $p_{\psi_2}$ be, respectively,  the energy distributions for pure state $\psi_{1}$ and $\psi_{2}$, defined  in Eq.(\ref{weight}).   Then, for any integer $k$, there exists a TI operation $\mathcal{E}_{TI}$ such that $\mathcal{E}_{TI}(|\psi_1\rangle\langle\psi_1|)$ is a pure state that satisfies
\be
\langle\psi_2|\mathcal{E}_{TI}(|\psi_1\rangle\langle\psi_1|)|\psi_2\rangle= \Big(\sum_{n\in\mathbb{Z}}  \sqrt{p_{\psi_1}(n) p_{\psi_2}(n+k)}\Big)^2\ ,
\ee
and
\begin{align}
\frac{1}{2}\Big\| \mathcal{E}_{TI}(|\psi_1\rangle\langle\psi_1|) - |\psi_2\rangle\langle\psi_2| \Big\|_1 &=\sqrt{1-\Big[\sum_n  \sqrt{p_{\psi_1}(n) p_{\psi_2}(n+k)}\Big]^2}\\ &\le  \sqrt{\sum_n \big|p_{\psi_1}(n)-p_{\psi_2}(n+k)\big|}\ .
\end{align}
\end{proposition}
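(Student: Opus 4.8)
The plan is to assemble the single-copy facts recalled above; the proposition is essentially the quantitative, approximate form of the interconvertibility criterion $p_{\psi_1}(n)=p_{\psi_2}(n+k)$ stated just before it. Using Remark~\ref{rem21} I would first normalise both systems so that every occupied energy level has the form $n\,\tfrac{2\pi}{\tau}$, $n\in\mathbb Z$, making $p_{\psi_1}$ and $p_{\psi_2}$ honest distributions on $\mathbb Z$. The integer $k$ is disposed of at once: adding a constant multiple of $\tfrac{2\pi}{\tau}$ to $H_1$ is a TI operation (it needs no synchronized clock) and it shifts the energy distribution of $\psi_1$ by the corresponding integer without altering the state vector; choosing this shift appropriately reduces the claim to the case $k=0$, i.e.\ to producing a TI operation whose output is pure with modulus-overlap $\sum_n\sqrt{p_{\psi_1}(n)\,p_{\psi_2}(n)}$ with $|\psi_2\rangle$.

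For that core step I would realise $\mathcal{E}_{TI}$ as a coherence transfer from system~1 onto system~2, implemented by a Stinespring dilation with an incoherent ancilla and an energy-conserving unitary. Fix the unit vectors $|n\rangle_2:=\Pi^{(2)}_n|\psi_2\rangle/\|\Pi^{(2)}_n|\psi_2\rangle\|$ at every level with $p_{\psi_2}(n)>0$, completed by energy eigenvectors of $H_2$ at the other occupied levels (enlarging $\mathcal H_2$ by a TI embedding if some are missing), so that $|\psi_2\rangle=\sum_n\sqrt{p_{\psi_2}(n)}\,|n\rangle_2$ with real non-negative amplitudes. Prepare system~2 in a fixed energy eigenstate and apply the energy-conserving partial-swap unitary on $\mathcal H_1\otimes\mathcal H_2$ that moves the amplitude of the energy-$n\tfrac{2\pi}{\tau}$ component of system~1 into $|n\rangle_2$; discarding system~1 then leaves system~2 in the pure state $\sum_n\sqrt{p_{\psi_1}(n)}\,|n\rangle_2$, whose overlap with $|\psi_2\rangle$ is precisely $\sum_n\sqrt{p_{\psi_1}(n)\,p_{\psi_2}(n)}$. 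Degeneracies are harmless, since a pure state meets each energy eigenspace in a single ray, and any level present in $\psi_1$ but not in $\psi_2$ contributes $0$ both to the output component there and to the overlap. Composing this with the energy relabelling gives the asserted operation and the stated value of $\langle\psi_2|\mathcal{E}_{TI}(|\psi_1\rangle\langle\psi_1|)|\psi_2\rangle$.

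I expect the only genuinely delicate point to be the coherence-transfer unitary itself — verifying that the amplitude of a coherent superposition of energy eigenstates of system~1 can be moved into the chosen energy eigenvectors of system~2 by a unitary that strictly conserves total energy (so that the whole channel is TI) and yields a pure, phase-aligned output rather than a mixture. This is exactly the content of the cited single-copy results \cite{gour2008resource, marvian2013theory, marvian2014asymmetry}; granting it, the distance estimate is routine. For pure states $\tfrac12\big\|\,|a\rangle\langle a|-|b\rangle\langle b|\,\big\|_1=\sqrt{1-|\langle a|b\rangle|^2}$ gives the equality, and then, writing $F=\big(\sum_n\sqrt{p_{\psi_1}(n)p_{\psi_2}(n+k)}\big)^2$, one has $1-F\le 2\big(1-\sum_n\sqrt{p_{\psi_1}(n)p_{\psi_2}(n+k)}\big)\le\sum_n\big|p_{\psi_1}(n)-p_{\psi_2}(n+k)\big|$, the last step being the Hellinger/Fuchs--van de Graaf inequality $1-\sum_n\sqrt{p(n)q(n)}\le\tfrac12\|p-q\|_1$ applied to $q(n)=p_{\psi_2}(n+k)$; taking square roots gives the claimed bound.
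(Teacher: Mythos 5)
Your proposal is correct and follows essentially the same route as the paper: like the paper, you reduce the integer shift $k$ to an energy change implementable by a TI operation, defer the existence of an energy-conserving, phase-aligned conversion achieving overlap $\sum_n\sqrt{p_{\psi_1}(n)\,p_{\psi_2}(n+k)}$ to the cited single-copy results (Theorem 3 of \cite{marvian2013theory}), and then get the trace-distance bound from the pure-state identity plus the classical Fuchs--van de Graaf/Hellinger inequality, exactly as in the paper. The only caveat is that your explicit partial-swap sketch on $\mathcal{H}_1\otimes\mathcal{H}_2$ alone can require an extra ancilla (e.g.\ a ladder) to balance energies when $H_1$ and $H_2$ share no common level, but since you explicitly defer that delicate point to the same references the paper relies on, this does not constitute a gap.
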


\color{black}

  \subsection{State conversions in the iid regime}

Next, we consider the iid regime: Suppose we are given $m$ copies of a system with Hamiltonian $H$ and state $\psi$, i.e.,  non-interacting systems with the joint state  $\psi^{\otimes m}$ and the total (non-interacting) Hamiltonian  $H_\text{tot}=\sum_{i=1}^m H^{(i)} $, where $H^{(i)}=I^{\otimes (i-1)}\otimes H\otimes I^{\otimes (m-i)}$.  The total energy for these systems is the sum of the energy of the individuals. Therefore, for state $\psi^{\otimes m}$ the probability distribution over  energy eigenspaces of $H_\text{tot}$ is equal to the probability distribution for the random variable $n_\text{tot}=n_1+\cdots  +n_m$, where each integer random variables $n_k$ has the  probability distribution $p_\psi$. Hence, the probability distribution of the total energy for state $\psi^{\otimes m}$ is given by the $m$-fold convolution of the probability distribution $p_\psi$, i.e.
\be\label{mfold}
 p_{\psi^{\otimes m}}=\underbrace{p_{\psi}\ast \cdots \ast p_{\psi}}_\text{$m$ times}\ .
 \ee

Ref. \cite{schuch2004nonlocal, schuch2004quantum, gour2008resource} argue that in the limit of large number of copies $m\gg1$, the central limit theorem implies that the distribution of energy for $\psi^{\otimes m}$ converges to a  Gaussian distribution. Any such distribution is determined by only two parameters, namely the variance and the mean.  It follows that if the energy variances for two states $\psi^{\otimes m}$ and $\phi^{\otimes \lceil R m\rceil}$  match approximately, then by adding or subtracting energy, which is a TI operation, we can shift the center of the distributions and overlap them. This is the main intuition in the arguments of \cite{schuch2004nonlocal, schuch2004quantum, gour2008resource}.  

Although this intuition is correct, there are some crucial details which require more careful analysis. Most importantly, the standard central limit theorems do not guarantee the convergence in the total variation distance , which requires stronger assumptions.  To prove this stronger notion of convergence, we need to use more advanced results on the limit theorems, which are reviewed  in the following section. Furthermore, the argument of \cite{schuch2004nonlocal, schuch2004quantum, gour2008resource}, is restricted to the case of states with gapless spectrum, i.e., those for which the the support of distribution $p_\psi$ is a single connected interval of integers. As we will see in the following, this assumption is not necessary.

\subsection{A local limit theorem and convergence in the total variation distance}

In this section we review a result of \cite{barbour2002total}, which shows that under certain conditions,  sum of integer-valued random variables converges to a translated Poisson distribution, in the total variation distance (See also \cite{rollin2015local}).

 In the following $Y \sim P(\sigma^2)$ means the integer-valued random variable $Y$ has Poisson distribution with variance $\sigma \ge 0$, such that  any integer $l\ge 0$ occurs with probability ${e^{-\sigma} \sigma^{l}}/{l!}$. 
 The Poisson distribution is specified by only one parameter $\sigma$, which determines both the variance and the mean of the distribution. We are  interested in the more general family of integer-valued distributions  obtained by translating Poisson distributions with integers, such that the variance and mean can be independent of each other. However, by translating with an integer we can only change the mean of the distribution in a discrete fashion.  It follows that using this family of distributions we cannot really achieve arbitrary mean and variance. Nevertheless, for any desired mean $\mu$ and variance $\sigma^2$ we can find a  translated Poisson distribution whose mean is exactly $\mu$ and its variance is close to $\sigma^2$, such that their difference is less than one. \\

 \noindent \textbf{Translated Poisson Distribution:} For any given $\mu$ and $\sigma^2>0$, let $Z\sim TP(\mu, \sigma^2)$ be a random variable which satisfies $Z-s \sim P(\sigma^2+\gamma)$  where the shift  $s:=
\lfloor\mu-\sigma\rfloor$ is an integer, and $\gamma:=\mu- \sigma^2 -\lfloor\mu-\sigma^2\rfloor$, satisfies $0 \le \gamma < 1$. This means  $Z-s$ has Poisson distribution with variance $\sigma^2+\gamma$, i.e.,    
\be\label{TP:def}
Z\sim TP(\mu, \sigma^2)\  \Longleftrightarrow\  Z-\lfloor\mu-\sigma\rfloor \sim P(\sigma^2+\gamma) \ .
\ee
It follows that the random variable $Z\sim TP(\mu, \sigma^2)$ has mean $\mu$, i.e. $\mathbb{E} Z=\mu$, and its variance is $\mathbb{E} Z^2-(\mathbb{E} Z)^2= \sigma^2+\gamma$, which is between $\sigma^2$ and $\sigma^2+1$.\\

Let 
\be
W=\sum_{i=1}^m X_i\ ,
\ee
 be the sum of $m$ independent integer-valued random variables $X_i$, with mean  $\mu_i=\mathbb{E} X_i$ and variance $\sigma_i^2=\text{Var} X_i$, and bounded third moment, i.e. $\mathbb{E} |X^3_i|< \infty $. Let $\mu=\mathbb{E}  W\equiv \sum_{i=1}^m \mu_i$, and $\sigma^2\equiv \sum_{i=1}^m \sigma^2_i$ be the variance of $W$.  
 Finally, define
\be\label{jaja2021}
\phi_i \equiv \sigma^2_i \mathbb{E}\{X_i (X_i - 1)\} + {|\mu_i - \sigma^2_i|}\ \mathbb{E}\{(X_i -1)(X_i -2)\}+\mathbb{E}|X_i(X_i - 1)(X_i - 2)|\ .
\ee
Note that if $\sigma^2_i$ and the third moment $\mathbb{E} X^3_i$ are both finite, then $\phi_i$ is also a finite number. 

Let $\mathcal{L}(X_i)$ be the distribution of the random variable $X_i$. In the following result we assume 
\be\label{cond2021}
d_{\text{TV}}(\mathcal{L}(X_i), \mathcal{L}(X_i+1))<1 , 
\ee
which means $X_i$ is not perfectly  distinguishable from its translated version. This is true if there is $n\in\mathbb{Z}$ such that $X_i$ takes both values $n$ and $n+1$, with non-zero probabilities.  

Roughly speaking, the following theorem states that if 
all random variables $X_i: i=1,\cdots m$ satisfy the condition in Eq.(\ref{cond2021}) and have  bounded third moments and  nonzero variances, then  the sum $W=\sum_i X_i$ converges to a translated Poisson distribution.
\begin{theorem}(\emph{Corollary 3.2 in Barbour-Cekanavicius}\cite{barbour2002total})\label{lem:prob}
Consider random variables $X_i: i=1,\cdots, m$ with mean  $\mu_i=\mathbb{E} X_i$ and  variance $\sigma_i^2=\text{Var} X_i$.
 Let 
\be
a=\min_{i=1,\cdots, m}  \sigma^2_i\ , \ \  \ \  \ \  \  b=\min_{i=1,\cdots, m} \nu_i \ ,\ \ \ \ \  \ \  c=\max_{i=1,\cdots, m} \ \ \frac{\phi_i}{\sigma^2_i }\ ,
\ee
where $\nu_i= \min\{\frac{1}{2}, 1-d_{\text{TV}}(\mathcal{L}(X_i), \mathcal{L}(X_i+1))  \}$, and $\phi_i$ is defined in Eq.(\ref{jaja2021}).  Assume $a, b>0$ and $c<\infty$.  Then, the total variation distance of the distribution of $W=\sum_{i=1}^m X_i $ and the translated Poisson distribution $TP(\mu,\sigma^2)$ is bounded by 
\be
d_{\text{TV}}{\large{(}}\mathcal{L}(W), TP(\mu, \sigma^2) {\large{)}}=\frac{1}{2}\|\mathcal{L}(W)-TP(\mu, \sigma^2)  \|_1  \le  \frac{c}{\sqrt{m b-1/2}}+\frac{2}{m a} \ .
\ee 
\end{theorem}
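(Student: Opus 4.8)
The statement is Corollary~3.2 of \cite{barbour2002total}, so here I only describe the route I would take to establish it. The plan is to run Stein's method adapted to the translated Poisson law. First I would fix a test set $A\subseteq\mathbb{Z}$, write $s$ for the integer shift and $\lambda=\sigma^2+\gamma$ for the Poisson parameter of $TP(\mu,\sigma^2)$, and introduce the Stein equation $\lambda\,g(k+1)-(k-s)\,g(k)=\mathbbm{1}[k\in A]-\mathbb{P}[Z\in A]$ for the unknown $g:\mathbb{Z}\to\mathbb{R}$. Since $d_{\mathrm{TV}}\big(\mathcal{L}(W),TP(\mu,\sigma^2)\big)=\sup_{A}\big|\mathbb{E}[\lambda\,g(W+1)-(W-s)\,g(W)]\big|$, the problem splits into (i) controlling the first and second differences $\Delta g,\ \Delta^2 g$ of the solution, and (ii) expanding that expectation locally in the summands $X_i$.

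For step~(i) I would use the classical Stein factors for the (translated) Poisson law, $\|\Delta g\|_\infty\le\min(1,\lambda^{-1})$ and $\|\Delta^2 g\|_\infty\le C\min(1,\lambda^{-1})$; but the real gain — the source of the $1/\sqrt{mb}$ rate — is that $g$ gets evaluated at $W$, whose law is spread out. Writing $W=W^{(i)}+X_i$ with $W^{(i)}=\sum_{j\neq i}X_j$ independent of $X_i$, the non-degeneracy hypothesis (\ref{cond2021}) with parameters $\nu_j\ge b$ lets me invoke a Mineka-type coupling (equivalently, the Fourier bound $d_{\mathrm{TV}}\big(\mathcal{L}(W^{(i)}),\mathcal{L}(W^{(i)}+1)\big)\le\int_{-\pi}^{\pi}\prod_{j\ne i}\big|\widehat{\mathcal{L}(X_j)}(t)\big|\,|1-e^{it}|\,\frac{dt}{2\pi}$ together with $1-|\widehat{\mathcal{L}(X_j)}(t)|\gtrsim\nu_j t^2$), giving $d_{\mathrm{TV}}\big(\mathcal{L}(W^{(i)}),\mathcal{L}(W^{(i)}+1)\big)=O\big((mb)^{-1/2}\big)$ and hence $\mathbb{E}\big|\Delta^2 g(W^{(i)}+r)\big|=O\big((mb)^{-1/2}\big)$ uniformly over the $r$ in the range $X_i$ can reach.

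For step~(ii) I would use independence to write $\mathbb{E}[(W-s)\,g(W)]=\sum_i\mathbb{E}\big[X_i\,g(W^{(i)}+X_i)\big]-s\,\mathbb{E}[g(W)]$ and Taylor-expand $g(W^{(i)}+X_i)$ to second order in $X_i$ about $W^{(i)}$; matching the $\mathbb{E}X_i$ and $\mathbb{E}X_i(X_i-1)$ contributions against $\lambda\,\mathbb{E}[g(W+1)]$ (after splitting $\lambda=\sum_i\sigma_i^2+\gamma$ and absorbing the rounding defect $\gamma<1$) makes the leading terms cancel and leaves a remainder bounded, term by term, by precisely $\phi_i=\sigma_i^2\,\mathbb{E}\{X_i(X_i-1)\}+|\mu_i-\sigma_i^2|\,\mathbb{E}\{(X_i-1)(X_i-2)\}+\mathbb{E}|X_i(X_i-1)(X_i-2)|$ times a difference-norm of $g$. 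Inserting the smoothed estimate $O((mb)^{-1/2})$ from step~(i) for the principal piece, the crude bound $\|\Delta^2 g\|_\infty\le\lambda^{-1}\le(ma)^{-1}$ (since $\lambda\ge\sigma^2=\sum_i\sigma_i^2\ge ma$) for the residual $\gamma$-piece, and then $\phi_i/\sigma_i^2\le c$ with $\sum_i\sigma_i^2=\sigma^2$, collapses the sum to the claimed $\frac{c}{\sqrt{mb-1/2}}+\frac{2}{ma}$.

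The hard part will be step~(i): obtaining the $1/\sqrt{mb}$ smoothing factor with the explicit constants appearing in the statement — this requires a careful Mineka coupling of $W^{(i)}$ with $W^{(i)}+1$ (or the equivalent characteristic-function estimate), and the bookkeeping that makes the uniform-in-$r$ bound on $\mathbb{E}|\Delta^2 g(W^{(i)}+r)|$ valid over the whole support of $X_i$. One also has to thread the integer shift and the variance defect $\gamma$ through the argument so that the mean is matched exactly while $\gamma$ only feeds the lower-order $1/(ma)$ term. Everything else is a lengthy but routine Stein-method computation.
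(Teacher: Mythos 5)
This statement is not proven in the paper at all: it is imported verbatim as Corollary 3.2 of Barbour--\v{C}ekanavi\v{c}ius \cite{barbour2002total}, and the paper's argument simply invokes it as a black box in Section \ref{Sec:app:spec}. So there is no internal proof to compare your proposal against. That said, the route you describe --- Stein's method for the translated Poisson law, with the Taylor expansion of $\mathbb{E}[(W-s)g(W)]$ in the individual summands producing exactly the $\phi_i$ combination, and a Mineka-type coupling of $W^{(i)}$ with $W^{(i)}+1$ (driven by the non-degeneracy parameters $\nu_i\ge b$) supplying the $O((mb)^{-1/2})$ smoothing factor, with the variance defect $\gamma<1$ and the lower bound $\lambda\ge ma$ feeding the $2/(ma)$ term --- is essentially the mechanism by which the cited reference obtains this bound, so your plan is aimed at the right target and correctly identifies where the two pieces of the error bound come from. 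Be aware, however, that as written it is an outline rather than a proof: the explicit constants ($\sqrt{mb-1/2}$ in the denominator, the factor $2$ in $2/(ma)$, and the clean factorization of the remainder as $\phi_i$ times a second-difference norm) are asserted to ``collapse'' rather than derived, and you flag yourself that the uniform-in-$r$ smoothing estimate is the hard step. Since the paper treats the theorem as an external citation, reproducing those details is not needed for the paper's purposes, but they would be needed before your sketch could stand as an independent verification of the stated inequality.
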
  

We end this section by recalling another useful result on the total variation distance between Poisson distributions (See \cite{adell2006exact}).
 \begin{lemma} \textbf{\cite{adell2006exact}}\label{lem:dis}
 The total variation distance between two Poisson distributions with variances $\sigma^2+x$ and $\sigma^2$, for $x\ge 0,$ is bounded by
\begin{align}
d_{\text{TV}}(P(\sigma^2), P(\sigma^2+x) )&= \frac{1}{2}\|P(\sigma^2)-P(\sigma^2+x) \|_1\\ &=\frac{1}{2} \sum_n \left|\frac{e^{-\sigma} \sigma^{n}}{n!}-\frac{e^{-\sqrt{\sigma^2+x}} \sqrt{(\sigma^2+x)^{n}}}{n!}  \right|\\ 
&\le \min\{x, \sqrt{\frac{2}{e}}(\sqrt{\sigma^2+x}-\sigma) \}\ .
\end{align}
\end{lemma}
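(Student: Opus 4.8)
The plan is to establish the two inequalities inside the minimum separately. Throughout I write $\lambda=\sigma^2$ and $\lambda'=\sigma^2+x\ge\lambda$ for the two Poisson rates, so that $P(\sigma^2)$ is the Poisson law of rate $\lambda$ and, since $\sigma\ge0$, $\sqrt{\sigma^2+x}-\sigma=\sqrt{\lambda'}-\sqrt{\lambda}$. For the bound $d_{\text{TV}}\le x$ I would use the superposition (infinite divisibility) property of the Poisson family: $\text{Poisson}(\lambda')$ has the same law as $N+M$ with $N\sim\text{Poisson}(\lambda)$ and $M\sim\text{Poisson}(x)$ independent. Then $(N,N+M)$ is a coupling of $P(\sigma^2)$ and $P(\sigma^2+x)$ whose coordinates differ only on the event $\{M\ge1\}$, and since total variation distance is bounded by the probability that a coupling disagrees, $d_{\text{TV}}(P(\sigma^2),P(\sigma^2+x))\le\Pr[M\ge1]=1-e^{-x}\le x$.

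For the second bound I would differentiate the Poisson pmf $f_s(n)=e^{-s}s^n/n!$ in the rate $s$, using the elementary identity $\partial_s f_s(n)=f_s(n-1)-f_s(n)$ with the convention $f_s(-1)=0$, and integrate to get $f_{\lambda'}(n)-f_\lambda(n)=\int_\lambda^{\lambda'}(f_s(n-1)-f_s(n))\,ds$. Taking absolute values, summing over $n$, and exchanging the sum and integral (Tonelli, everything nonnegative after $|\cdot|$), one obtains
\begin{align}\nonumber
2\,d_{\text{TV}}(P(\sigma^2),P(\sigma^2+x))&\le\int_\lambda^{\lambda'}\sum_n|f_s(n-1)-f_s(n)|\,ds\\ \nonumber
&=2\int_\lambda^{\lambda'}\max_n f_s(n)\,ds\\ \nonumber
&\le2\int_\lambda^{\lambda'}\frac{ds}{\sqrt{2es}}=2\sqrt{2/e}\,(\sqrt{\lambda'}-\sqrt{\lambda})\,,
\end{align}
where the first equality is the telescoping identity $\sum_n|f_s(n)-f_s(n-1)|=2\max_n f_s(n)$ for the unimodal Poisson pmf (the positive increments up to the mode and the negative ones past it each sum to the modal probability), and the second inequality uses the bound $\max_n f_s(n)\le(2es)^{-1/2}$, valid for all $s>0$. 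Dividing by $2$ gives $d_{\text{TV}}\le\sqrt{2/e}(\sqrt{\sigma^2+x}-\sigma)$, and taking the minimum with the first bound yields the stated inequality.

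The main obstacle is the modal-probability estimate $\max_n f_s(n)\le(2es)^{-1/2}$. For $s\le1$ the mode is $n=0$, and $e^{-s}\le(2es)^{-1/2}$ is equivalent to $2es\,e^{-2s}\le1$, a one-variable maximization (the left side peaks at $s=1/2$, where it equals $1$). For $s\ge1$ the mode is $m=\lfloor s\rfloor\ge1$, and inserting Stirling's lower bound $m!\ge\sqrt{2\pi m}\,(m/e)^m$ together with $e^{-(s-m)}(1+(s-m)/m)^m\le1$ into $f_s(m)=e^{-s}s^m/m!$ yields $\max_n f_s(n)\le(2\pi m)^{-1/2}$, which already beats $(2es)^{-1/2}$ once $m$ is not too small; the finitely many remaining small-$m$ cases (where $s$ is close to $m+1$, so the crude Stirling bound is not tight enough) are dispatched by a direct computation or a sharper form of Stirling's estimate. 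None of this is deep, but it is where the care goes; alternatively one may simply invoke the known sharp bound on the largest atom of a Poisson distribution.
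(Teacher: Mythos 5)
Your proposal is correct, and it necessarily differs from the paper, because the paper does not prove this lemma at all: it is simply imported from the cited reference of Adell and Jodr\'a, where sharper (indeed exact) expressions for Kolmogorov and total variation distances between Poisson laws are derived. Your self-contained route is the natural elementary one and it does recover both stated bounds with the right constants: the coupling $N\sim\mathrm{Po}(\lambda)$, $N+M$ with $M\sim\mathrm{Po}(x)$ gives $d_{\text{TV}}\le 1-e^{-x}\le x$, and the identity $\partial_s f_s(n)=f_s(n-1)-f_s(n)$, combined with unimodality of the Poisson pmf (so that $\sum_n|f_s(n)-f_s(n-1)|=2\max_n f_s(n)$) and the modal bound $\max_n f_s(n)\le(2es)^{-1/2}$, integrates to $\sqrt{2/e}\,(\sqrt{\sigma^2+x}-\sigma)$, exactly matching the lemma once one reads $P(\sigma^2)$ as the Poisson law of mean $\sigma^2$ (the pmf displayed in the lemma has a parametrization slip in the paper itself, not in your argument). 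You correctly isolate the only delicate point, the bound on the largest Poisson atom: your $s\le 1$ branch is tight at $s=1/2$ (where $2es\,e^{-2s}=1$), and for $s\ge 1$ your Stirling estimate $f_s(m)\le(2\pi m)^{-1/2}$ settles all $m$ with $\pi m\ge e(m+1)$, i.e.\ $m\ge 7$, leaving $m=1,\dots,6$ to a finite check that does go through (the worst case of $f_s(m)\sqrt{2es}$ occurs at $s=m+\tfrac12$ and reduces to $m!\ge\sqrt{2}\,e^{-m}(m+\tfrac12)^{m+1/2}$, which holds for these $m$). What your approach buys is a short, fully elementary and verifiable proof of exactly the inequality the paper needs; what the citation buys the paper is access to the sharp constants and exact formulas of the reference, which are stronger than required here.
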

Therefore, for variance $\sigma^2> 0$, we find that the total variation distance between   $P(\sigma^2)$ and $P(\sigma^2+x) $ is bounded by the ratio of the difference between the two variances to the square root of the variance, i.e. 
\be
d_{\text{TV}}(P(\sigma^2), P(\sigma^2+x) )\le \frac{x}{\sigma}\ .
\ee

\subsection{State conversion in the iid regime  (Proof of the first part of Theorem \ref{Thmold})}\label{Sec:app:spec}



Next, we apply this result  to study the conversion of pure  states in the iid regime using TI operations. As we saw before, for $m$ copies of a system with state $\psi$ and Hamiltonian $H$, the total energy is $ 2\pi n_\text{tot}/\tau$, where 
\be
n_\text{tot}=n_1+ \cdots  +n_m\ ,
\ee
and $n_i$ has distribution $p_\psi$.  We denote the distribution of this random variable by  $p_{\psi^{\otimes m}}$, which is the  $m$-fold convolution of $p_{\psi}$, as in Eq.(\ref{mfold}).   
Applying theorem \ref{lem:prob}  we find that,  provided that certain conditions (listed below) are satisfied,  this distribution can be approximated by a translated Poisson distribution $TP(\mu,\sigma^2)$ with the mean 
\be\label{mean21}
\mu=  \mathbb{E}\{n_\text{tot}\}=  m\times \mathbb{E}\{n\}= m\times  \frac{\tau}{2\pi} \langle\psi|H|\psi\rangle \ ,
\ee
and  the variance
\begin{align}\label{variance21}
\sigma^2&=m\times (\mathbb{E}\{n^2\}-\mathbb{E}^2\{n\})=m (\frac{\tau}{2\pi})^2 \times V_H(\psi) \ ,
\end{align}
 where $V_H(\psi)=\langle\psi|H^2|\psi\rangle-\langle\psi|H|\psi\rangle^2$ is the energy variance of state $\psi$ for Hamiltonian $H$ (Recall that $\psi$ has only components on the eigen-subspaces of  $H$ with eigenvalue in the form  of an integer times $2\pi/\tau$. See remark \ref{rem21}). 
 
 In particular,  theorem \ref{lem:prob} implies 
\be\label{156tr}
d_{\text{TV}}{\large{(}}p_{\psi^{\otimes m}},TP(\mu, \sigma^2) {\large{)}}\le  \frac{c}{\sqrt{m b-1/2}}+\frac{2}{m [(\tau/2\pi)^2\times V_H(\psi)]  } \ .
\ee
Here,
\be
b= \min\big\{\frac{1}{2}, 1-\frac{1}{2}\sum_l |p_\psi(l)- p_\psi(l+1)|  \big\}\ \ ,
\ee
where 
$\frac{1}{2}\sum_l |p_\psi(l)- p_\psi(l+1)|$ is the total variation distance between $p_\psi$ and the translated version of $p_\psi$,
and $c$, defined in theorem  \ref{lem:prob} is a finite number (independent of $m$), provided that the energy variance of $\psi$ is nonzero and it has bounded third moment of energy. 

It follows that in the limit $m$ goes to infinity  the two distributions $p_{\psi^{\otimes m}}$ and $TP(\mu, \sigma^2) $ converge in the total variation distance, if the following conditions are all satisfied:
\begin{enumerate}
\item The distribution $p_\psi$ has a nonzero variance, which means $\psi$ is not an eigenstate of the system Hamiltonian $H$.
\item The distribution $p_\psi$ has a finite third moment (This also guarantees that coefficient $c$ in Eq.(\ref{156tr}) is finite).
\item  The total variation distance between $p_\psi$ and the translated version of $p_\psi$ satisfies
\be\label{cond3}
\frac{1}{2}\sum_n |p_\psi(n)- p_\psi(n+1)| < 1\ ,
\ee 
which means  the two distributions $p_\psi$ and $\tilde{p}_\psi$, defined by  $\tilde{p}_\psi(n)={p}_\psi(n+1)$,  have overlapping supports.
This condition is satisfied if there exists, at least, an integer $n_0\in \mathbb{Z}$ such that both $p_\psi(n_0)$ and $p_\psi(n_0+1)$ are non-zero.

\end{enumerate}

From the discussion in Sec.\ref{Sec:review} and, in particular, proposition \ref{prop:TI} we know that  the interconvertability of pure states under TI operations are determined by their energy distribution.   Combining this proposition  with the above result,  we can study interconversion of  systems in the iid regime.

Consider two systems with states  $\psi_1$ and $\psi_2$ and Hamiltonians $H_1$ and $H_2$, respectively. Suppose  both systems have period $\tau$. Then, assuming the above conditions are satisfied, in the limit $m$ goes to infinity the energy distribution of $\psi_1^{\otimes m}$, denoted by $ p_{\psi_1^{\otimes m}}$,  
converges to  $TP(\mu_1, \sigma_1^2) $, where $\mu_1$ and $\sigma_1$ are  the mean and energy variance of $\psi_1^{\otimes m}$, defined via Eq.(\ref{mean21}), i.e., 
\begin{align}
\mu_1&=m (\frac{\tau}{2\pi}) \times \langle\psi_1|H_1|\psi_1\rangle\ \ \ , \  \ \ \  \ \ 
\sigma_1^2=m (\frac{\tau}{2\pi})^2\times V_{H_1}(\psi_1)\ .
\end{align}

Similarly, consider  $\lceil R m\rceil$ copies of system with Hamiltonian $H_2$ and state $\psi_2$, where 
\be
R=\frac{V_{H_1}(\psi_1)}{V_{H_2}(\psi_2)}\ .
\ee
Let   $p_{\psi^{\otimes \lceil R m\rceil}_2}$ be the energy distribution for state  $\psi_2^{\otimes \lceil R m\rceil}$. Then, in the limit of large $m$ the energy distribution for state $\psi_2^{\otimes \lceil R m\rceil}$ converges to the translated Poisson distribution $TP(\mu_2,\sigma_2^2)$,  where
\begin{align}
\mu_2=\lceil R m\rceil \times (\frac{\tau}{2\pi}) \langle\psi_2|H_2|\psi_2\rangle\  \ , \  \ \ \  \ \  
\sigma_2^2&=\lceil R m\rceil \times (\frac{\tau}{2\pi})^2\times V_{H_2}(\psi_2)\ .
\end{align}  
Recall that the distribution $TP(\mu, \sigma)$ is the distribution obtained from translating a Poisson distribution with variance $\sigma^2+\gamma$ with an integer, where $0 \le \gamma\le 1$  (See Eq.(\ref{TP:def})). 
 Therefore, up to a translation by integers $TP(\mu_1, \sigma_1^2)$ and $TP(\mu_2, \sigma_2^2)$ are, respectively,  equal to the Poisson distributions  $P(\sigma_1^2+\gamma_1)$ and $P(\sigma_2^2+\gamma_2)$, where $0\le \gamma_{1,2}\le 1$.

In summary, up to translations by integers, the distributions    $ p_{\psi_1^{\otimes m}}$ and $p_{\psi^{\otimes \lceil R m\rceil}_2}$ are equal to Poisson distributions   $P(\sigma_1^2+\gamma_1)$ and $P(\sigma_2^2+\gamma_2)$, respectively, whose total variation distance is bounded by
\bes
\begin{align}
d_{\text{TV}}\big(P(\sigma_1^2+\gamma_1) , P(\sigma_2^2+\gamma_2) \big)&\le \frac{|\sigma^2_2-\sigma^2_1+\gamma_2-\gamma_1|}{\sigma_1}\\ &\le \frac{1}{\sigma_1}\big(|\sigma^2_1-\sigma^2_2|+|\gamma_1-\gamma_2|\big)
\\ 
 &\le \frac{1}{\sqrt{q\times m }}  \Big(q\times \Big|m- \frac{\lceil R\times m\rceil}{R}    \Big|+2\Big)\\ 
 &\le \frac{1}{\sqrt{q\times m }}  \Big( \frac{q}{R}+2\Big)=\frac{1}{\sqrt{ m \tau V_{H_1}(\psi_1)/2\pi   }}  \Big(\frac{\tau V_{H_2}(\psi_2)}{2\pi}+2\Big)\ ,
\end{align}
\ees
where $q=\tau V_{H_1}(\psi_1)/2\pi$ and the first bound is obtained by applying lemma \ref{lem:dis}. In conclusion, if $q=\tau V_{H_1}(\psi_1)/2\pi>0$, then in the limit $m$ goes to infinity, the total variation distance between distribution $p_{\psi^{\otimes \lceil R m\rceil}_2}$  and a properly translated version of  $ p_{\psi_1^{\otimes m}}$ goes to zero, with an error upper bounded by $(qR^{-1}+2)/\sqrt{qm}$. Note that the required amount of translation is an integer.

Next, we apply proposition  \ref{prop:TI}. According to this proposition,  if by translating with an integer we can convert the distribution  $ p_{\psi_1^{\otimes m}}$ to a distribution close to $p_{\psi^{\otimes \lceil R m\rceil}_2}$, with the total variation distance $\epsilon$, then there exists a TI operation that converts state  $\psi_1^{\otimes m}$ to state  $\psi^{\otimes \lceil R m\rceil}_2$ with trace distance $\sqrt{2\epsilon}$.  Therefore,  we arrive at 
the following result:
 \begin{proposition}\label{cor12}
Consider two systems  with Hamiltonian $H_1$ and $H_2$ and states $\psi_1$ and $\psi_2$, respectively. Assume:
\begin{enumerate}
\item Both systems  have period $\tau$, such that  
\be
\tau= \inf_{t}\{t>0: \big|\langle\psi_{l}|e^{-i H_{l} t}|\psi_{l}\rangle \big|=1\}\ :\ \ \ \ \ l=1,2\ .
\ee
\item   Suppose the systems have non-zero energy variances   $V_{H_1}(\psi_1)$ and $V_{H_2}(\psi_2)$ and their third moments of energy is finite, i.e., $|\langle \psi_l|H_l^3|\psi_l\rangle|<\infty$  for $l=1,2$.

\item  The energy distributions $p_{\psi_{1,2}}$ satisfy the condition
\be\label{cond62}
\frac{1}{2}\sum_n |p_{\psi_{l}}(n)-p_{\psi_{l}}(n+1)| < 1\ : \ \ \ \ l=1,2\ .
\ee
where $
p_{\psi_{l}}(n)= \frac{1}{2\pi} \int_0^{2\pi}d\theta\  e^{i\theta n } \langle\psi_{l}|e^{-i H_{l}\tau \frac{\theta}{2\pi}} |\psi_{l}\rangle\ $, is the probability that state $\psi_{l}$ has energy ${2\pi n/\tau}$ with respect to Hamiltonian $H_{l}$, where we have defined the energy references for Hamiltonians $H_1$ and $H_2$ such that $e^{-i H_{l} \tau}|\psi_{l}\rangle=|\psi_{l}\rangle\ .$ 
 \end{enumerate}
Let  $R=\frac{V_H(\psi_1)}{V_H(\psi_2)}$ be the ratio of energy variances.   Then, for any integer $m$ there exists a TI operation $\mathcal{E}_m$ that maps $\psi_1^{\otimes m}$ to a state close to $\psi_2^{\otimes \lceil R m\rceil}$, such that their trace distance vanishes in the limit $m$ goes to infinity, i.e.,
\be\label{rgrg}
\lim_{m\rightarrow \infty} \|\mathcal{E}_m(\psi_1^{\otimes m})-\psi_2^{\otimes \lceil R m\rceil}\|_1=0\ .
\ee
 
\end{proposition}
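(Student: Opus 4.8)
The plan is to reduce the claim entirely to a statement about the energy distributions of the iid states and then combine the local limit theorem of Theorem~\ref{lem:prob} with the Poisson comparison estimate of Lemma~\ref{lem:dis}. By Proposition~\ref{prop:TI}, for any integer $k$ there is a TI operation sending $\psi_1^{\otimes m}$ to a pure state whose overlap with $\psi_2^{\otimes \lceil Rm\rceil}$ is $\big(\sum_n \sqrt{p_{\psi_1^{\otimes m}}(n)\, p_{\psi_2^{\otimes \lceil Rm\rceil}}(n+k)}\big)^2$, and whose trace distance to $\psi_2^{\otimes \lceil Rm\rceil}$ is at most $\sqrt{2\, d_{\text{TV}}\!\big(p_{\psi_1^{\otimes m}}(\cdot),\, p_{\psi_2^{\otimes \lceil Rm\rceil}}(\cdot+k)\big)}$. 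Hence it suffices to exhibit, for each $m$, an integer $k_m$ with $d_{\text{TV}}\!\big(p_{\psi_1^{\otimes m}}(\cdot),\, p_{\psi_2^{\otimes \lceil Rm\rceil}}(\cdot+k_m)\big)\to 0$ as $m\to\infty$.

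First I would apply Theorem~\ref{lem:prob} to the iid sum $n_{\text{tot}}=n_1+\cdots+n_m$ with each $n_i$ distributed according to $p_{\psi_1}$. Because the summands are identically distributed, the constants in that theorem carry no $m$-dependence: $a = (\tau/2\pi)^2 V_{H_1}(\psi_1)$, $b = \min\{1/2,\, 1-d_{\text{TV}}(p_{\psi_1},\tilde p_{\psi_1})\}$, and $c = \phi/\sigma_{\psi_1}^2$; hypothesis~(1) makes $p_{\psi_1}$ a genuine lattice law on $\mathbb{Z}$, hypothesis~(2) makes $a>0$ and $c<\infty$, and hypothesis~(3) (Eq.~(\ref{cond62})) makes $b>0$. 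Theorem~\ref{lem:prob} then yields $d_{\text{TV}}\big(p_{\psi_1^{\otimes m}},\, TP(\mu_1,\sigma_1^2)\big)\le c/\sqrt{mb-1/2}+2/(ma)\to 0$, with $\mu_1 = m(\tau/2\pi)\langle\psi_1|H_1|\psi_1\rangle$ and $\sigma_1^2 = m(\tau/2\pi)^2 V_{H_1}(\psi_1)$; the same theorem applied to $\psi_2^{\otimes\lceil Rm\rceil}$ gives convergence of $p_{\psi_2^{\otimes\lceil Rm\rceil}}$ to $TP(\mu_2,\sigma_2^2)$ with $\sigma_2^2 = \lceil Rm\rceil (\tau/2\pi)^2 V_{H_2}(\psi_2)$.

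Next I would compare the two translated Poisson laws. By definition $TP(\mu,\sigma^2)$ is an integer translate of a genuine Poisson law $P(\sigma^2+\gamma)$ with $\gamma\in[0,1)$, so after shifting $p_{\psi_1^{\otimes m}}$ and $p_{\psi_2^{\otimes\lceil Rm\rceil}}$ by appropriate integers it is enough to bound $d_{\text{TV}}\big(P(\sigma_1^2+\gamma_1),\, P(\sigma_2^2+\gamma_2)\big)$. The choice $R = V_{H_1}(\psi_1)/V_{H_2}(\psi_2)$ is exactly what forces the leading variances to agree up to the ceiling's rounding: writing $\lceil Rm\rceil = Rm+\delta$ with $\delta\in[0,1)$ one gets $|\sigma_1^2-\sigma_2^2| = (\delta/R)(\tau/2\pi)^2 V_{H_1}(\psi_1) < (\tau/2\pi)^2 V_{H_2}(\psi_2)$, a constant independent of $m$, whereas $\sigma_1=\Theta(\sqrt m)$. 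Lemma~\ref{lem:dis}, in the form $d_{\text{TV}}(P(\sigma^2),P(\sigma^2+x))\le x/\sigma$, then gives $d_{\text{TV}}\big(P(\sigma_1^2+\gamma_1),P(\sigma_2^2+\gamma_2)\big)=O(1/\sqrt m)$. Chaining these estimates through the triangle inequality produces the required $k_m$ together with the bound $d_{\text{TV}}\big(p_{\psi_1^{\otimes m}}(\cdot),p_{\psi_2^{\otimes\lceil Rm\rceil}}(\cdot+k_m)\big)=O(1/\sqrt m)$; inserting this into Proposition~\ref{prop:TI} gives Eq.~(\ref{rgrg}).

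The genuinely substantive ingredient is the Barbour--Cekanavicius local limit theorem (Theorem~\ref{lem:prob}), which upgrades the ordinary central limit theorem to convergence in total variation distance; hypothesis~(3)/Eq.~(\ref{cond62}) is exactly the non-degeneracy condition it requires, and is what lets us dispense with the gapless-spectrum restriction of earlier work. Within this proposition the only delicate bookkeeping is (i) confirming that $a,b,c$ carry no $m$-dependence so that the error bounds are uniform — automatic from the iid structure — and (ii) controlling the variance mismatch introduced by the ceiling $\lceil Rm\rceil$, which I expect to be the main (though mild) obstacle: one must check that it perturbs the Poisson parameter by only a bounded additive amount, hence perturbs the total variation distance by only $O(1/\sqrt m)$ after division by $\sigma_1 = \Theta(\sqrt m)$.
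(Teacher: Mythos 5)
Your proposal is correct and matches the paper's own argument essentially step for step: apply the Barbour--Cekanavicius theorem (Theorem \ref{lem:prob}) to both iid energy distributions, note that the choice $R=V_{H_1}(\psi_1)/V_{H_2}(\psi_2)$ makes the two translated-Poisson variances differ only by the bounded rounding terms from $\lceil Rm\rceil$ and the $\gamma$'s, bound that mismatch via Lemma \ref{lem:dis} by $O(1/\sqrt{m})$, and convert total-variation closeness of the (integer-shifted) distributions into a TI operation with vanishing trace-distance error via Proposition \ref{prop:TI}. The bookkeeping points you flag (uniformity of $a,b,c$ in $m$ and the ceiling-induced variance shift) are exactly the ones the paper handles, so no gap remains.
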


It turns out that the last condition   in the above proposition, i.e.,  Eq.(\ref{cond62}) is not necessary.  
 We explain this with the following example.   Consider the energy distributions associated to states 
$$|\eta\rangle=\frac{|0\rangle+|2\rangle}{\sqrt{2}}\ \ , \ \  \ \  \text{and}\ \ \ \ \ \ \ \ |\gamma\rangle=\frac{|0\rangle+|2\rangle+|5\rangle}{\sqrt{3}}\ , $$
with the Hamiltonian $H=2\pi\tau^{-1} \sum_{k=0}^\infty k |k\rangle\langle k| $. One can easily see that although neither the distribution $p_\eta$ nor the distribution $p_\gamma$ do not satisfy the condition in Eq.(\ref{cond62}), there is an important distinction between them:
Suppose instead of one copy of state $|\gamma\rangle$ we look at the energy distribution for two copies of this state, which is given by the distribution $p_{\gamma^{\otimes 2}}=p_\gamma\ast p_\gamma$. This distribution has  support on $n=0, 2, 4, 5, 7, 10$. It follows that, even though the energy distribution for one copy of $\gamma$ does not satisfy Eq.(\ref{cond62}), energy distribution for two copies of this state \emph{does}   
satisfy this condition. That is  the total variation distance between $p_\gamma\ast p_\gamma(n)$ and its translated version $p_\gamma\ast p_\gamma(n+1)$ is less than one,
\be
\frac{1}{2} \sum_n |p_\gamma\ast p_\gamma(n+1)-p_\gamma\ast p_\gamma(n) | <1\ .
\ee

 Thus, we can apply the above result  to two copies of this state and conclude that, in the limit $m$ goes to infinity,  the energy distribution for $(\gamma^{\otimes 2})^{\otimes m}$ converges to   a translated Poisson distribution.

On the other hand, this will not happen  for state $|\eta\rangle$: Since the support of $p_\eta$ is restricted to even integers $n=0,2$,  for any integer $L$,  the support of  $p_{\eta^{\otimes L}}$ is also restricted to even integers. Therefore, in the limit of large $L$, the energy distribution will not converge to a translated Poisson distribution (In fact, it converges to a translated Poisson distribution defined only on even integers).

It turns out that the distinction between these two examples have a simple physical interpretation, in terms of the period of dynamics. Recall that the period of dynamics for a system with state $\psi$ and Hamiltonian $H$ is defined as $ \inf_{t}\{t>0: \big|\langle\psi|e^{-i H t}|\psi\rangle \big|=1\}\ $. It can be easily seen that for state  $|\eta\rangle$ the period of dynamics is $\tau/2$, whereas  for state $|\gamma\rangle$ the period is  $\tau$. Using the following lemma, we can show that, in general, having the full period $\tau$ is the necessary and sufficient condition to guarantee that condition in Eq.(\ref{cond62}) is satisfied for a finite number of copies of state. 

\begin{lemma}\label{lem:app:54}
Consider a system with  Hamiltonian $H$, state $|\psi\rangle$ and period $\tau= \inf_{t}\{t>0: \big|\langle\psi|e^{-i H t}|\psi\rangle \big|=1\}$. Let  $p_{\psi}(n)$ defined in Eq.(\ref{weight}) 
be the probability that state $\psi$ has energy $2\pi n/\tau$ (Recall that we assume $E_0=0$, which can be always achieved by a proper shift of the Hamiltonian). 
Then, there is a finite $L$ such that the distribution $p_{\psi^{\otimes L}}=\underbrace{p_{\psi}\ast \cdots \ast p_{\psi}}_\text{$L$ times}$, corresponding to the energy distribution of $\psi^{\otimes L}$, satisfies
\be\label{ref156}
\frac{1}{2} \sum_{n\in\mathbb{Z}} | p_{\psi^{\otimes L}}(n)-p_{\psi^{\otimes L}}(n+1) | <1\ .
\ee
 \end{lemma}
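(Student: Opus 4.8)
The plan is to reduce the claim to an elementary statement about subsets of $\mathbb{Z}$. Throughout, write $S=\{n\in\mathbb{Z}:p_\psi(n)\neq 0\}$ for the support of the energy distribution of $\psi$, and recall from Eq.(\ref{mfold}) that the support of $p_{\psi^{\otimes L}}$ is the $L$-fold sumset $S_L$, i.e. the set of all integers expressible as a sum of $L$ elements of $S$ (repetitions allowed). The first observation is that the left-hand side of Eq.(\ref{ref156}) is exactly twice the total variation distance between $p_{\psi^{\otimes L}}$ and its unit left-shift, and that for two probability distributions this distance equals $1$ if and only if their supports are disjoint. Since the support of the shifted distribution is $S_L-1$, disjointness means precisely that $S_L$ contains no pair of consecutive integers. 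Hence Eq.(\ref{ref156}) holds for some finite $L$ if and only if, for some finite $L$, the sumset $S_L$ contains two consecutive integers, and this is what I will prove.

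The second step extracts the needed arithmetic input from the hypothesis that $\tau$ is the \emph{exact} period. Since (with $E_0=0$) one has $\langle\psi|e^{-iHt}|\psi\rangle=\sum_{n\in S}p_\psi(n)\,e^{-i2\pi n t/\tau}$, a convex combination of points on the unit circle, the equality $|\langle\psi|e^{-iHt}|\psi\rangle|=1$ holds if and only if all the phases $e^{-i2\pi n t/\tau}$ with $n\in S$ coincide, i.e. $(n-m)t/\tau\in\mathbb{Z}$ for all $n,m\in S$. Writing $G$ for $\gcd\{\,n-m:n,m\in S\,\}$ (the nonnegative generator of the subgroup of $\mathbb{Z}$ generated by the differences, which is nonzero because $|S|\ge 2$), this set of admissible $t$ equals $(\tau/G)\mathbb{Z}$, whose least positive element is $\tau/G$; equating this to the period $\tau$ forces $G=1$. (An energy eigenstate, for which $S$ is a singleton and the defining infimum would be $0$, is excluded by the hypothesis that the period is $\tau>0$, consistently with the fact that no finite $L$ helps in that case.)

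The third step turns $G=1$ into the desired conclusion directly, without invoking the Frobenius number. Because $G=1$, the integer $1$ lies in the subgroup of $\mathbb{Z}$ generated by the differences, so $1=\sum_i c_i(n_i-m_i)$ for finitely many $n_i,m_i\in S$ and integers $c_i$, which we may take nonzero. Collecting the positive and negative coefficients separately, rewrite this as $1=A-B$ with $A=\sum_{c_i>0}c_i\, n_i+\sum_{c_i<0}(-c_i)\, m_i$ and $B=\sum_{c_i>0}c_i\, m_i+\sum_{c_i<0}(-c_i)\, n_i$; both $A$ and $B$ are sums of exactly $N:=\sum_i|c_i|$ elements of $S$ (with repetitions), so $A,B\in S_N$, and $A=B+1$. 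Thus $S_N$ contains the consecutive integers $B$ and $B+1$, and taking $L=N$ finishes the proof.

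The only genuinely delicate point is the second step — converting the analytic periodicity condition $\tau=\inf\{t>0:|\langle\psi|e^{-iHt}|\psi\rangle|=1\}$ into the purely arithmetic statement $\gcd\{n-m:n,m\in S\}=1$; once that equivalence is in place, the first step is the standard support characterization of total variation distance and the third step is a one-line manipulation. As a cross-check against the examples in the text: $|\eta\rangle$ has $S=\{0,2\}$, $G=2$, period $\tau/2$, and indeed no finite $L$ works; while $|\gamma\rangle$ has $S=\{0,2,5\}$, $G=1$, period $\tau$, and already $L=2$ suffices, in agreement with the lemma.
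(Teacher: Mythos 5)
Your proof is correct and follows essentially the same route as the paper's: both reduce the claim to showing that the support of $p_{\psi^{\otimes L}}$ contains two consecutive integers, deduce from the exactness of the period $\tau$ that the relevant gcd equals one, and then use a Bezout-type integer combination to exhibit two sums of $L$ occupied levels differing by $1$. Your variants (taking the gcd of pairwise differences with a positive/negative coefficient split, instead of shifting by the lowest occupied level $n_{\min}$ and padding with it, and spelling out the total-variation support criterion) are only cosmetic differences from the paper's argument.
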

We prove this lemma at the end of this section, using  Bezout's theorem.


In conclusion, if the system has period $\tau$, then there exists a finite positive integer $L$, such that $\psi^{\otimes L}$ satisfies the condition in Eq.(\ref{cond62})  of   proposition  \ref{cor12}. Therefore, we can apply this proposition to state $\psi^{\otimes L}$. Note that the energy variance of this state $L\times V_H(\psi)$, and if $\psi$ has finite third moment, then so does  $\psi^{\otimes L}$. In summary, we find 

\begin{theorem}\label{Thm:main:app}
Consider a pair of systems  with Hamiltonians $H_1$ and $H_2$ and states $\psi_1$ and $\psi_2$, respectively. Assume both systems have period  $\tau$,  and finite non-zero energy variances $V_{H_1}(\psi_1),V_{H_2}(\psi_2) >0$, and finite third moments of energy.   
 Let $R\le  \frac{V_H(\psi_1)}{V_H(\psi_2)}$. Then, for any integer $m$ there exists a TI operation  $\mathcal{E}_m$ that converts $\psi_1^{\otimes m}$ to a state close to $\psi_2^{\otimes \lceil R m\rceil}$, such that their trace distance vanishes in the limit $m$ goes to infinity, as stated in Eq.(\ref{rgrg}.)  
 \end{theorem}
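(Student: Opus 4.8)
The plan is to reduce the general statement to Proposition \ref{cor12}, which already handles the case $R = V_{H_1}(\psi_1)/V_{H_2}(\psi_2)$ under the extra hypothesis Eq.\eqref{cond62}, plus two clean-up arguments: one to remove the extra hypothesis using Lemma \ref{lem:app:54}, and one to pass from the exact ratio to any smaller rate $R$. First I would fix the pair of systems and, invoking Lemma \ref{lem:app:54} twice, choose finite positive integers $L_1, L_2$ so that $\psi_1^{\otimes L_1}$ and $\psi_2^{\otimes L_2}$ both satisfy Eq.\eqref{cond62}; replacing both by a common $L = \mathrm{lcm}$ (or just $L_1 L_2$) I may assume a single $L$ works for both. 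Now regard $\Psi_1 := \psi_1^{\otimes L}$ with Hamiltonian $H_1^{(L)} = \sum_i H_1^{(i)}$ and $\Psi_2 := \psi_2^{\otimes L}$ with $H_2^{(L)}$ as the ``new'' systems. These have period $\tau$, finite third moments of energy (a tensor power of a state with finite third moment still has finite third moment, since $n_{\mathrm{tot}}$ is a finite sum of integer random variables with finite third moments), and energy variances $L\,V_{H_1}(\psi_1)$ and $L\,V_{H_2}(\psi_2)$, whose ratio is unchanged. So Proposition \ref{cor12} applies to the pair $(\Psi_1,\Psi_2)$ and gives, for the critical rate $R^\ast = V_{H_1}(\psi_1)/V_{H_2}(\psi_2)$, a sequence of TI operations $\mathcal{E}'_k$ with $\|\mathcal{E}'_k(\Psi_1^{\otimes k}) - \Psi_2^{\otimes \lceil R^\ast k\rceil}\|_1 \to 0$. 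Translating back, $\Psi_1^{\otimes k} = \psi_1^{\otimes Lk}$ and $\Psi_2^{\otimes \lceil R^\ast k\rceil} = \psi_2^{\otimes L\lceil R^\ast k\rceil}$, which realizes the conversion $\psi_1^{\otimes Lk} \xrightarrow{TI} \psi_2^{\otimes \lceil R^\ast L k\rceil}$ along the subsequence $m = Lk$.

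Next I would interpolate to all $m$ and to the slightly different output count $\lceil R^\ast m\rceil$. Given arbitrary $m$, write $m = Lk + r$ with $0 \le r < L$; discard the leftover $r$ copies of $\psi_1$ (discarding is TI), apply $\mathcal{E}'_k$, obtaining $\psi_2^{\otimes \lceil R^\ast L k\rceil}$ up to vanishing error. Since $\lceil R^\ast L k\rceil$ and $\lceil R^\ast m\rceil$ differ by at most $R^\ast L + 1 = O(1)$, and since the energy distribution of $\psi_2^{\otimes N}$ is the $N$-fold convolution $p_{\psi_2}^{\ast N}$, adding or removing $O(1)$ copies changes the energy distribution by a total-variation amount that I would bound by $O(1/\sqrt{N})$ using Lemma \ref{lem:dis} together with the translated-Poisson approximation (Theorem \ref{lem:prob} applied to $p_{\psi_2}^{\ast L}$, which satisfies Eq.\eqref{cond3}); then Proposition \ref{prop:TI} converts this into a $O(N^{-1/4})$ trace-distance cost, which vanishes. (Alternatively: absorb or emit a few extra copies of $\psi_2$ after noting $\lceil R^\ast L k\rceil \le \lceil R^\ast m\rceil$ for large $m$ up to an additive constant — but one has to be slightly careful with ceilings, so the total-variation estimate is the safe route.) Combining, $\|\mathcal{E}_m(\psi_1^{\otimes m}) - \psi_2^{\otimes\lceil R^\ast m\rceil}\|_1 \to 0$.

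Finally, for a general rate $R \le R^\ast$: produce $\psi_2^{\otimes\lceil R^\ast m\rceil}$ as above and then discard $\lceil R^\ast m\rceil - \lceil R m\rceil \ge 0$ of the output copies (discarding being TI), leaving exactly $\psi_2^{\otimes\lceil R m\rceil}$ with the same vanishing error; the inequality $\lceil R^\ast m\rceil \ge \lceil R m\rceil$ is immediate from $R \le R^\ast$. This gives Eq.\eqref{rgrg} and completes the proof.

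I expect the main obstacle to be the bookkeeping in the interpolation step — controlling how the trace distance accumulates when (i) the block size $L$ does not divide $m$, and (ii) the target exponent $\lceil R^\ast m\rceil$ is not exactly a multiple of $L$. The conceptual content is entirely in Proposition \ref{cor12} and Lemma \ref{lem:app:54}; the risk is only in making sure none of these $O(1)$ discrepancies in copy-number translates into a non-vanishing error, which the $1/\sqrt{N}$ decay of the total-variation distance between nearby convolution powers (again via Lemma \ref{lem:dis} and the local limit theorem) resolves.
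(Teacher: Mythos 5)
Your proposal is correct and follows essentially the same route as the paper: invoke Lemma \ref{lem:app:54} to pass to $L$-copy blocks satisfying condition Eq.(\ref{cond62}), apply Proposition \ref{cor12} to the blocked states (whose period, variance ratio and third-moment bounds are preserved), and reach rates $R$ below the critical ratio by discarding copies. The only difference is that you spell out the ceiling/remainder bookkeeping (non-multiples of $L$ and the $O(1)$ copy mismatch, handled via Lemma \ref{lem:dis}, Theorem \ref{lem:prob} and Proposition \ref{prop:TI}) more explicitly than the paper does, and those estimates are sound.
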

In Section \ref{Sec:Fisher-mon}, theorem \ref{Thm:mono}, we prove a converse bound, that is we show  that the state conversion  is impossible with a vanishing error with rate  $R>\frac{V_H(\psi_1)}{V_H(\psi_2)}$. In theorem \ref{Thm:mono}, the result is presented in terms of QFI. Note that for pure states, the energy variance is one fourth of QFI, i.e. $F_H(\psi)=4V_H(\psi)$. We finish this section by proving lemma \ref{lem:app:54}. 
 
 \subsection*{Proof of lemma \ref{lem:app:54}}
 
Let  $n_\text{min} 2\pi/\tau$ be the minimum occupied energy level by state $\psi$ (Note that any Hamiltonian has a lowest energy level). In other words, let
\be
n_\text{min}=\min \{n: p_\psi(n)\neq 0\} \ ,
\ee
be the minimum $n$ for which $p_\psi(n)\neq 0$. Let
\be
\mathcal{N}_\psi=\{n-n_\text{min}: p_\psi(n)\neq 0\}
\ee
be the set of all occupied levels shifted by $n_\text{min}$. The fact that the period is $\tau$ implies that the greatest common divisor of this set is 1, i.e.
\be
\text{gcd}(\mathcal{N}_\psi)=1\ .
\ee
This can be seen by noting that if $k=\text{gcd}(\mathcal{N}_\psi)$, then for any $n$ either $p_\psi(n)=0$ or $n-n_\text{min}= jk $ for an integer $j$. Therefore, since energy levels are related to integer $n$ via relation   $E=2\pi n/\tau$, we find
\be
|\langle\psi|e^{-i H\tau/k} |\psi\rangle|=|\sum_n p_\psi (n) e^{- i 2\pi (jk+n_\text{min})/k}|=|\sum_n p_\psi (n) e^{- i 2\pi n_\text{min}/k}|=1\ ,
\ee
which implies the period  is smaller than $\tau$. Therefore, assuming the period is $\tau$, we have $\text{gcd}(\mathcal{N}_\psi)=1$.

Next, we use Bezout's theorem:
\begin{lemma} \emph{(Bezout's theorem)}
Suppose the greatest common divisor of a set integers $\{a_1,\cdots, a_n\}$ is one, i.e. $\text{gcd} (\{a_1,\cdots, a_n\})=1$.  Then, there exists integers $\{x_1,\cdots, x_n\}$, such that $\sum_{i=1}^n x_i a_i=1$.
\end{lemma}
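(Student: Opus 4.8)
The plan is to prove B\'ezout's identity by the standard minimal-element argument applied to the set of integer linear combinations of $a_1,\ldots,a_n$. First I would introduce
\be\nonumber
S=\Big\{\sum_{i=1}^n x_i a_i\ :\ x_i\in\mathbb{Z}\Big\}\cap \mathbb{Z}_{>0}\ ,
\ee
the set of strictly positive integers expressible as such a combination. Since $\gcd(\{a_1,\ldots,a_n\})=1$ forces at least one $a_i$ to be nonzero, either $a_i$ or $-a_i$ lies in $S$, so $S$ is nonempty. By the well-ordering principle $S$ has a least element, which I call $d$, and by construction $d=\sum_i x_i^\ast a_i$ for some integers $x_i^\ast$.

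The key step is to show that this minimal $d$ divides every $a_i$. For each fixed $i$ I would apply the division algorithm to write $a_i=q d + r$ with $0\le r<d$, whence
\be\nonumber
r=a_i-q d= a_i - q\sum_{j} x_j^\ast a_j
\ee
is itself an integer linear combination of the $a_j$. If $r>0$ it would belong to $S$ while being strictly smaller than $d$, contradicting the minimality of $d$; hence $r=0$ and $d\mid a_i$. As this holds for every $i$, the integer $d$ is a common divisor of $a_1,\ldots,a_n$, so it divides their greatest common divisor, giving $d\mid 1$ and therefore $d=1$. Consequently $1=\sum_i x_i^\ast a_i$, which is precisely the asserted identity with $x_i=x_i^\ast$.

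The only delicate point — and the lever on which the whole argument turns — is the division-algorithm step, which manufactures a strictly smaller positive combination and thereby forces $r=0$; everything else (nonemptiness of $S$, existence of a least element, and the final deduction $d\mid\gcd\Rightarrow d=1$) is routine. As an alternative I would remark that the same conclusion follows by induction on $n$: the two-variable case $\gcd(a,b)=1\Rightarrow \exists\, u,v:\ ua+vb=1$ is the extended Euclidean algorithm, and for the inductive step one sets $d_1=\gcd(a_1,\ldots,a_{n-1})$, uses $\gcd(d_1,a_n)=1$ to obtain $u d_1+v a_n=1$, and substitutes the inductive expression $d_1=\sum_{i<n} y_i a_i$ to read off the coefficients. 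I would present the minimal-element proof as the main argument, since it is self-contained and avoids invoking the two-variable case as a black box.
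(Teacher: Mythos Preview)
Your proof is correct and complete; the minimal-element argument you give is the standard textbook proof of B\'ezout's identity, and the alternative inductive route you sketch is also valid. Note, however, that the paper does not actually prove this lemma: it is stated as a classical number-theoretic fact and then applied directly to the set $\mathcal{N}_\psi$ in the proof of lemma~\ref{lem:app:54}, so there is no proof in the paper to compare against.
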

We apply this result to the set of integers $\{n_i\}_i=\mathcal{N}_\psi$. Then, the fact that the greatest common divisor of this set is one implies that there exists a set of integers $\{x_i\}_i$ such that
\be
\sum_i x_i n_i = 1\ .
\ee
Partitioning the set $\{x_1,\cdots, x_n\}$ to two subsets which only include  positive and negative elements of  this set,  we find
\be
\sum_{i: x_i>0} x_i n_i = 1-\sum_{i: x_i<0} x_i n_i =1+\sum_{i: x_i<0} |x_i| n_i  \ .
\ee
Let $L=\sum_i |x_i|$ and consider the probability distribution 
$$p_{\psi^{\otimes L}}=\underbrace{p_{\psi}\ast \cdots \ast p_{\psi}}_\text{$L$ times}$$
 corresponding to the total energy distribution for state $\psi^{\otimes L}$. This is the probability distribution for the random variable $\sum_{r=1}^L Z_r$, assuming each $Z_r$ has the distribution $p_\psi$. We show that for this distribution 
 \be
 \frac{1}{2} \sum_{n\in\mathbb{Z}} | p_{\psi^{\otimes L}}(n)-p_{\psi^{\otimes L}}(n+1) | <1 .
 \ee
 
To show this we argue that  the random variable $\sum_{r=1}^L Z_r$ takes both values $K$ and $K-1$, with a non-zero probability, where 
 \be
K=n_\text{min} L+\sum_{i: x_i>0} x_i n_i=n_\text{min} L-\sum_{i: x_i<0} x_i n_i +1\ .
\ee
To see this, first consider the following event: For each $n_i\in \mathcal{N}_\psi$ with $x_i>0$, suppose  $x_i$ different random variables in the set $\{Z_r: 1\le r\le L \}$  take the value $n_\text{min}+n_i$, and the rest of the random variables, i.e. $L-\sum_{i: x_i>0} x_i$ random variables, take the value $n_\text{min}$. In this event,  the sum  $\sum_{r=1}^L Z_r$ will be equal to $K=n_\text{min} L+\sum_{i: x_i>0} x_i n_i $. It follows that 
\be
p_{\psi^{\otimes L}}(K)> 0\ .
\ee
Next, consider a different event in which for each $x_i<0$, $|x_i|$ different random variables in the set $\{Z_r: 1\le r\le L \}$ take the value $n_\text{min}+n_i$, and the rest of the random variables in this set, i.e. $L-\sum_{i: x_i<0} |x_i|$, take the value $n_\text{min}$. In this event  the sum  $\sum_{r=1}^L Z_r$ will be equal to $K-1=n_\text{min} L+\sum_{i: x_i<0} |x_i| n_i $. It follows that 
\be
p_{\psi^{\otimes L}}(K-1)> 0 \ .
\ee
We conclude that the distribution 
$p_{\psi^{\otimes L}}=\underbrace{p_{\psi}\ast \cdots \ast p_{\psi}}_\text{$L$ times}$
 is nonzero for  both $K$ and $K-1$. This immediately implies  
\be
\frac{1}{2}\sum_n |p_{\psi^{\otimes L}}(n)- p_{\psi^{\otimes L}}(n+1) | <1\ ,
\ee
and proves lemma \ref{lem:app:54}.

 \newpage
\section{Quantum Fisher Information: Preliminaries}\label{Sec:Fisher}
Here, we briefly review some useful properties of Quantum Fisher Information (QFI). See e.g.  \cite{Holevo:book, Helstrom:book, petz2011introduction, paris2009quantum, BraunsteinCaves:94}  for further details.

QFI for a general family of states   $\rho_t$ labeled by the real continuous parameter $t$ is defined by
\be\label{Fisher}
I_F(t)=\Tr(\rho_t L_t^2)\ ,
\ee
where $L_t$ is the \emph{symmetric logarithmic derivative}, defined via equation
\be
\dot{\rho}_t=\frac{1}{2}(\rho_t L_t+L_t\rho_t)\ .
\ee
In the special case of  $\rho_t=e^{-i t H}\rho e^{i t H}$ for a Hermitian operator $H$, we find
\be\label{Fish-app11}
\dot{\rho}(t)= -i [H,\rho_t]=\frac{1}{2}(\rho_t L_t+L_t\rho_t)\ .
\ee
Using    the spectral decomposition of state $\rho$, as $\rho=\sum_k p_k |\phi_k \rangle\langle\phi_k|$ we find
\be
2i\times  \frac{p_k-p_j}{p_k+p_j} \langle\phi_k|L_t |\phi_j\rangle =\langle\phi_k| e^{i H t} L_t e^{-i H t} |\phi_j\rangle\ .
\ee
Putting this back into Eq.(\ref{Fisher}) we find
\begin{align}
I_F(t)&=\Tr({\rho}_t L_t^2)=\Tr({\rho} L_0^2)=I_F ,
\end{align}
i.e., the QFI is independent of the parameter $t$, and therefore we denote it by $I_F$. Then, it can be easily seen that
\bes
\begin{align}
I_F=I_F(t)&=\Tr({\rho}_t L_t^2)\\ &=\sum_{k,j} p_k |\langle\phi_k|L_0|\phi_j\rangle|^2\\ &=4\sum_{k,j} p_k  \frac{(p_k-p_j)^2}{(p_k+p_j)^2} |\langle\phi_k|H|\phi_j\rangle|^2\\ &=2\sum_{k,j} (p_k+p_j)  \frac{(p_k-p_j)^2}{(p_k+p_j)^2} |\langle\phi_k|H|\phi_j\rangle|^2\\ &=2\sum_{k,j}   \frac{(p_k-p_j)^2}{p_k+p_j} |\langle\phi_k|H|\phi_j\rangle|^2\ .
\end{align}
\ees
Note that if $\rho$ is not full rank,   we can apply the above formula to the state $\rho_\epsilon=(1-\epsilon)\rho+ \epsilon I/d$ for a vanishing $\epsilon\rightarrow 0$,  where $I/d$ is the totally mixed state. 
Using this technique, or  applying the definition in Eq.(\ref{Fish-app11})  we find that for pure states QFI is four time the variance of state $\psi$ with the respect to the observable $H$, i.e.
\be
I_F=4\times ( \langle\psi|H^2|\psi\rangle- \langle\psi|H|\psi\rangle^2)=4V_H(\psi)\ .
\ee
In the following, we use the notation $F_H(\rho)$ to denote QFI for the family of state $e^{-i H t}\rho e^{i H t}: t\in\mathbb{R}$. In summary,   for a system with state $\rho$ and Hamiltonian $H$, QFI is given by
\be
F_H(\rho)=2\sum_{i,j}   \frac{(p_k-p_j)^2}{p_k+p_j} |\langle\phi_k|H|\phi_j\rangle|^2\ .
\ee
QFI is closely related to fidelity. Let 
\be
\text{Fid}(\rho,\sigma)\equiv\Tr\Big(\sqrt{\sqrt{\rho}\sigma\sqrt{\rho} }\Big)^2=\|\sqrt{\rho}\sqrt{\sigma}\|^2_1 \ ,
\ee
be the fidelity of states $\rho$ and $\sigma$. Consider the fidelity of 
state $\rho$ and  $e^{-i H t}\rho e^{i H t}$ as a function of $t$. For $t=0$ fidelity takes its maximum value, which is one. Therefore, its first derivative with respect to $t$ vanishes, i.e.
\be
\frac{d}{dt}\  \text{Fid}(\rho, e^{-i H t}\rho e^{i H t})\Bigr|_{t=0}\ =0 .
\ee
Furthermore, it turns out that the second derivative is given by QFI, i.e. 
\be\label{conec}
F_H(\rho) =-4 \frac{d^2}{dt^2}\  \sqrt{\text{Fid}(\rho, e^{-i H t}\rho e^{i H t})}\Bigr|_{t=0}\ .
\ee
Therefore, roughly speaking, QFI determines how fast states $\rho$ and $e^{-i H t}\rho e^{i H t}$ become distinguishable.

QFI has the following important properties \cite{Holevo:book, Helstrom:book, petz2011introduction, paris2009quantum, BraunsteinCaves:94}:

\begin{enumerate}
\item \textbf{Faithfulness}: It is zero if, and only if, state is incoherent, i.e., diagonal  in the energy eigenbasis. This can be seen using the fact that  $[\rho,H]=0$ if and only if  for all $i$ and $j$,
\be
\langle\phi_i| [\rho,H]|\phi_j\rangle =(p_i-p_j) \langle\phi_i|H|\phi_j\rangle=0\ ,
\ee  
where $\rho=\sum_j p_j |\phi_j\rangle\langle\phi_j|$ is the spectral decomposition of $\rho$. Using the formula
\be
F_H(\rho)=2\sum_{i,j}   \frac{\Big[(p_i-p_j) |\langle\phi_i|H|\phi_j\rangle|\Big]^2}{p_i+p_j} \ ,
\ee
we can easily see that this is the case if, and only if, $F_H(\rho)= 0$.

\item  \textbf{Monotonicity}: It is non-increasing under any TI operation $\mathcal{E}_\text{TI}$, i.e. 
\be
F_H(\mathcal{E}_\text{TI}(\rho))\le F_H(\rho) .
\ee
 In particular, it remains invariant under energy-conserving unitaries. This can be easily seen, e.g., using the connection between QFI and the fidelity, and the fact that fidelity satisfies information processing inequality, i.e.
\be
\text{Fid}(\rho,\sigma)\le \text{Fid}(\mathcal{E}(\rho),\mathcal{E}(\sigma)) \ ,
\ee
for any trace-preserving completely positive map $\mathcal{E}$.

\item \textbf{Additivity}: For a composite non-interacting system with the total Hamiltonian $H_\text{tot}=H_1\otimes  I_2+I_1\otimes  H_2$, QFI is additive for uncorrelated states, i.e. $F_{H_\text{tot}}(\rho_1\otimes \rho_2)=F_{H_1}(\rho_1)+F_{H_2}(\rho_2)$.  This can be seen, e.g.,  from the multiplicativity of the fidelity for tensor products,  together with the connection between fidelity and QFI in Eq.(\ref{conec}).

\item \textbf{Convexity}: For any $0 \le p\le 1$ and states $\rho$ and $\sigma$, $F_H(p\rho+(1-p) \sigma)\le p F_H(\rho)+(1-p) F_H(\sigma)$.  This also can be seen from the concavity  of the fidelity  together with the connection between fidelity and QFI in Eq.(\ref{conec}).
\end{enumerate}

\newpage

\section{Quantum Fisher Information in the single-shot regime}\label{Sec:QFIsingle}
In this section we prove theorem \ref{Thm:Fisher} in the paper. For completeness we repeat the statement of this theorem. \\

\noindent \textbf{Theorem}\ (Restatement of theorem \ref{Thm:Fisher}) \emph{QFI of  system $S$ with state $\rho$ and Hamiltonian $H_S$, is four times the minimum  energy variance of all purifications of $\rho$ with auxiliary closed systems not interacting with $S$, i.e.
\begin{align}\label{single-shot}
F_{H_S}(\rho)=\min_{\Phi_\rho, H_A}\ F_{H_\text{tot}}(\Phi_\rho)= 4\times\min_{\Phi_\rho, H_A}\ V_{H_\text{tot}}(\Phi_\rho)\ ,
\end{align}
where the minimization is over all pure states $|\Phi_\rho\rangle_{SA}$ satisfying $\Tr_A(|\Phi_\rho\rangle\langle\Phi_\rho|_{SA})=\rho$, and all Hamiltonians of the purifying system $A$.
}\\

\noindent\textbf{Previous Result:} We note that a closely related result has been previously obtained in the context of quantum metrology \cite{escher2011general} (See also \cite{escher2012quantum}). This reference 
considers a general family of states $\rho_x$ of  system S and shows that there exists a purification $|\Psi_x\rangle_{SA}$ of this system, such that the QFI of $\rho_x$ is equal to QFI of $|\Psi_x\rangle_{SA}$. Applying this
result together with the Uhlmann's theorem for the family of states $\rho_t=e^{-i H_S t}\rho e^{i H_S t}$, one finds that there is a purification of $\rho$, denoted by $|\Psi\rangle_{SA}$, and a family of unitaries $U_A(t)$ on the purifying system A such that the QFI of the family of states $\rho_t=e^{-i H t}\rho e^{i H t}$ is  equal to the QFI of the family of states 
$[e^{-i H_S t}\otimes U_A(t)]|\Psi_x\rangle_{SA}$ \cite{escher2012quantum}.  However, note that this argument does not imply that $U_A(t)$ itself can be written as  $e^{-i H_A t}$ for a time-independent Hamiltonian $H_A$. In particular, the family of unitaries $U_A(t)$  found by applying the Uhlmann's theorem is not necessarily differentiable and, in fact, is not unique for  degenerate $\rho$ . \\

In the following we present two different proofs of this theorem, one proof is via direct minimization and the second proof, similar to the approach of \cite{escher2011general},  is based on the connection of fidelity and QFI together with Uhlmann's theorem. However, in the second proof we assume $\rho$ is non-degenerate and full-rank. \\

\subsection{First proof of theorem \ref{Thm:Fisher} via direct minimization}

Consider system $S$ with Hamiltonian $H_S$ and state $\rho$ with the spectral decomposition $\rho=\sum_i {p_i} |\phi_i\rangle\langle \phi_i|$. Consider an auxiliary system $A$ with Hamiltonian $H_A$. Let  $|\Phi_\rho\rangle$ be a pure state of systems $A$ and $S$ which purifies state $\rho_S$, such that 
\be
\rho_S=\Tr_A(|\Phi_\rho\rangle\langle\Phi_\rho|) ,
\ee
where the partial trace is over system $A$. 

Let $H_\text{tot}$ be the total Hamiltonian of the system $S$ and auxiliary  system $A$, i.e.
\be
H_\text{tot}=H_S\otimes I_A+I_S\otimes H_A \ .
\ee
We are interested in finding the purification $|\Phi_\rho\rangle$ and  Hamiltonian $H_A$ for which the total energy variance   
\be
V_{H_\text{tot}}(|\Phi_\rho\rangle)= \langle \Phi_\rho|H^2_\text{tot}|\Phi_\rho\rangle-\langle \Phi_\rho|H_\text{tot}|\Phi_\rho\rangle^2
\ee
is minimized.  Since all purifications of $\rho$ are equal up to a unitary on system $A$ we can fix the purification to be 
\be
|\Phi_\rho\rangle=\sum_i \sqrt{p_i} |\phi_i\rangle|\phi_i\rangle=(\sqrt{\rho}\otimes I) \sum_i  |\phi_i\rangle|\phi_i\rangle\ ,
\ee
and only vary the Hamiltonian $H_A$. For this purification the reduced state on system $A$ is also state $\rho$, i.e.
\be
\Tr_S(|\Phi_\rho\rangle\langle\Phi_\rho|)=\rho .
\ee
Next, note that by adding a proper multiple of the identity operator to $H_A$, we can always make the expectation value of the total energy zero, such that 
\be
\langle \Phi_\rho|H_\text{tot} |\Phi_\rho\rangle=0 .
\ee
But, adding a multiple of the identity operator to the Hamiltonian  does not change the energy variance. Therefore, in the following, without loss of generality, we assume the expectation value of the total Hamiltonian $H_\text{tot} $ is zero. This means that the energy variance is given by the following expectation value 
\begin{align}
V_{H_\text{tot}}(|\Phi_\rho\rangle)&= \langle \Phi_\rho|H^2_\text{tot}|\Phi_\rho\rangle\\ &=\langle \Phi_\rho|H^2_S\otimes I_A |\Phi_\rho\rangle+\langle \Phi_\rho| I_S\otimes H^2_A |\Phi_\rho\rangle+2 \langle \Phi_\rho| H_S\otimes H_A |\Phi_\rho\rangle\ .
\end{align}
Then, using $|\Phi_\rho\rangle=(\sqrt{\rho}\otimes I) \sum_i  |\phi_i\rangle|\phi_i\rangle$ we find
\bes\label{Var}
\begin{align}
V_{H_\text{tot}}(|\Phi_\rho\rangle)&=\Tr(\rho H^2_S)+\Tr(\rho {H^2_A})+2\Tr(\sqrt{\rho} H_S \sqrt{\rho} H^T_A)\\ &=\Tr(\rho H^2_S)+\Tr(\rho ({H^T_A})^2)+2\Tr(\sqrt{\rho} H_S \sqrt{\rho} H^T_A)\ ,
\end{align}
\ees
where $T$ denotes the transpose relative to the eigenbasis of $\rho$, i.e. $\{|\phi_j\rangle\}_j$. Here, to get the second line we have used 
$\Tr(\rho ({H^T_A})^2)=\Tr(\rho {H^2_A})$, which follows from the fact that  the trace of any operator remains invariant under transpose, together with the fact that $\rho$ is diagonal in   $\{|\phi_j\rangle\}_j$ basis, and so $\rho^T=\rho$.

Next, we consider small variations of $H^T_A$, denoted by $\delta H^T_A$. At the point where the variance $V_{H_\text{tot}}(|\Phi_\rho\rangle)$  is minimized, we have  
\begin{align}
\frac{\delta V_{H_\text{tot}}(|\Phi_\rho\rangle)}{\delta H^T_A}  =\frac{\delta \langle \Phi_\rho|H^2_\text{tot}|\Phi_\rho\rangle}{\delta H^T_A} =0\ .
\end{align}
Using Eq.(\ref{Var}) it can be easily seen that
\bes
\begin{align}
\delta V_{H_\text{tot}}(|\Phi_\rho\rangle)&=\delta \langle \Phi_\rho|H^2_\text{tot}|\Phi_\rho\rangle\\ &=\Big[\Tr( \rho (\delta H^T_A) {H^T_A}  )+\Tr(\rho  {H^T_A}\delta H^T_A)+2\Tr(\sqrt{\rho} H_S \sqrt{\rho} \delta H^T_A)\Big]\\ &\ \ \ \ +\mathcal{O}((\delta H^T_A)^2) .
\end{align}
\ees
At the point where the variance is minimized,  this variation vanishes up to the first order with respect to $\delta H^T_A$, for all variations $\delta H^T_A$. This leads to the equation
\begin{align}\label{Eq13}
\frac{{H^T_A} \rho+\rho  {H^T_A}}{2}=-\sqrt{\rho} H_S \sqrt{\rho} \ ,
\end{align}
which should be satisfied by $H^T_A$ for which the variance is minimized. 

Next, we  find  ${H^T_A}$ which satisfies this equation. To solve this equation we vectorize both side, using the relation
\be
Y=\sum_{i,j} Y_{i,j} |\phi_i\rangle\langle \phi_j| \longleftrightarrow \text{vec}(Y)=\sum_{i,j} Y_{i,j} |\phi_i\rangle|\phi_j\rangle\ ,
\ee
which implies 
\be\label{vec}
\text{vec}(XYZ)=(X\otimes Z^T)\text{vec}(Y) .
\ee
Using this notation we can rewrite Eq.(\ref{Eq13}) as
\begin{align}
[I\otimes \rho^T+\rho\otimes I] \text{vec}({H^T_A})=-2  [\sqrt{\rho}\otimes \sqrt{\rho}^T] \text{vec}({H_S})\ .
\end{align}
This equation implies
\begin{align}
\text{vec}({H^T_A})=-2  [I\otimes \rho^T+\rho\otimes I]^{-1} [\sqrt{\rho}\otimes \sqrt{\rho}^T] \text{vec}({H_S})\ .
\end{align}
Using the decomposition $\rho=\sum_{i} p_i |\phi_i\rangle\langle\phi_i|$ we find
\begin{align}
\text{vec}({H^T_A})&=-2  [I\otimes \rho^T+\rho\otimes I]^{-1} [\sqrt{\rho}\otimes \sqrt{\rho}^T] \text{vec}({H_S})\\
 &=-2  \Big[\sum_{i,j} (p_i+p_j) |\phi_i\rangle\langle\phi_i|\otimes|\phi_j\rangle\langle\phi_j|  \Big]^{-1} [\sqrt{\rho}\otimes \sqrt{\rho}] \text{vec}({H_S})
 \\
 &=-2  \Big[\sum_{i,j} (p_i+p_j)^{-1} |\phi_i\rangle\langle\phi_i|\otimes|\phi_j\rangle\langle\phi_j|  \Big] [\sqrt{\rho}\otimes \sqrt{\rho}] \text{vec}({H_S})
 \\
 &=-2  \Big[\sum_{i,j} \frac{\sqrt{p_i p_j} } {p_i+p_j} |\phi_i\rangle\langle\phi_i|\otimes|\phi_j\rangle\langle\phi_j|  \Big] \text{vec}({H_S})\ .
\end{align}
Using Eq.(\ref{vec}) this implies
\begin{align}
{H^T_A}=-2 \sum_{i,j} \frac{\sqrt{p_i p_j} } {p_i+p_j} |\phi_i\rangle\langle\phi_i| H_S |\phi_j\rangle\langle\phi_j|  \ ,
\end{align}
or, equivalently, 
\begin{align}
{H_A}=-2 \sum_{i,j} \frac{\sqrt{p_i p_j} } {p_i+p_j} |\phi_j\rangle\langle\phi_i| H_S |\phi_j\rangle\langle\phi_i|  . 
\end{align}
Note that 
\bes
\begin{align}
\Tr(\rho{H_A})&=-2 \sum_{i,j} \frac{\sqrt{p_i p_j} } {p_i+p_j} \Tr(\rho|\phi_j\rangle\langle\phi_i| H_S |\phi_j\rangle\langle\phi_i|)\\ &=- \sum_{i} p_i \langle\phi_i| H_S |\phi_i\rangle\\ &=-\Tr(\rho H_S)  . 
\end{align}
\ees
It follows that the expectation value of the total Hamiltonian is zero, i.e. $ \langle \Phi_\rho|H_\text{tot} |\Phi_\rho\rangle=0 .$

For this optimal $H_A$ we have
\bes\label{Eq09}
\begin{align}
\Tr(\rho{H^2_A}) &=4  \Tr(\rho\Big[\sum_{i,j} \frac{\sqrt{p_i p_j} } {p_i+p_j} |\phi_j\rangle\langle\phi_i| H_S |\phi_j\rangle\langle\phi_i| \Big]\Big[ \sum_{k,l} \frac{\sqrt{p_k p_l} } {p_k+p_l} |\phi_l\rangle\langle\phi_k| H_S |\phi_l\rangle\langle\phi_k| \Big])\\ &=4 \sum_{i,j} \frac{{p_i p^2_j} } {(p_i+p_j)^2} |\langle\phi_i| H_S |\phi_j\rangle|^2\\ &=2 \sum_{i,j} \frac{{p_i p^2_j}+{p_j p^2_i} } {(p_i+p_j)^2} |\langle\phi_i| H_S |\phi_j\rangle|^2\\ &=2 \sum_{i,j} \frac{{p_i p_j}} {p_i+p_j} |\langle\phi_i| H_S |\phi_j\rangle|^2\ ,
\end{align}
\ees
where to get the third line we have used the fact that $\frac{1} {(p_i+p_j)^2} |\langle\phi_i| H_S |\phi_j\rangle|^2$ is symmetric with respect to $i$ and $j$.

Similarly, 
\bes
\begin{align}
\Tr(\sqrt{\rho} {H_S} \sqrt{\rho} {H^T_A})&=-2  \Tr\Big( {H_S} \sqrt{\rho}   \sum_{i,j} \frac{\sqrt{p_i p_j} } {p_i+p_j} |\phi_i\rangle\langle\phi_i| H_S |\phi_j\rangle\langle\phi_j| \sqrt{\rho} \Big)\\ &=-2  \Tr\Big( {H_S}  \sum_{i,j} \frac{{p_i p_j} } {p_i+p_j} |\phi_i\rangle\langle\phi_i| H_S |\phi_j\rangle\langle\phi_j|  \Big)\\ &=-2 \sum_{i,j} \frac{{p_i p_j} } {p_i+p_j} |\langle\phi_i| H_S |\phi_j\rangle|^2\ \\ &=-\Tr(\rho{H^2_A})\ ,
\end{align}
\ees
where to get the last line we have used Eq.(\ref{Eq09}).

Putting these into Eq.(\ref{Var})  we find
\bes
\begin{align}
V_{H_\text{tot}}(|\Phi_\rho\rangle)&=\Tr(\rho H^2_S)+\Tr(\rho {H^2_A})+2\Tr(\sqrt{\rho} H_S \sqrt{\rho} H^T_A)\\ &=\Tr(\rho H^2_S)-\Tr(\rho {H^2_A}) \\ &= \Tr(\rho H^2_S)-2 \sum_{i,j} \frac{{p_i p_j} } {p_i+p_j} |\langle\phi_i| H_S |\phi_j\rangle|^2
\\ &= \sum_i p_i  \langle\phi_i|H^2_S|\phi_i\rangle-2 \sum_{i,j} \frac{{p_i p_j} } {p_i+p_j} |\langle\phi_i| H_S |\phi_j\rangle|^2
\\ &= \sum_{i,j} p_i  |\langle\phi_i|H_S|\phi_j\rangle|^2-2 \sum_{i,j} \frac{{p_i p_j} } {p_i+p_j} |\langle\phi_i| H_S |\phi_j\rangle|^2
\\ &= \sum_{i,j} \frac{(p_i+p_j)^2}{2(p_i+p_j)}  |\langle\phi_i|H_S|\phi_j\rangle|^2-2 \sum_{i,j} \frac{{p_i p_j} } {p_i+p_j} |\langle\phi_i| H_S |\phi_j\rangle|^2
\\ &= \sum_{i,j} \frac{(p_i-p_j)^2}{2(p_i+p_j)}  |\langle\phi_i|H_S|\phi_j\rangle|^2\ , 
\end{align}
\ees
where to get the fifth line we have used the decomposition of the identity operator as $\sum_j |\phi_j\rangle\langle\phi_j|$, and to get the sixth line we have used the fact that $ |\langle\phi_i|H_S|\phi_j\rangle|^2$ is symmetric with respect to $i$ and $j$.

Comparing this with the formula for QFI 
\be
F_H(\rho)=2\sum_{i,j}   \frac{(p_k-p_j)^2}{p_k+p_j} |\langle\phi_k|H|\phi_j\rangle|^2\ ,
\ee
we find that
\begin{align}
V_{H_\text{tot}}(|\Phi_\rho\rangle)&=\frac{1}{4} F_H(\rho)\ . 
\end{align}
This completes the proof.

It is also worth noting that because $\Tr(H_S \rho)=-\Tr(H_A \rho)$ and $V_{H_\text{tot}}(|\Phi_\rho\rangle)=\Tr(\rho H^2_S)-\Tr(\rho {H^2_A}) $, we have
\be
\frac{1}{4} F_H(\rho)=V_{H_\text{tot}}(|\Phi_\rho\rangle)=\Tr(\rho H^2_S)-\Tr(\rho {H^2_A})=V_{H_S}(\rho)-V_{H_A}(\rho)\ ,
\ee
i.e., QFI is four times the difference between energy variance of system S and the auxiliary system A. 

\subsection*{{Fisher information of the purifying system}}
The above argument shows that if  the Hamiltonian of the auxiliary system is 
\begin{align}
{H_A}=-2 \sum_{i,j} \frac{\sqrt{p_i p_j} } {p_i+p_j} |\phi_j\rangle\langle\phi_i| H_S |\phi_j\rangle\langle\phi_i|  , 
\end{align}
then for the total Hamiltonian $H_S\otimes I_A+I_S\otimes H_A$ of the composite system  $S$ and $A$, the QFI of state $|\Phi_\rho\rangle=\sum_i \sqrt{p_i} |\phi_i\rangle|\phi_i\rangle\ $, is equal to the QFI for system $S$. In other words, by discarding system $A$ the QFI does not decrease. It is interesting to note that this happens even though the QFI of the auxiliary system $A$ is nonzero.

To  calculate the QFI of the auxiliary system, first note that the reduced state of system $A$ in this case is also $\rho$. Then, using the formula for QFI we find   
\begin{align}
F_{H_A}(\rho)&= 2\sum_{i,j} \frac{(p_i-p_j)^2}{(p_i+p_j)}  |\langle\phi_i|H_A|\phi_j\rangle|^2\\ 
&= \sum_{i,j} \frac{2(p_i-p_j)^2}{(p_i+p_j)}  \frac{4 p_i p_j } {(p_i+p_j)^2}  |\langle\phi_i|H_S|\phi_j\rangle|^2\\ &= \sum_{i,j} \frac{8 p_i p_j (p_i-p_j)^2}{(p_i+p_j)^3}   |\langle\phi_i|H_S|\phi_j\rangle|^2 
\end{align}
Therefore, if the system $S$ is in  a full rank density operator with nonzero Fisher information, then the Fisher information for the auxiliary system will be necessarily nonzero, $F_{H_A}(\rho)>0$. 

We conclude that for state $|\Phi_\rho\rangle=\sum_i \sqrt{p_i} |\phi_i\rangle|\phi_i\rangle\ $, and  for this choice of Hamiltonian $H_A$,  by discarding system $A$, the Fisher information   does not decrease, even though the process is irreversible, and the discarded system itself carries non-zero Fisher information.

\subsection*{Comparison with the Wigner-Yanase skew Information}
In the above argument we found the optimal Hamiltonian of auxiliary system for the joint state  $|\Phi_\rho\rangle=\sum_i \sqrt{p_i} |\phi_i\rangle|\phi_i\rangle\ $. Since for this joint state the reduced state of both subsystems $A$ and $S$ is $\rho$, a natural choice for the Hamiltonian $H_A$  which minimizes the total energy variance could be $H_A=-H^T_S=-H^\ast_S$, where $T$ denotes the transpose relative to the eigenbasis of $\rho$ and $\ast$ denotes complex conjugation in this basis. Then, the total energy variance is given by
\begin{align}
V_{H_\text{tot}}(|\Phi_\rho\rangle)&= \langle \Phi_\rho|H^2_\text{tot}|\Phi_\rho\rangle-\langle \Phi_\rho|H_\text{tot}|\Phi_\rho\rangle^2\\ &=\langle \Phi_\rho|H^2_S\otimes I_A |\Phi_\rho\rangle+\langle \Phi_\rho| I_S\otimes H^2_A |\Phi_\rho\rangle+2 \langle \Phi_\rho| H_S\otimes H_A |\Phi_\rho\rangle\\ &=\Tr(\rho H^2_S)+\Tr(\rho {H^2_A})+2\Tr(\sqrt{\rho} H_S \sqrt{\rho} H^T_A)\\ &=2\Tr(\rho H^2_S)-2\Tr(\sqrt{\rho} H_S \sqrt{\rho} H_S) ,
\end{align}
where, in the first line we have used the fact that for $H_A=-H^T_S$ the expectation value of total energy is zero. 

Interestingly, the last line is twice the Wigner-Yanase skew information 
\be
W_H(\rho)=-\frac{1}{2}\Tr\big([H_S,\sqrt{\rho}]^2\big)\ ,
\ee
which is also  a measure of asymmetry relative to time-translations. Therefore, for this choice of $H_A$ we find
\begin{align}
V_{H_\text{tot}}(|\Phi_\rho\rangle)=2 W_H(\rho) .
\end{align}

\subsection{Second Proof of theorem \ref{Thm:Fisher} via Uhlmann's theorem and Differentiability  of Singular Value Decomposition}
Here, we prove the result under the extra assumption that $\rho$ is non-degenerate and full-rank.  

We are interested to find a purification $|\Phi_\rho\rangle_{SA}$ of state $\rho$ and a Hamiltonian $H_A$ acting on the purifying system $A$, such that
\be
V_{H_\text{tot}}(|\Phi_\rho\rangle)= \langle \Phi_\rho|H^2_\text{tot}|\Phi_\rho\rangle-\langle \Phi_\rho|H_\text{tot}|\Phi_\rho\rangle^2
\ee
is minimized, where $H_\text{tot}=H_S\otimes I_A+I_S\otimes H_A$. First, using the fact that QFI is non-increasing under partial trace, and the QFI for pure states is 4 times the variance, we find that
\be\label{upper-jul}
\frac{1}{4} F_{H_S}(\rho) \le  V_{H_\text{tot}}(|\Phi_\rho\rangle)\ .
\ee
Next, we prove that there exists a Hamiltonian $H_A$ for which this holds as an equality.   To prove this we use the connection between fidelity and QFI.  Recall that for a system with state $\sigma$ and Hamiltonian $H$, the QFI is equal to
\be\label{Fish-jul}
F_H(\sigma)=-4  \frac{d^2}{dt^2} \sqrt{\text{Fid}(\sigma, e^{-i H t}\sigma e^{i H t})}\Bigr|_{t=0}\ .
\ee
Then, we use the following fact which is proven later:
 \begin{lemma}\label{lem-item1}
 Let $\rho$ be a full-rank density operator with non-degenerate eigenvalues, pure state $|\Phi_\rho\rangle_{SA}$ be a purification of $\rho$, and $H_S$ be a bounded Hamiltonian. Then, there exists a family of unitary operators $V_A(t)$, satisfying 
 \be\label{item1}
 \sqrt{\text{Fid}(\rho,e^{-i H_S t}\rho e^{i H_S t})}= \langle\Phi_\rho|[e^{-i H_S t}\otimes V_A(t) ]  |\Phi_\rho\rangle_{SA}\ ,
 \ee
 for all $t$, such that (i)  $V_A(0)$ is the identity operator, and (ii)  $V_A(t)$ is infinitely  differentiable in a finite neighborhood around $t=0$\ .
\end{lemma}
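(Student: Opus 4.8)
The plan is to prove the lemma by an explicit construction based on the polar decomposition, together with the real-analyticity of the matrix square root on positive-definite operators. Since all purifications of $\rho$ differ by a unitary acting on $A$, and conjugating a candidate $V_A(t)$ by that unitary preserves both the property $V_A(0)=I$ and smoothness, it suffices to work with the canonical purification $|\Phi_\rho\rangle_{SA}=\sum_i\sqrt{p_i}\,|\phi_i\rangle_S|\phi_i\rangle_A$, where $\rho=\sum_i p_i|\phi_i\rangle\langle\phi_i|$. For this purification one has the vectorization identity $\langle\Phi_\rho|(X_S\otimes Y_A)|\Phi_\rho\rangle=\Tr(\sqrt{\rho}\,X\sqrt{\rho}\,Y^T)$, with $T$ the transpose in the basis $\{|\phi_i\rangle\}$, as already used around Eq.(\ref{vec}). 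Setting $M(t):=\sqrt{\rho}\,e^{-iH_S t}\sqrt{\rho}$ and using $\sqrt{e^{-iH_St}\rho e^{iH_St}}=e^{-iH_St}\sqrt{\rho}\,e^{iH_St}$ together with the unitary invariance of the trace norm, I would compute $\sqrt{\text{Fid}(\rho,e^{-iH_St}\rho e^{iH_St})}=\|\sqrt{\rho}\sqrt{e^{-iH_St}\rho e^{iH_St}}\|_1=\|M(t)\|_1=\Tr\,|M(t)|$.

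Next I would introduce the polar decomposition $M(t)=W(t)\,|M(t)|$ with $|M(t)|=(M(t)^\dagger M(t))^{1/2}$. Because $\rho$ is full-rank and $e^{-iH_St}$ is unitary, $M(t)$ is invertible for every $t$, so $|M(t)|$ is positive-definite and the unitary $W(t)=M(t)\,|M(t)|^{-1}$ is uniquely determined. Defining $V_A(t):=\overline{W(t)}$, the entrywise complex conjugate in the basis $\{|\phi_i\rangle\}$ (which is again unitary), the transpose identity gives $\langle\Phi_\rho|(e^{-iH_St}\otimes V_A(t))|\Phi_\rho\rangle=\Tr(\sqrt{\rho}\,e^{-iH_St}\sqrt{\rho}\,W(t)^\dagger)=\Tr(W(t)|M(t)|W(t)^\dagger)=\Tr\,|M(t)|=\sqrt{\text{Fid}(\rho,e^{-iH_St}\rho e^{iH_St})}$, which is exactly Eq.(\ref{item1}). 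At $t=0$, $M(0)=\rho>0$, hence $|M(0)|=\rho$, $W(0)=I$, and $V_A(0)=I$, establishing property (i).

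For property (ii) I would argue that $V_A(t)$ is real-analytic near $t=0$. Since $H_S$ is bounded, $e^{-iH_St}=\sum_n\frac{(-it)^n}{n!}H_S^n$ converges in operator norm for all $t$, so $M(t)$, and hence $M(t)^\dagger M(t)$, is an entire operator-valued function of $t$. At $t=0$ one has $M(0)^\dagger M(0)=\rho^2$, which is positive-definite, and positive-definiteness is an open condition, so $M(t)^\dagger M(t)$ remains positive-definite on a neighborhood of $0$; on the open set of positive-definite operators the principal square root is analytic, e.g. via the contour representation $X^{1/2}=\frac{1}{2\pi i}\oint_\Gamma z^{1/2}(z-X)^{-1}\,dz$ with $\Gamma$ encircling $\mathrm{spec}(X)$ in the right half-plane. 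Therefore $|M(t)|$ is analytic and invertible near $0$, so $|M(t)|^{-1}$ and $W(t)=M(t)|M(t)|^{-1}$ are analytic, and conjugating in the fixed basis preserves real-analyticity, so $V_A(t)=\overline{W(t)}$ is $C^\infty$ in a finite neighborhood of $t=0$. I expect the only genuinely delicate point — and the sole place full-rankness is used — to be this smoothness of the polar factor: it is precisely invertibility of $M(t)$ that makes the unitary part $W(t)$ unique and smooth, whereas for a rank-deficient $\rho$ the kernel of $M(t)$ would have to be tracked and $W(t)$ could lose even continuity. (The statement is morally Uhlmann's theorem; the construction above is its concrete, differentiable incarnation in this time-translation setting.)
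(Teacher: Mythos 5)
Your construction is correct, and it reaches the same unitary as the paper by a genuinely different route. The paper writes the singular value decomposition $\sqrt{\rho}\,e^{-iH_St}\sqrt{\rho}=L(t)D(t)R(t)$ and sets $V_A^T(t)=R^\dagger(t)L^\dagger(t)$; since $M(t)\,(M(t)^\dagger M(t))^{-1/2}=L(t)R(t)$, your polar factor $W(t)$ is exactly this same Uhlmann-optimal unitary, and your direct trace computation $\Tr(M(t)W(t)^\dagger)=\Tr|M(t)|=\sqrt{\text{Fid}}$ replaces the appeal to Uhlmann's theorem. The real divergence is in the smoothness step: the paper invokes the smooth-SVD theorem of Dieci--Eirola, which requires distinct singular values and is precisely why the non-degeneracy hypothesis on $\rho$ appears there, whereas you obtain analyticity of $|M(t)|=(M(t)^\dagger M(t))^{1/2}$ from the holomorphic functional calculus on the open set of positive-definite operators, which needs only invertibility of $M(t)$, i.e.\ full rank. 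So your argument is self-contained (no external citation), proves a slightly stronger statement in which non-degeneracy can be dropped, and, unlike the paper's write-up, explicitly verifies $V_A(0)=I$ and reduces a general purification to the canonical one. Two minor points worth tightening: purifications with an ancilla of larger dimension are related by an isometry rather than a unitary, so after conjugating you should extend $UV_A(t)U^\dagger$ by the identity on the orthogonal complement of the range of $U$ (this preserves the matrix element, unitarity, $V_A(0)=I$, and smoothness); and when asserting real-analyticity of $\overline{W(t)}$ you should note that $M(t)^\dagger M(t)=\sqrt{\rho}\,e^{-iH_St}\rho\,e^{iH_St}\sqrt{\rho}$ is real-analytic in the real variable $t$ (it is not holomorphic in complex $t$ because of the adjoint), which is all the contour-integral argument requires.
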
 
Then, applying this lemma we find
\bes\label{tht}
\begin{align}
\frac{d^2}{dt^2}\sqrt{\text{Fid}(\rho,e^{-i H_S t}\rho e^{i H_S t})}\Big|_{t=0} &=  \frac{d^2}{dt^2} \Big|\langle\Phi_\rho|[e^{-i H_S t}\otimes V_A(t) ]  |\Phi_\rho\rangle\Big|_{t=0}\\ &=-V_{H_\text{tot}}(\Phi_\rho)
 \  ,
 \end{align}
 \ees
 where $H_\text{tot}=H_S\otimes I_A+I_S\otimes H_A$ and 
\be
H_A= i \frac{d}{dt}V_A(t)\Big|_{t=0}\ .
\ee
Here, to get the second line in Eq.(\ref{tht}) one can use the identity
\be
\Big(\frac{d^2}{dx^2} \Big|\langle\psi_0|\psi_x\rangle\Big|\Big)_{x=0}=\Big| \frac{d\langle\psi_x|}{dx}|\psi_x\rangle\Big|_{x=0}^2- \frac{d\langle\psi_x|}{dx}\frac{d|\psi_x\rangle}{dx}\Big|_{x=0} \ ,
\ee
which holds for any smooth family of states $|\psi_x\rangle$. 

Using Eq.(\ref{Fish-jul}), this in turn implies
\be
F_{H_S}(\rho)=4 \times V_{H_\text{tot}}(|\Phi_\rho\rangle)\ .
 \ee
Combining this with inequality (\ref{upper-jul}),  we find that
\be
F_{H_S}(\rho) = 4\times \min_{H_A} V_{H_\text{tot}}(|\Phi_\rho\rangle)
\ee 
Therefore, to complete the proof of theorem \ref{Thm:Fisher}  we need to prove lemma \ref{lem-item1}.

\subsection*{Smooth purifications (Proof of lemma \ref{lem-item1})}
Let $\rho=\sum_j p_j |\phi_j\rangle\langle\phi_j|$ be the spectral decomposition of $\rho$, and 
\be
|\Phi_\rho\rangle_{SA} =(\sqrt{\rho}\otimes I_A)\sum_j |\phi_j\rangle_S |\phi_j\rangle_A=\sum_j \sqrt{p_j} |\phi_j\rangle_S |\phi_j\rangle_A\  . 
\ee
Then, any purification of $e^{-i H t}\rho e^{i H t}$ can be written as $[e^{-iH_S t}\otimes V_A(t) ]|\Phi_\rho\rangle_{SA}$ for a unitary $V_A(t)$.  According to the Uhlmann's theorem, there exists a unitary $V_A(t)$ such that  
\be
\langle\Phi_\rho|[e^{-iH_S t}\otimes V_A(t) ]|\Phi_\rho\rangle_{SA} =  \sqrt{\text{Fid}(\rho, e^{-iH_S t}{\rho} e^{iH_S t})} =\|\sqrt{\rho} e^{-iHt} \sqrt{\rho} e^{iH_St} \|_1=\|\sqrt{\rho} e^{-iHt} \sqrt{\rho}  \|_1 \ ,
\ee
which means 
\be\label{aha}
\Tr\Big(\sqrt{\rho} e^{-i H_S t} \sqrt{\rho}\  V_A^T(t)\Big)=\big\|\sqrt{\rho} e^{-i H_S t} \sqrt{\rho}\big\|_1\ .
\ee
In fact, $V_A^T(t)$ can be determined directly from the singular value decomposition of $\sqrt{\rho} e^{-i H_S t} \sqrt{\rho}$. Consider the decomposition  
\be\label{SVD}
\sqrt{\rho}e^{-iH_S t} \sqrt{{\rho}} =L(t) D(t) R(t)\ ,
\ee
where $D(t)$ is a diagonal matrix with non-negative elements and $L(t)$ and $R(t)$ are unitary transformations.  Then, in Eq.(\ref{aha}) we can choose
\be
V^T_A(t)=[L(t)R(t)]^\dag=R^\dag(t) L^\dag(t)\ .
\ee
In general, singular value decomposition is not necessarily smooth.  The following result, guarantees  the smoothness of this decomposition under certain conditions:
\begin{theorem}(\cite{dieci1999smooth})
Let $A(t)$ be $k$-times continuously differentiable square matrix function with full-rank and distinct singular values. Then,  $A(t)$ has a Singular Value Decomposition which is $k$-times continuously differentiable.  
\end{theorem}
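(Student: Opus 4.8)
The plan is to reduce the statement to the classical fact that a \emph{simple} eigenvalue of a $C^k$ family of Hermitian matrices, together with its spectral projection, depends $C^k$-smoothly on the parameter. First I would pass from $A(t)$ to the Gram matrix $G(t):=A(t)^\dagger A(t)$. Since $A(t)$ is $C^k$, so is $G(t)$; since $A(t)$ has full rank with distinct singular values $\sigma_1(t)>\cdots>\sigma_n(t)>0$, the matrix $G(t)$ is Hermitian positive definite with distinct eigenvalues $\sigma_1(t)^2>\cdots>\sigma_n(t)^2>0$. Because distinct eigenvalue curves cannot coincide anywhere, each $\sigma_j(t)^2$ is a globally well-defined eigenvalue of $G(t)$, and by standard perturbation theory for a simple eigenvalue (Kato, Rellich), via the Riesz-projection formula $P_j(t)=\frac{1}{2\pi i}\oint_{\Gamma_j}(zI-G(t))^{-1}\,dz$ with a contour $\Gamma_j$ enclosing only $\sigma_j(t)^2$, both $\sigma_j(t)^2$ and the rank-one spectral projection $P_j(t)$ are $C^k$ in $t$. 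Taking positive square roots, $D(t)=\mathrm{diag}(\sigma_1(t),\dots,\sigma_n(t))$ is $C^k$ and invertible.

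Next I would assemble a $C^k$ unitary $R(t)$ of right singular vectors out of the projections. For each $j$ fix a reference vector $w_j$; then $r_j(t):=P_j(t)w_j/\|P_j(t)w_j\|$ is a $C^k$ unit eigenvector of $G(t)$ for $\sigma_j(t)^2$, defined and smooth on any subinterval where $P_j(t)w_j\neq0$. Near a given point (in particular near $t=0$, which is all the application in Lemma~\ref{lem-item1} requires) one simply picks $w_j$ making all denominators nonzero; for the global statement one patches the local frames, absorbing the $U(1)$ ambiguity between two local choices of a simple eigenvector — a nowhere-vanishing $C^k$ unimodular scalar — to glue them into one $C^k$ frame. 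Setting $R(t)^\dagger:=[\,r_1(t)\mid\cdots\mid r_n(t)\,]$ yields a $C^k$ unitary with $R(t)G(t)R(t)^\dagger=D(t)^2$.

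Finally I would recover $L(t)$ algebraically by $L(t):=A(t)R(t)^\dagger D(t)^{-1}$, which is $C^k$ since $A(t)$, $R(t)^\dagger$ and $D(t)^{-1}$ all are. One checks unitarity directly: $L(t)^\dagger L(t)=D(t)^{-1}R(t)A(t)^\dagger A(t)R(t)^\dagger D(t)^{-1}=D(t)^{-1}R(t)G(t)R(t)^\dagger D(t)^{-1}=D(t)^{-1}D(t)^2D(t)^{-1}=I$, and $L(t)D(t)R(t)=A(t)$ by construction, so this is the desired $C^k$ singular value decomposition.

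I expect the main obstacle to be the smooth, consistent choice of the singular-vector frame: the projections $P_j(t)$ are canonical and $C^k$, but extracting $C^k$ eigenvectors from them requires controlling their phase; the normalization $r_j=P_jw_j/\|P_jw_j\|$ resolves this locally (hence on the neighborhood of $t=0$ actually needed), while the distinctness hypothesis — preventing any merging of eigenspaces along the interval — is precisely what allows the local choices to be patched globally, and the full-rank hypothesis is what keeps $D(t)^{-1}$ smooth.
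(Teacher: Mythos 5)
Your proof is correct, but note that the paper itself does not prove this statement at all: it is imported verbatim from Dieci et al.\ \cite{dieci1999smooth}, whose treatment of smooth matrix decompositions proceeds quite differently (via smooth factorization/ODE techniques for paths of matrices). What you supply is a self-contained and more elementary derivation tailored to the square, full-rank, distinct-singular-value case: pass to the Gram matrix $G(t)=A(t)^\dagger A(t)$, use the Riesz-projection formula to get $C^k$ simple eigenvalues $\sigma_j(t)^2=\mathrm{Tr}\,[G(t)P_j(t)]$ and rank-one projections, extract a $C^k$ orthonormal eigenframe, and recover the left factor algebraically as $L=AR^\dagger D^{-1}$, with unitarity and $LDR=A$ checked directly (full rank is exactly what keeps $D^{-1}$ smooth, distinctness exactly what keeps every eigenvalue simple). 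This buys transparency and suffices, in particular, for the only use the paper makes of the theorem, namely differentiability of $V_A(t)$ in a neighborhood of $t=0$ in Lemma \ref{lem-item1}, where your local choice $r_j(t)=P_j(t)w_j/\lVert P_j(t)w_j\rVert$ with $w_j$ an eigenvector at $t=0$ already settles everything. The one step you should be careful to spell out is the global gluing: two local eigenvector sections differ by a $t$-dependent unimodular scalar, so you cannot match them by a constant phase; you must absorb the varying phase, e.g.\ by writing it as $e^{i\theta(t)}$ on the (one-dimensional, simply connected) overlap and extending $\theta$ with a smooth cutoff, or avoid patching altogether by using Kato's transformation function, i.e.\ the solution of $W'(t)=\sum_j\bigl[P_j'(t)P_j(t)-P_j(t)P_j'(t)\bigr]W(t)$, which transports the eigenframe at $t=0$ along the whole interval in a $C^k$ way (for $k\ge 1$). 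With that detail filled in, your argument is a valid alternative proof of the cited result in the generality needed here.
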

We apply this theorem to operator $\sqrt{\rho} e^{-i H t} \sqrt{\rho}$ and its singular value decomposition in Eq.(\ref{SVD}). Note that this operator is infinitely differentiable. Furthermore,  if $\rho$ is full-rank, then $\sqrt{\rho} e^{-i H t} \sqrt{\rho}$ is also full-rank. And, if $\rho$ is non-degenerate then for sufficiently small $t$ the singular values of $\sqrt{\rho} e^{-i H t} \sqrt{\rho}$ will  be distinct (Note that the singular values of this operator are square root of the eigenvalues of operator $\sqrt{\rho} (e^{-i H t} {\rho} e^{i H t})\sqrt{\rho} $. For sufficiently small $t$, these eigenvalues will be arbitrary  close to the eigenvalues of $\rho^2$, which are distinct). 

Therefore, applying the above result we conclude that if $\rho$ is full-rank and its eigenvalues are distinct, then the operator $V^T_A(t)=[L(t)R(t)]^\dag=R^\dag(t) L^\dag(t)$ is infinitely differentiable at $t=0$, which in turn implies $V_A(t)$ is infinitely differentiable at $t=0$ and completes the proof.

\subsection{QFI as the convex roof of variance (Proof of theorem \ref{Thm:Petz})}

For completeness we restate theorem \ref{Thm:Petz} in more details. \\

\noindent\textbf{Restatement of theorem 4:} 
Suppose under Hamiltonian $H$ state $\rho$ has period $\tau$, i.e. $
\tau=\inf_t\{t>0: e^{-i H t} \rho e^{i H t}=\rho\} $\ . Then,  
\be
F_H(\rho)= \min_{\{q_k,\eta_k\}}\sum_k q_k F_H(\eta_k)= 4\times \min_{\{q_k,\eta_k\}}\sum_k q_k V_H(\eta_k)\ ,
\ee
where the minimization is over all ensembles $\{q_k,|\eta_k\rangle\}$ satisfying $\sum_{k} q_k |\eta_k\rangle\langle\eta_k|=\rho$. Furthermore, the optimal ensemble $\{q_k,|\eta_k\rangle\}$ for which the minimum is achieved can be chosen such that   the period   $\tau_k=\inf_t\{t>0:  |\langle\eta_k| e^{-i H t} |\eta_k\rangle|=1 \}\ $ of state $|\eta_k\rangle$ under Hamiltonian $H$ is either $0$, i.e., $\eta_k$ is an eigenstate of Hamiltonian $H$, or $\tau_k=\tau/m_k$ for an integer $m_k\in \mathbb{N}$.  Finally, at each time $t\in (0,\tau)$ there is  at least one state $ |\eta_k\rangle$ with non-zero probability $q_k>0$, such that $  |\langle\eta_k| e^{-i H t} |\eta_k\rangle|<1$.\\

Note that the last part of theorem means that the greatest common divisor of integers $\{m_k=\tau/\tau_k\}$ is one.  As we mentioned before,  the first part of the theorem was conjectured by Toth and Petz \cite{toth2013extremal}, and is proven by Yu \cite{yu2013quantum}.  In the paper we showed that this part follows from theorem 3. Here, we prove the second part of the theorem, which puts a constraint on the period of states in the optimal ensemble.

\begin{proof}
Let $H= \sum_E E \Pi_E$ be the spectral decomposition of Hamiltonian $H$.  The fact that $e^{-i H \tau}\rho e^{i H \tau}=\rho$ implies that for any two energy levels $E_1$ and $E_2$ if $\Pi_{E_1} \rho \Pi_{E_2}\neq 0$, then $E_1-E_2=2\pi m/\tau $ for an integer $m$.  Based on the criterion that $\Pi_{E_1} \rho \Pi_{E_2}$ is zero or not we can divide the energy levels into disjoint  partitions, such that (i) the energy levels in each partition are separated with energy gaps $2\pi m/\tau $ for an integer $m$, and (ii) for any two energy levels $E_1$ and $E_2$ in two different partitions 
 $\Pi_{E_1} \rho \Pi_{E_2}= 0$. 
 
 Suppose we label disjoint partitions with  $r$ and let $P_{r}$ be the projector to the subspace spanned by the energy level in the partition  $r$, i.e., each $P_r$ is the  sum of $\Pi_E$ for all $E$ belonging to the same partition $r$.  Note that there is no coherence between different partitions. That is 
\be\label{cph179}
\sum_r P_{r} \rho P_{r} = \rho\ .
 \ee
 Let  $\{q_k,|\eta_k\rangle\}$ be an optimal ensemble  satisfying 
 \be\label{wetyui}
F_H(\rho)= 4\times \sum_k q_k V_H(\eta_k)\ .
\ee 
Now  we define a new ensemble of pure states, which is obtained from this ensemble by removing coherence between different partitions defined above. Namely  the ensemble 
 \be
 \left\{\ \tilde{q}_{k,r}=q_k \langle\eta_k|P_{r} |\eta_k\rangle\ ,\ |\tilde{\eta}_{k,r}\rangle= \frac{P_r |\eta_k\rangle}{\sqrt{\langle\eta_k|P_{r} |\eta_k\rangle}} \ \right\}_{k,r}
 \ee
 where state $\frac{P_r |\eta_k\rangle}{\sqrt{\langle\eta_k|P_{r} |\eta_k\rangle}}$ happens with probability $ \tilde{q}_{k,r}$. This ensemble can be thought of  as the ensemble obtained from the optimal ensemble $\{q_k,|\eta_k\rangle\}$ by measuring in the basis $\{P_{r} \}_r$. It can be easily seen that 
\begin{enumerate}
\item Eq.(\ref{cph179}) together with $\sum_k q_k |\eta_k\rangle\langle\eta_k|=\rho$ imply 
\be
\sum_{k,r} \tilde{q}_{k,r}  |\tilde{\eta}_{k,r}\rangle\langle\tilde{\eta}_{k,r}|=\rho\ .
\ee

\item Concavity of variance implies
\begin{align}
\sum_{r} \tilde{q}_{k,r}  V_H(|\tilde{\eta}_{k,r}\rangle)&=q_k \sum_r  \langle\eta_k|P_{r} |\eta_k\rangle  V_H(|\tilde{\eta}_{k,r}\rangle) \le q_k   V_H(|{\eta}_{k}\rangle)\ .
\end{align}
It follows that the average variance for the ensemble $
 \left\{\ \tilde{q}_{k,r} ,\ |\tilde{\eta}_{k,r}\rangle\right\}_{k,r}$ satisfies 
\begin{align}
\sum_{k,r} \tilde{q}_{k,r}  V_H(|\tilde{\eta}_{k,r}\rangle)&=\sum_k q_k \sum_r  \langle\eta_k|P_{r} |\eta_k\rangle  V_H(|\tilde{\eta}_{k,r}\rangle) \\ &\le \sum_k q_k   V_H(|{\eta}_{k}\rangle)\\ &=\frac{F_H(\rho)}{4}\\ &\le \sum_{k,r} \tilde{q}_{k,r}  V_H(|\tilde{\eta}_{k,r}\rangle)\  ,
\end{align}
where to get the second line we have used the fact that variance is a concave function, the third line follows from Eq.(\ref{wetyui}), and the last line follows from convexity of QFI. We conclude that
\be
\sum_{k,r} \tilde{q}_{k,r}  V_H(|\tilde{\eta}_{k,r}\rangle)= \frac{F_H(\rho)}{4}\  .
\ee
\item Since for each projector $P_r$ the difference between any two energy levels is an integer multiple of $2\pi/\tau$, for any $k$ and $r$ the period of state $ |\tilde{\eta}_{k,r}\rangle=\frac{P_r |\eta_k\rangle}{\sqrt{\langle\eta_k|P_{r} |\eta_k\rangle}}$ is $\tau_k=\tau/m_k$ for an integer $m_k$, or is zero, i.e., $ |\tilde{\eta}_{k,r}\rangle$ is an energy eigenstate.
\item Since for any time $t<\tau$, $e^{-i H t}\rho e^{i H t}\neq \rho$, we conclude that for any $t<\tau$ there should be at least one pure state $|\tilde{\eta}_{k,r}\rangle$ such that  $|\langle\tilde{\eta}_{k,r}|e^{-i H t}|\tilde{\eta}_{k,r}\rangle|\neq 1$. Equivalently, this means that the greatest common divisors of integers $m_k=\tau/\tau_k$ is one. 

 \end{enumerate}
This completes the proof.
 \end{proof}

\color{black}

\newpage

\section{Monotonicity of Fisher information in  approximate asymptotic transformations   }\label{Sec:Fisher-mon}

Consider a pair of systems labled as the input and output systems,  with the Hilbert spaces  $\mathcal{H}_\text{in}$ and  $\mathcal{H}_\text{out}$ and the corresponding Hamiltonians  $H_\text{in}$ and $H_\text{out}$. Define the superoperators $\mathcal{U}_\text{in}(t)$  and $\mathcal{U}_\text{out}(t)$ to be the time translations generated by  $H_\text{in}$ and $H_\text{out}$, i.e.
\be
\mathcal{U}_\text{in}(t)[\cdot]=e^{-i H_\text{in}t}(\cdot) e^{i H_\text{in}t}\ \ \ ,\ \ \ \ \ \ \ \ \mathcal{U}_\text{out}(t)[\cdot]=e^{-i H_\text{out} t}(\cdot)e^{i H_\text{out} t}\ .
\ee

\begin{theorem}\label{Thm:mono}
Suppose there exists a sufficiently large integer $n_0$ such that for all  integer $n\ge n_0$, there exists a  CPTP map $\mathcal{E}_n$  that transforms $n$ copies of the input system to $m=\lceil R n\rceil$
 copies of the output system, such that  (i) $\mathcal{E}_n$ satisfies the covariance condition
\be\label{cov153}
\forall t:\ \  \mathcal{U}^{\otimes \lceil R n\rceil}_\text{out}(t)\circ \mathcal{E}_n=  \mathcal{E}_n\circ \mathcal{U}^{\otimes n}_\text{in}(t)\ ,
\ee 
and (ii) maps the input state $\rho^{\otimes n}$ to $\sigma^{\otimes \lceil R n\rceil}$  with error bounded by $\delta$, such that
\be
\frac{1}{2} \big\|\mathcal{E}_n(\rho^{\otimes n})-\sigma^{\otimes \lceil R n\rceil}\big\|_1 \le \delta\ .
\ee
Then, 
\be
T^{{T}/{(1-T)}}- T^{{1}/{(1-T)}}  \le 4 \sqrt{\delta}\ , 
\ee
where 
\be
T=\frac{F_{H_\text{in}}(\rho)}{R\times  F_{H_\text{out}}(\sigma)}\ . 
\ee 
 \end{theorem}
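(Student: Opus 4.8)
The plan is to phrase everything through the Bures distance and to exploit two competing facts: the image state $\rho_n:=\mathcal{E}_n(\rho^{\otimes n})$ is forced to evolve \emph{slowly} under time translations (because, by covariance together with monotonicity of the Bures distance, its displacement is controlled by that of $\rho^{\otimes n}$, which is governed by $F_{H_{\mathrm{in}}}(\rho)$), while at the same time $\rho_n$ is $\delta$-close to $\sigma^{\otimes\lceil Rn\rceil}$, which evolves at a rate governed by $F_{H_{\mathrm{out}}}(\sigma)$. Playing these against each other through the triangle inequality, and then optimizing the time parameter, produces the bound. It is convenient to use the metric $d(\mu,\nu):=\sqrt{\,1-\sqrt{\mathrm{Fid}(\mu,\nu)}\,}$ (a fixed multiple of the usual Bures distance): it is unitarily invariant, non-increasing under CPTP maps, compatible with $\mathrm{Fid}(\mu^{\otimes N},\nu^{\otimes N})=\mathrm{Fid}(\mu,\nu)^N$, and tied to the QFI by Eq.~(\ref{conec}). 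Since for $T\ge 1$ the claimed left-hand side equals $\tfrac{1-T}{T}\,T^{1/(1-T)}\le 0$ (read as its limit $0$ at $T=1$), I may assume $0\le T<1$.

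First I would fix $s\ge 0$, and for each large $n$ set $m:=\lceil Rn\rceil$, $t_n:=s/\sqrt{m\,F_{H_{\mathrm{out}}}(\sigma)}$ (so $t_n\to 0$), and $\rho_n:=\mathcal{E}_n(\rho^{\otimes n})$. The covariance condition (\ref{cov153}) gives $\mathcal{E}_n\big(\mathcal{U}^{\otimes n}_{\mathrm{in}}(t_n)[\rho^{\otimes n}]\big)=\mathcal{U}^{\otimes m}_{\mathrm{out}}(t_n)[\rho_n]$, so monotonicity of $d$ yields
\[
 d\big(\rho_n,\mathcal{U}^{\otimes m}_{\mathrm{out}}(t_n)[\rho_n]\big)\le d\big(\rho^{\otimes n},\mathcal{U}^{\otimes n}_{\mathrm{in}}(t_n)[\rho^{\otimes n}]\big).
\]
The hypothesis $D(\rho_n,\sigma^{\otimes m})\le\delta$ and the Fuchs--van de Graaf bound $1-\sqrt{\mathrm{Fid}}\le D$ give $d(\rho_n,\sigma^{\otimes m})\le\sqrt{\delta}$, and the same for the time-translated states by unitary invariance of $d$ and of trace distance. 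The triangle inequality then gives
\[
 d\big(\sigma^{\otimes m},\mathcal{U}^{\otimes m}_{\mathrm{out}}(t_n)[\sigma^{\otimes m}]\big)\le 2\sqrt{\delta}+d\big(\rho^{\otimes n},\mathcal{U}^{\otimes n}_{\mathrm{in}}(t_n)[\rho^{\otimes n}]\big).
\]

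Next I would evaluate the two displacements in the limit $n\to\infty$. By multiplicativity of fidelity, $d\big(\mu^{\otimes N},\mathcal{U}(t)[\mu^{\otimes N}]\big)=\sqrt{1-\mathrm{Fid}(\mu,\mu_t)^{N/2}}$ with $\mu_t=e^{-iHt}\mu e^{iHt}$. Using $\mathrm{Fid}(\mu,\mu_t)=1-\tfrac14 F_H(\mu)\,t^2+o(t^2)$ (even in $t$, from Eq.~(\ref{conec})) and the elementary fact $(1+a_N)^{N}\to e^{a}$ whenever $N a_N\to a$, I get with $t=t_n$
\[
 \mathrm{Fid}(\sigma,\sigma_{t_n})^{m/2}\longrightarrow e^{-s^2/8},\qquad \mathrm{Fid}(\rho,\rho_{t_n})^{n/2}\longrightarrow e^{-T s^2/8},
\]
the second limit using $n/\lceil Rn\rceil\to 1/R$ and $F_{H_{\mathrm{in}}}(\rho)/(R\,F_{H_{\mathrm{out}}}(\sigma))=T$. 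Passing to the limit in the previous display, for every $s\ge 0$,
\[
 \sqrt{1-e^{-s^2/8}}-\sqrt{1-e^{-T s^2/8}}\le 2\sqrt{\delta}.
\]
Finally I would substitute the particular value of $s$ with $e^{-s^2/8}=T^{1/(1-T)}$, i.e.\ $s^2=8\ln(1/T)/(1-T)$, which is finite and nonnegative for $0<T<1$ and makes $e^{-Ts^2/8}=T^{T/(1-T)}$. Writing $b:=T^{1/(1-T)}$ and $a:=T^{T/(1-T)}$, one has $0<b<a<1$, and
\[
 \sqrt{1-b}-\sqrt{1-a}=\frac{a-b}{\sqrt{1-b}+\sqrt{1-a}}\ \ge\ \frac{a-b}{2},
\]
since $\sqrt{1-b},\sqrt{1-a}\le 1$; hence $a-b\le 4\sqrt{\delta}$, that is $T^{T/(1-T)}-T^{1/(1-T)}\le 4\sqrt{\delta}$. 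The degenerate cases $F_{H_{\mathrm{in}}}(\rho)=0$ (so $T=0$; take $s\to\infty$) and $F_{H_{\mathrm{out}}}(\sigma)=0$ are immediate.

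The step I expect to require the most care is the interchange of limits: the $o(t^2)$ remainder in the fidelity expansion depends on the states and Hamiltonians but not on $n$, so one must verify that after the scaling $t_n\asymp m^{-1/2}$ it contributes only $o(1/N)$, so that $\mathrm{Fid}(\mu,\mu_{t_n})^{N/2}$ genuinely converges to the stated Gaussian-overlap values, and that the ceiling $\lceil Rn\rceil$ enters the conclusion only through $\lceil Rn\rceil/n\to R$. Everything else is a routine use of the metric, unitary-invariance, and information-processing properties of the Bures distance, together with the QFI--fidelity identity already recorded in the paper.
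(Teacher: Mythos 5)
Your proposal is correct and follows essentially the same route as the paper's proof: covariance plus monotonicity and multiplicativity of fidelity, the Bures triangle inequality (which is exactly the content of the paper's Lemma \ref{lem:fid}), Fuchs--van de Graaf, the QFI--fidelity expansion with the $t\sim 1/\sqrt{n}$ scaling, and the same optimal choice of the time parameter, $e^{-s^2/8}=T^{1/(1-T)}$. The only cosmetic difference is bookkeeping: you apply the triangle inequality directly to $d=\sqrt{1-\sqrt{\mathrm{Fid}}}$ (getting an intermediate constant $2\sqrt{\delta}$) and recover the stated bound via $\sqrt{1-b}-\sqrt{1-a}\ge (a-b)/2$, whereas the paper converts to differences of square-root fidelities first (constant $4\sqrt{\delta}$) and then optimizes, arriving at the identical conclusion.
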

It can be easily seen that function $g(x)=x^{{x}/{(1-x)}}- x^{{1}/{(1-x)}}$ is positive in the interval $x\in[0,1)$. This means that for  $T={F_H(\rho)}/({R\times  F_H(\sigma)})<1$, $ \sqrt{\delta}$ is lower bounded by a positive number. That is if $F_H(\rho)< R\times  F_H(\sigma)$  then the error $\delta$ cannot be arbitrary small. In summary, we conclude that if  there exists a sequence of TI operations  that convert copies of the input systems to the copies of the output systems  with rate $R(\rho\rightarrow\sigma)$,  with a vanishing error in the trace distance, then 
  \be
R(\rho\rightarrow\sigma)\le \frac{F_{H_\text{in}}(\rho)}{F_{H_\text{out}}(\sigma)}\ .
  \ee

Before presenting the proof, we recall the Fuchs-van de Graaf inequality \cite{fuchs1999cryptographic, nielsen2000quantum, wilde2013quantum}: For any pair of density operators $\rho_1$ and $\rho_2$ it holds that
\be\label{Fuchs}
1-\sqrt{\text{Fid}(\rho_1,\rho_2)} \le \frac{1}{2}\|\rho_1-\rho_2\|_1\le  \sqrt{1-\text{Fid}(\rho_1,\rho_2)}\ ,
\ee
where $\|\cdot \|_1$ denotes the l-1 norm, that is the sum of the singular values.  Using the properties of fidelity and Bures metric, in Sec. \ref{Sec:proof21} we  prove the following lemma, which will be used in the proof of theorem \ref{Thm:mono}.
\begin{lemma} \label{lem:fid}
For any pairs of states $\tau_1$ and $\tau_2$ and unitary $U$ it holds that
\bes\label{Bures}
\begin{align}
\Big|\sqrt{\text{Fid}(U \tau_1 U^\dag , \tau_1)}- \sqrt{\text{Fid}(U \tau_2 U^\dag , \tau_2)}\Big| &\le 4\sqrt{1-\sqrt{\text{Fid}(\tau_1,\tau_2)}}\le 4 \sqrt{\frac{1}{2}\|\tau_1-\tau_2\|_1}\ .
\end{align}
\ees
\end{lemma}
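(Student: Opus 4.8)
The plan is to work with the Bures distance $d_B(\rho,\sigma):=\sqrt{2\bigl(1-\sqrt{\text{Fid}(\rho,\sigma)}\bigr)}$, using its two key features: it is a genuine metric (so it obeys the triangle inequality) and it is unitarily invariant, $d_B(V\rho V^\dag,V\sigma V^\dag)=d_B(\rho,\sigma)$ for every unitary $V$. Writing $a:=d_B(U\tau_1 U^\dag,\tau_1)$ and $b:=d_B(U\tau_2 U^\dag,\tau_2)$, one has $1-\sqrt{\text{Fid}(U\tau_i U^\dag,\tau_i)}=\tfrac12\, d_B^2(U\tau_i U^\dag,\tau_i)$, so the left-hand side of the inequality in Lemma~\ref{lem:fid} is $\tfrac12\bigl|a^2-b^2\bigr|=\tfrac12\,|a-b|\,(a+b)$, and it suffices to bound the two factors.

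First I would note that $0\le\text{Fid}\le 1$ forces every Bures distance into $[0,\sqrt2\,]$, so $a+b\le 2\sqrt2$. For the factor $|a-b|$ I would apply the triangle inequality twice, together with unitary invariance: $a\le d_B(U\tau_1 U^\dag,U\tau_2 U^\dag)+d_B(U\tau_2 U^\dag,\tau_2)+d_B(\tau_2,\tau_1)=b+2\,d_B(\tau_1,\tau_2)$, and symmetrically, whence $|a-b|\le 2\,d_B(\tau_1,\tau_2)$. Multiplying the two bounds gives $\tfrac12\,|a-b|\,(a+b)\le 2\sqrt2\,d_B(\tau_1,\tau_2)$, and since $d_B(\tau_1,\tau_2)=\sqrt2\,\sqrt{1-\sqrt{\text{Fid}(\tau_1,\tau_2)}}$ this equals $4\,\sqrt{1-\sqrt{\text{Fid}(\tau_1,\tau_2)}}$, the first inequality of the lemma. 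The second inequality then follows at once from the Fuchs--van de Graaf bound $1-\sqrt{\text{Fid}(\tau_1,\tau_2)}\le\tfrac12\|\tau_1-\tau_2\|_1$ recorded in (\ref{Fuchs}).

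The only ingredient that is not entirely elementary is the triangle inequality for $d_B$; I would either cite it as a standard fact or supply the usual one-line justification via Uhlmann's theorem, expressing $\sqrt{\text{Fid}(\rho,\sigma)}$ as the maximal overlap of purifications and invoking the triangle inequality for angles on the unit sphere of a dilated Hilbert space. Beyond that, the argument is merely the factorization $a^2-b^2=(a-b)(a+b)$ plus the two trivial estimates $d_B\le\sqrt2$ and unitary invariance, so I do not anticipate any genuine obstacle apart from bookkeeping of the constant $4$.
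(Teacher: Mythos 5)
Your proof is correct and is essentially the paper's own argument: both exploit the Bures metric's triangle inequality (applied twice), its unitary invariance, the bound $d_B\le\sqrt2$, and then Fuchs--van de Graaf, arriving at the same constant $4$. Your factorization $\tfrac12|a^2-b^2|=\tfrac12|a-b|(a+b)$ is just a slightly tidier way of organizing the paper's step of squaring the triangle-inequality bound, so the two proofs differ only cosmetically.
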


\subsection{Proof of theorem \ref{Thm:mono}}

To simplify the notation we assume the input and output system Hamiltonians are identical and they are both denoted by $H$. Generalizing the result to the case where these systems are different is straightforward.

For any $n$ let $m=\lceil R n\rceil$ be the number of copies of the output systems.  Let $\sigma_m=\mathcal{E}_n(\rho^{\otimes n})$ be the actual output state and  $\sigma_m(\Delta t)$ be the time-evolved version of  $\sigma_m$, i.e.
\begin{align}
\sigma_m(\Delta t)&=\mathcal{U}^{\otimes m}(\Delta t)[\sigma_m]=(e^{-i H \Delta t})^{\otimes m} \sigma_m (e^{i H \Delta t})^{\otimes m}  \ .
\end{align}
Here, $\Delta t$ is a parameter whose value will be fixed later. Similarly, let $\sigma^{\otimes m}(\Delta t)=(e^{- i H \Delta t}  \sigma e^{i H \Delta t})^{\otimes m}$ be the time-evolved version of state $\sigma^{\otimes m}$. Since the  operation  $\mathcal{E}_{n}$ is TI, i.e. satisfies the covariance condition in Eq.(\ref{cov153}),  for any $\Delta t$ it maps state $\rho^{\otimes n}(\Delta t)$ to state $\sigma_m(\Delta t)$. To summarize
\bes
\begin{align}
\mathcal{E}_{n}(\rho^{\otimes n})&=\sigma_m\\ \mathcal{E}_{n}(\rho(\Delta t)^{\otimes n})&=\sigma_m(\Delta t)\ .
\end{align}
\ees
Then, 
\be\label{Eq2}
\big[\text{Fid}(\rho, {\rho(\Delta t)})\big]^{n/2}=  \sqrt{\text{Fid}(\rho^{\otimes n}, {\rho(\Delta t)}^{\otimes n} )}  \le \sqrt{\text{Fid}(\sigma_m, \sigma_m(\Delta t))} \ ,
\ee
where the equality follows from the multiplicativity of fidelity under tensor products and the bound follows from the  monotonicity of Fidelity under CPTP maps.

Applying lemma \ref{lem:fid} to states $\sigma^{\otimes m}$ and $\sigma_m$, we find
\begin{align}
 \sqrt{\text{Fid}\left(\sigma_m(\Delta t),\sigma_m\right)}-\sqrt{\text{Fid}\left(\sigma(\Delta t)^{\otimes m}, \sigma^{\otimes m}\right)} &\le 4\sqrt{1-\sqrt{\text{Fid}(\sigma_m,\sigma^{\otimes m})}}\ .
\end{align}
Using the multiplicativity of fidelity for tensor products and applying Eq.(\ref{Eq2}),  this implies
\begin{align}
\big[\text{Fid}(\rho, {\rho(\Delta t)})\big]^{n/2}\le \sqrt{\text{Fid}(\sigma_m(\Delta t),\sigma_m)} &\le 4\sqrt{1-\sqrt{\text{Fid}(\sigma_m,\sigma^{\otimes m})}}+ \big[\text{Fid}(\sigma(\Delta t), \sigma)\big]^{m/2}\ .
\end{align}
Then, choosing $m=\lceil R n\rceil$ we find  
\begin{align}
\big[\text{Fid}({\rho(\Delta t)}, \rho)\big]^{n/2} - \big[\text{Fid}(\sigma(\Delta t), \sigma)\big]^{\lceil R n\rceil/2} &\le 4\sqrt{1-\sqrt{\text{Fid}(\sigma_m,\sigma^{\otimes m})}}\\ &\le 4\sqrt{\frac{1}{2}\|\sigma_m-\sigma^{\otimes m}\|_1} \ ,
\end{align}
where to  get the second line we
have used Fuchs-van de Graaf inequality in Eq.(\ref{Fuchs}). By the assumption of the theorem,  for $n\ge n_0$, the trace distance $\frac{1}{2}\|\sigma_m-\sigma^{\otimes m}\|_1$ is bounded by $\delta$. This implies
\begin{align}\label{kdkd45}
\big[\text{Fid}({\rho(\Delta t)}, \rho)\big]^{n/2} - \big[\text{Fid}(\sigma(\Delta t), \sigma)\big]^{\lceil R n\rceil/2} &\le 4\sqrt{\delta} \ .
\end{align}
Next, we take $\Delta t=t/\sqrt{n}$ for arbitrary fixed $t\in\mathbb{R}$ and consider the limit of large $n$, where $\Delta t$ goes to zero. Consider the Taylor expansion of $\text{Fid}( {\rho(\Delta t)}, \rho)$, as a function of $\Delta t$ around $\Delta t=0$.  Since $\text{Fid}( {\rho(\Delta t)}, \rho)$ is an even function of $\Delta t$,  its odd derivatives with respect to $\Delta t$ vanishes. Furthermore, the second derivative of function 
$\text{Fid}( e^{-i H \Delta t}\rho e^{iH \Delta t}, \rho)$ 
 with respect to $\Delta t$ is $1/4$ times the QFI for the family of states $\{e^{-i H \Delta t} \rho e^{i H \Delta t} \}$ and parameter $\Delta t$ (Theorem 6.3  \cite{hayashi2006quantum}). In other words,
for infinitesimal  $\Delta t$, 
\be
\text{Fid}({\rho(\Delta t)}, \rho)=\text{Fid}( e^{-i H \Delta t}\rho e^{iH \Delta t}, \rho)=1-\frac{\Delta t^2}{4} {F}_H(\rho)+\mathcal{O}(\Delta t^4)\ ,
\ee
where $\mathcal{O}(\Delta t^4)$ denotes terms of order $\Delta t^4$ and higher.  Therefore, for $\Delta t=t/\sqrt{n}$,  in the limit of large $n$ we  find
\begin{align}
\text{Fid}({\rho(\frac{t}{\sqrt{n}})}, \rho)=1-\frac{t^2}{4 n} {F}_H(\rho)+\mathcal{O}( \frac{t^4}{n^2}) \ .
\end{align}
Then, using the fact that $\lim_{n\rightarrow \infty} (1-x/n)^n=e^{-x}$, we find that in the limit $n$ goes to infinity, $\text{Fid}^{n/2}(\rho, {\rho(\Delta t)})$ in  the left-hand side of Eq.(\ref{kdkd45})  converges to 
\be\label{Eq6401}
\lim_{n\rightarrow \infty} \Big[\text{Fid}(\rho, {\rho(\frac{t}{\sqrt{n}})})\Big]^{n/2}=e^{-t^2  F_H(\rho)/8}\ .
\ee
Similarly, $\big[\text{Fid}(\sigma(\Delta t), \sigma)\big]^{\lceil R n\rceil/2} $ converges to $e^{-\frac{1}{8}  R t^2  F_H(\sigma)} $. Therefore, Eq.(\ref{kdkd45}) implies
\begin{align}\label{kdkd46}
e^{-\frac{1}{8}t^2  F_H(\rho)} - e^{-\frac{1}{8}  R t^2  F_H(\sigma)} \le 4\sqrt{\delta}\ .
\end{align}
This bound holds for arbitrary $t\in\mathbb{R}$. Te strongest bound is achieved when the left-hand side is maximized, which happens for  
\be
t^2=\frac{-8}{R  \times F_H(\sigma)-F_H(\rho)}\times \log \frac{F_H(\rho)}{R \times   F_H(\sigma)}\ .
\ee
In this case, Eq.(\ref{kdkd46}) implies 
\begin{align}
g(T) \le 4\sqrt{\delta}\ ,
\end{align}
where $T=\frac{F_H(\rho)}{R\times  F_H(\sigma)}$ and $g(x)\equiv x^{{x}/{(1-x)}}- x^{{1}/{(1-x)}}$. 

To complete the proof of theorem, in the following we prove lemma \ref{lem:fid}. 

\subsection{Proof of lemma \ref{lem:fid} }\label{Sec:proof21}


We first  recall some useful properties of Fidelity and the Bures distance. Recall that fidelity of two states $\rho_1$ and $\rho_2$ is defined as $\text{Fid}(\rho_1,\rho_2)=\|\sqrt{\rho_1}\sqrt{\rho_2}\|^2_1=\Tr(\sqrt{\sqrt{\rho_1}\rho_2 \sqrt{\rho_1}})^2$.  Fidelity is not  a distance but it is closely related to the Bures distance, via the relation 
\be
B(\rho_1,\rho_2)=\sqrt{2[1-\sqrt{\text{Fid}(\rho_1,\rho_2)}]}\ .
\ee
This function satisfies all the properties of a distance. In particular, it is symmetric, i.e.,  $B(\rho_1,\rho_2)=B(\rho_2,\rho_1)$, and satisfies the triangle inequality, i.e.
\be
B(\rho_1,\rho_2)+B(\rho_2,\rho_3)\ge B(\rho_1,\rho_3)\ .
\ee
Furthermore, it is  invariant under unitary transformations, i.e. $B(\rho_1,\rho_2)=B(U\rho_1U^\dag,U\rho_2U^\dag)$, 
which can be easily seen using its relation with fidelity. 

Now, we are ready to present the proof of the lemma.  Using the triangle   inequality twice we find 
\begin{align}
 B(U \tau_2 U^\dag ,U \tau_1 U^\dag )+B(U \tau_1 U^\dag , \tau_1)+ B(\tau_1,\tau_2) \ge  B(U \tau_2 U^\dag , \tau_2) .
 \end{align}
 Let  $\eta\equiv B(U \tau_1 U^\dag ,U \tau_2 U^\dag )=B(\tau_1,\tau_2)$. Then, the above inequality can be rewritten as  
$B(U \tau_1 U^\dag , \tau_1)\ge B(U \tau_2 U^\dag ,\tau_2)-2 \eta\ $, which in turn implies
 \begin{align}
 B^2(U \tau_1 U^\dag , \tau_1)&\ge B^2(U \tau_2 U^\dag , \tau_2)-4 \eta B(U \tau_2 U^\dag , \tau_2)\\ &\ge B^2(U \tau_2 U^\dag , \tau_2)-4\sqrt{2} \eta\ \ ,
 \end{align}
 where we have used the fact that Bures metric  is bounded by $\sqrt{2}$. This, in turn implies 
 \be
 1-\sqrt{\text{Fid}(U \tau_1 U^\dag , \tau_1)}\ge 1-\sqrt{\text{Fid}(U \tau_2 U^\dag , \tau_2)}-2 \sqrt{2} \eta\  ,
 \ee
  and so 
\be
2 \sqrt{2} \eta \ge \sqrt{\text{Fid}(U \tau_1 U^\dag , \tau_1)}- \sqrt{\text{Fid}(U \tau_2 U^\dag , \tau_2)}\  .
 \ee
  Exchanging $\tau_1$ and $\tau_2$ we also find  $ 2 \sqrt{2} \eta \ge \sqrt{\text{Fid}(U \tau_2 U^\dag , \tau_2)} -\sqrt{\text{Fid}(U \tau_1 U^\dag , \tau_1)}  $. We conclude that  
  $|\sqrt{\text{Fid}(U \tau_1 U^\dag , \tau_1)}- \sqrt{\text{Fid}(U \tau_2 U^\dag , \tau_2)}| \le 4\sqrt{1-\sqrt{\text{Fid}(\tau_1,\tau_2)}}$.  
   Finally, combining this with Fuchs-van de Graaf  in Eq.(\ref{Fuchs}) proves the lemma. 

\newpage
 
 \section{Quantum Fisher Information as the coherence cost: iid regime (Proof of theorem \ref{Thm0})}\label{Sec:QFI:iid}

In this section we prove that in the iid regime the coherence cost of preparing any state is determined by its Quantum Fisher information.\\

\noindent{\textbf{Restatement of Theorem} \ref{Thm0}: } 
Consider a system with Hamiltonian $H$ and state $\rho$ with period $\tau$, with a finite-dimensional Hilbert space. Consider a two-level system with state $|\Theta\rangle_{\text{c-bit}}=(|0\rangle+|1\rangle)/\sqrt{2}$ and Hamiltonian $H_\text{c-bit}=\pi \sigma_z/\tau$. Then, for any $R>F_H(\rho)/F_{\text{c-bit}}= (\tau/2\pi)^2 F_H(\rho) $, and integer $n$, there exists a TI operation that converts $\Theta_\text{c-bit}^{\otimes\lceil R n\rceil}$ to a state whose trace distance from $\rho^{\otimes n}$ is bounded by $\epsilon_n>0$, and $\epsilon_n\rightarrow 0$ in the limit $n$ goes to infinity, i.e.
\be\nonumber
\Theta_\text{c-bit}^{\otimes\lceil R n\rceil} \xrightarrow{TI}\stackrel{\epsilon_n}{ \approx}  \rho^{\otimes n}\ \ \ \  \text{as } \ \ \ \ n\rightarrow\infty, \ \ \epsilon_n\rightarrow 0\ . 
\ee
Furthermore, for any $R<F_H(\rho)/F_{\text{c-bit}}= (\tau/2\pi)^2 F_H(\rho) $ the above transformation is not possible with a vanishing error $\epsilon_n$.\\

 Another way to phrase this result is in terms of the coherence cost of system with state $\rho$ and Hamiltonian $H$:  the coherence cost of state $\rho$ is given by
\be
C^\text{TI}_c(\rho) = \frac{F_H(\rho)}{F_{\text{c-bit}}}=(\frac{\tau}{2\pi})^2 \times F_H(\rho)  \ .
\ee

The proof of the second part, i.e. $C^\text{TI}_c(\rho) \ge F_H(\rho)/F_{\text{c-bit}}$ follows from our result in Sec.\ref{Sec:Fisher-mon}, and in particular, theorem \ref{Thm:mono}.  In this section we prove the first part of  theorem \ref{Thm0}, i.e., we show that  $C^\text{TI}_c(\rho) \le F_H(\rho)/F_{\text{c-bit}}$. The proof uses theorem \ref{Thm:Fisher}. According to this theorem 
QFI is four times the \emph{convex roof} of the variance, i.e.
\be\label{Eq1252}
F_H(\rho)= \min_{\{q_k,\eta_k\}}\sum_k q_k F_H(\eta_k)= 4\times \min_{\{q_k,\eta_k\}}\sum_k q_k V_H(\eta_k)\ ,
\ee
where the minimization is over the set of all ensembles of pure states $\{q_k,|\eta_k\rangle\}$ satisfying $\sum_k q_k |\eta_k\rangle\langle\eta_k|=\rho$. 

To prove theorem \ref{Thm0}, we also use the following lemma, which can be shown using the results of Sec. \ref{Sec:app1} on pure state conversions.

\begin{lemma}\label{app:lem12}
For a system with Hamiltonian $H$,
consider a finite set of pure states $\mathbb{S}=\{|\psi_k\rangle\}_k$ with the property that  $|\langle\psi_k|e^{-iH \tau}|\psi_k\rangle|=1$.    Furthermore, suppose for any $0<t<\tau$ there is, at least, one state $|\psi_k\rangle\in \mathbb{S}$  such that  $|\langle\psi_k|e^{-iH t}|\psi_k\rangle|<1$. Let $\{r_k >0\}$ be an arbitrary set of positive real numbers. 

Then, for any integer $m$ there exists a TI operation that converts   $\lceil m R \rceil$  copies of system with state  $|\Theta\rangle_{\text{c-bit}}=(|0\rangle+|1\rangle)/\sqrt{2}$  and Hamiltonian $H_\text{c-bit}=\pi\sigma_z/\tau$ to state $|\Psi(m)\rangle=\bigotimes_{k\in \mathbb{S}} |\psi_k\rangle^{\otimes \left \lceil{r_k m}\right \rceil}$ with an error $\epsilon_m$ that vanishes in the limit $m$ goes to infinity,  provided that  $R> \sum_k r_k V_H(\psi_k)/V_{c-bit}=\sum_k r_k V_H(\psi_k)\tau^2/\pi^2$, where $V_{c-bit}=(\pi/\tau)^2$. In summary
\be\label{trans591}
R> \sum_{k\in S} r_k \frac{V_H(\psi_k)}{V_{c-bit}}\ \  \Longrightarrow\ \   \Theta_\text{c-bit}^{\otimes\lceil R m\rceil} \xrightarrow{TI}\stackrel{\epsilon_m}{ \approx} |\Psi(m)\rangle=\bigotimes_{k\in \mathbb{S}} |\psi_k\rangle^{\otimes \left \lceil{r_k m}\right \rceil}\  \ \  \text{as } m\rightarrow\infty,\  \epsilon_m\rightarrow 0\ . 
\ee
\end{lemma}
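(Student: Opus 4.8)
\emph{Proof proposal.} The plan is to reduce the claim to a single application of the pure‑state conversion machinery of Section~\ref{Sec:app1} (in particular Proposition~\ref{prop:TI}), after establishing a local central limit theorem for the energy distribution of the target $|\Psi(m)\rangle$. First I would make two harmless reductions. Using Remark~\ref{rem21} I shift each $H^{(k)}$ and the c\nobreakdash-bit Hamiltonian by a constant so that $e^{-iH\tau}|\psi_k\rangle=|\psi_k\rangle$ for every $k$ and $e^{-iH_{\text{c-bit}}\tau}|\Theta\rangle_{\text{c-bit}}=|\Theta\rangle_{\text{c-bit}}$; these shifts cost no c\nobreakdash-bits. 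Next, any $\psi_k$ that is an energy eigenstate has $V_H(\psi_k)=0$, and $|\psi_k\rangle^{\otimes\lceil r_km\rceil}$ is again an eigenstate of the total Hamiltonian; preparing a fixed energy eigenstate is a TI operation consuming no c\nobreakdash-bits, and appending it changes neither the period nor the energy variance of the rest. Hence I may delete the eigenstates from $\mathbb{S}$ and assume $V_H(\psi_k)>0$ for $k=1,\dots,s$. The hypothesis that for every $0<t<\tau$ some state in $\mathbb{S}$ has $|\langle\psi_k|e^{-iHt}|\psi_k\rangle|<1$ can only be witnessed by non‑eigenstates, so $|\Omega\rangle:=\bigotimes_{k=1}^{s}|\psi_k\rangle$ (under the total Hamiltonian) has $\tau$ as a period and no smaller one; Lemma~\ref{lem:app:54} then yields a finite integer $L$ with $\tfrac12\sum_n|p_{\Omega^{\otimes L}}(n)-p_{\Omega^{\otimes L}}(n+1)|=1-\beta<1$ for some $\beta>0$.

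The core step is a local CLT for $|\Psi(m)\rangle=\bigotimes_{k=1}^s|\psi_k\rangle^{\otimes M_k}$, $M_k:=\lceil r_km\rceil$. I would group the $\sum_kM_k$ tensor factors into $G:=\lfloor(\min_kM_k)/L\rfloor$ blocks, placing at least $L$ copies of every $\psi_k$ in each block (possible since $GL\le\min_kM_k\le M_k$) and distributing the remaining $M_k-GL$ copies of $\psi_k$ as evenly as possible among the $G$ blocks. Because $M_k-\min_jM_j=O(m)$ while $G=\Theta(m)$, each block contains a number of copies of $\psi_k$ bounded by a constant $C_k$ independent of $m$. The integer energy variable $Y_i$ of block $i$ is therefore a sum of a uniformly bounded number of bounded independent terms, so it has uniformly bounded third moment and variance $\ge L\sum_kV_H(\psi_k)(\tau/2\pi)^2=:v_0>0$; moreover $Y_i$ contains $\Omega^{\otimes L}$ as a sub‑factor, and since total‑variation distance does not increase under convolution, $d_{\text{TV}}(\mathcal L(Y_i),\mathcal L(Y_i+1))\le 1-\beta$. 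Thus the quantities $a,b,c$ of Theorem~\ref{lem:prob} satisfy $a\ge v_0$, $b\ge\min\{1/2,\beta\}$, $c\le(\text{const})/v_0$ uniformly in $m$, and since the total energy of $|\Psi(m)\rangle$ equals $\sum_{i=1}^{G}Y_i$ with $G\to\infty$, Theorem~\ref{lem:prob} gives $d_{\text{TV}}(p_{\Psi(m)},TP(\mu',\sigma'^2))\to 0$, where $\sigma'^2=\sum_kM_kV_H(\psi_k)(\tau/2\pi)^2>0$ (variance being additive irrespective of the grouping) and $\mu'$ is the mean. Applying the same theorem to $N'$ i.i.d.\ Bernoulli$(1/2)$ c\nobreakdash-bit energies shows $p_{\Theta_{\text{c-bit}}^{\otimes N'}}$ lies within $O(1/\sqrt{N'})$ of $TP(N'/2,N'/4)$.

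Finally I would match variances and invoke Proposition~\ref{prop:TI}. Set $N':=\lceil 4\sigma'^2\rceil$, so $N'/4\in[\sigma'^2,\sigma'^2+\tfrac14)$; since $R>\sum_kr_kV_H(\psi_k)/V_{\text{c-bit}}=4(\tau/2\pi)^2\sum_kr_kV_H(\psi_k)$ while $N'=4(\tau/2\pi)^2\sum_k\lceil r_km\rceil V_H(\psi_k)+O(1)$, we get $N'\le\lceil Rm\rceil$ for all large $m$; a TI operation may then trace out the excess $\lceil Rm\rceil-N'$ c\nobreakdash-bits (partial trace is TI) and prepare the eigenstate factors for free. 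Now $TP(N'/2,N'/4)$ and $TP(\mu',\sigma'^2)$ are integer translates of Poisson laws whose parameters differ by $O(1)$, so Lemma~\ref{lem:dis} bounds, after the optimal integer shift, the total‑variation distance between them by $O(1)/\sqrt{\sigma'^2}\to0$. Combining the three estimates by the triangle inequality produces an integer $k_m$ with $\delta_m:=d_{\text{TV}}(\mathrm{shift}_{k_m}(p_{\Theta_{\text{c-bit}}^{\otimes N'}}),p_{\Psi(m)})\to0$; as both pure states have period $\tau$, Proposition~\ref{prop:TI} gives a TI operation achieving $|\Psi(m)\rangle$ from $\Theta_{\text{c-bit}}^{\otimes N'}$ with trace‑distance error $\le\sqrt{2\delta_m}\to0$. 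Composing with the trace‑out and eigenstate preparation, and using an arbitrary operation for the finitely many small $m$ with $N'>\lceil Rm\rceil$, yields the asserted transformation with $\epsilon_m\to0$.

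The main obstacle is the middle paragraph: producing a local CLT for the \emph{non‑i.i.d., unbalanced} product $|\Psi(m)\rangle$ with the \emph{correct weighted} variance $\sum_kr_kV_H(\psi_k)$ (rather than, say, the wasteful $\max_k r_k$ one gets from converting to balanced copies and discarding) forces the blocking construction above, so that every summand simultaneously satisfies the hypotheses of Theorem~\ref{lem:prob} with constants uniform in $m$ — in particular one must lean on Lemma~\ref{lem:app:54} (hence Bezout) to guarantee the translate‑overlap condition for the blocks even when the individual states are supported on proper sublattices of energy levels.
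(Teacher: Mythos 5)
Your proposal is correct, but it takes a genuinely different route from the paper's proof. The paper first assumes the $r_k$ are rational, picks an integer $M$ with $Mr_k\in\mathbb{Z}$, and bundles $\bigotimes_k|\psi_k\rangle^{\otimes Mr_k}$ into a single composite pure state $|\Psi(M)\rangle$ of period $\tau$ and variance $M\sum_k r_kV_H(\psi_k)$; Theorem \ref{Thm:main:app}, applied with c-bits as input and copies of this composite as output, then gives the rate along the subsequence $m=Mn$, general $m$ is handled by preparing $\lceil m/M\rceil$ composites and discarding the excess, and irrational $r_k$ are handled by rational approximations $\tilde r_k\ge r_k$ taken arbitrarily close. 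You instead treat the unbalanced product $|\Psi(m)\rangle$ directly: you re-run the Barbour--Cekanavicius estimate (Theorem \ref{lem:prob}) by grouping the factors into $\Theta(m)$ blocks of bounded composition, each containing $\Omega^{\otimes L}$, so that Lemma \ref{lem:app:54} together with monotonicity of total variation under convolution yields the translate-overlap condition with constants uniform in $m$, and then you match the resulting translated Poisson against the c-bit binomial via Lemma \ref{lem:dis} and Proposition \ref{prop:TI}. Your route avoids the rationality detour and the subsequence/discarding step and produces the exact weighted variance for every $m$ in one pass; the paper's route is shorter because Theorem \ref{Thm:main:app} is reused as a black box and no new limit-theorem bookkeeping is needed. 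One small imprecision to fix: there is a single Hamiltonian $H$ for all the $\psi_k$, and a single constant shift cannot in general arrange $e^{-iH\tau}|\psi_k\rangle=|\psi_k\rangle$ simultaneously for all $k$, since the lattice offsets $E_0^{(k)}$ may differ. This is harmless — define each factor's integer energy variable relative to its own reference as in Eq.(\ref{weight}); the block sums are then integer valued up to a fixed offset, and Proposition \ref{prop:TI} allows an arbitrary integer translation while constant shifts of Hamiltonians are free by Remark \ref{rem21} — but the reduction should be stated this way rather than as a global shift of $H$.
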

\begin{proof}
First, we temporarily assume all $r_k>0$ are rational numbers and show that the result follows  from theorem \ref{Thm:main:app}: 
Since $\mathbb{S}$ is a finite set, there exists a finite integer $M$ such that $M r_k$ is an integer for all $k$. Now consider state 
\be
|{\Psi}(M)\rangle=\bigotimes_{k\in S} |\psi_k\rangle^{\otimes (M r_k)} .
\ee
It can be easily seen that this state  has energy variance
 \be
M \sum_k r_k V_H(\psi_k)\ ,
\ee  
and period $\tau$. To see the latter note that all states $\psi_k$ satisfies $|\langle\psi_k|e^{-iH \tau}|\psi_k\rangle|=1$ and for any $t<\tau$ there is, at least, one state $|\psi_k\rangle$  such that  $|\langle\psi_k|e^{-iH t}|\psi_k\rangle|<1$, we conclude that 
\be
 \tau=\inf_t \Big\{t>0:\ \   \left|\langle{\Psi}(M)| (e^{-i H t })^{\otimes (M \sum_k r_k)} |{\Psi}(M)\rangle\right|=1 \Big\}\ .
 \ee
 Therefore, we can apply the results of section \ref{Sec:app1} and in particular, theorem \ref{Thm:main:app} which implies that by consuming c-bits at rate 
\be
R'> M \sum_k r_k \frac{V_H(\psi_k)}{V_\text{c-bit}} \ ,
\ee 
per copy it is possible to prepare copies of   $|\Psi(M)\rangle$ using TI operations.  More precisely, there exists a sequence of TI operations  that implements the state conversion   
\be
|\Theta\rangle_{\text{c-bit}}^{\otimes\lceil R' n\rceil}=|\Theta\rangle_{\text{c-bit}}^{\otimes\lceil \frac{R'}{M} (M\times n)\rceil}=|\Theta\rangle_{\text{c-bit}}^{\otimes\lceil \frac{R'}{M} \times m\rceil}   \xrightarrow{TI}\stackrel{\epsilon_n}{ \approx} |\Psi(M)\rangle^{\otimes n}=|\Psi(n\times M)\rangle=|\Psi(m)\rangle   \ \  \text{as } n\rightarrow\infty,\  \epsilon_n\rightarrow 0\ ,
\ee
where $m=M\times n$.  This proves Eq.(\ref{trans591})  for the subsequence of integers that are multiple of $M$.  It can be easily seen that the result also holds for general integer $m$: Clearly,  by discarding subsystems, which is a TI operation, we can convert $|\Psi(m)\rangle$  to $|\Psi(m')\rangle$ for any $m'\le m$. Therefore, to generate $|\Psi(m)\rangle$ for $m$ which is not a multiple of $M$, we can  generate $|\Psi(M)\rangle^{\otimes n}$  for $n=\lceil m/M\rceil$, and then discard the additional subsystems. This way we can implement the state conversion
\be
\Theta_\text{c-bit}^{\otimes\lceil R m\rceil} \xrightarrow{TI}\stackrel{\epsilon_m}{ \approx} |\Psi(m)\rangle=\bigotimes_{k\in S} |\psi_k\rangle^{\otimes \left \lceil{r_k m}\right \rceil}\  \ \  \text{as } m\rightarrow\infty,\  \epsilon_m\rightarrow 0\ , 
\ee
provided that
\be
\lceil R m\rceil > R' \times \lceil \frac{m}{M}\rceil> (M \lceil \frac{m}{M}\rceil) \times  \sum_k r_k \frac{V_H(\psi_k)}{V_\text{c-bit}} 
\ee
In the limit $m\rightarrow \infty$,  this is equivalent to  
\be
R >  \sum_k r_k \frac{V_H(\psi_k)}{V_\text{c-bit}}    \ .
\ee

This proves the result for the special case where all coefficients  $r_k>0$ are rational numbers. To extend the result to the case of irrational numbers, for each $r_k$ we choose a rational number $\tilde{r}_k\ge r_k$. Then, applying the above argument we find that the state conversion in Eq.(\ref{trans591}) is possible with any rate $R > \sum_k \tilde{r}_k V_H(\psi_k) $. Since the rational number $\tilde{r}_k$ can be arbitrary close to $r_k$, we conclude that for any $R> \sum_k r_k V_H(\psi_k) $  the state conversion in  Eq.(\ref{trans591}) can be implemented by a TI operation with an error which vanishes in the limit $m$ goes to infinity. This completes the proof.

 
\end{proof}

\subsection{Proof of theorem \ref{Thm0}}

 Finally, we present the proof of  theorem \ref{Thm0}.  From theorem \ref{Thm:Fisher}  we know that there exists a finite ensemble $\{(q_k, |\psi_k\rangle): k\in \mathbb{S}\}$ with density operator  $\rho=\sum_{k\in \mathbb{S}} q_k |\psi_k\rangle\langle\psi_k|$,  satisfying 
  \begin{enumerate}
 \item  $F_H(\rho)= \sum_k q_k F_H(\psi_k)= 4\times \sum_k q_k V_H(\psi_k)$\ .
 \item For all $k$, $|\langle\psi_k| e^{-i H \tau} |\psi_k\rangle|=1$, and for any $0<t<\tau$, there is at least one state $ |\psi_k\rangle$ in this ensemble such that $|\langle\psi_k| e^{-i H t} |\psi_k\rangle|< 1$.
 \end{enumerate}
To see the latter statement, note that if 
there is a $t_0\in(0,\tau)$ such that $|\langle\psi_k| e^{-i H t_0} |\psi_k\rangle|=1$ for all $k\in \mathbb{S}$, then
\be
e^{-i H t_0} \rho e^{i H t_0}=\sum_k q_k\  e^{-i H t_0}|\eta_k\rangle\langle\eta_k| e^{i H t_0}=\sum_k q_k\ |\eta_k\rangle\langle\eta_k|=\rho\ ,
\ee
which contradicts with the assumption that the period of $\rho$ under $H$ is $\tau$.

Next, consider $m$ copies of $\rho$,  i.e.,  state 
\be
\rho^{\otimes m}=\Big(\sum_{k\in S} q_k |\psi_k\rangle\langle\psi_k|\Big)^{\otimes m} =\sum_{\textbf{k} } q_{\textbf{k} } |\psi_\textbf{k}\rangle\langle\psi_\textbf{k}|\ ,
 \ee
where $\textbf{k}=k_1\cdots k_m\in \mathbb{S}^m$,    $q_{\textbf{k} }=q_{k_1}\cdots q_{k_m}$  and $|\psi_\textbf{k}\rangle=|\psi_{k_1}\rangle\otimes \cdots \otimes |\psi_{k_m}\rangle$.  

For any  $l\in\mathbb{S}$, let $n_l(\textbf{k})$ be the number of occurrence of $l$ in the string $\textbf{k}=k_1\cdots k_m\in \mathbb{S}^m$.  Then, for any $\delta >0$, we define the set of $\delta$-typical strings as
 \be
 \mathcal{T}_\delta\equiv \{\textbf{k}=k_1\cdots  k_m\in \mathbb{S}^m\big|\  \forall l\in \mathbb{S}:\ |\frac{n_l(\textbf{k})}{m}-q_l|\le \delta \}\ .
 \ee 
In other words, $ \mathcal{T}_\delta$ is the set of all strings for which the relative frequency of any $l\in \mathbb{S}$ is between $q_l-\delta$ and $q_l+\delta$. 

Now consider the decomposition of state $\rho^{\otimes m}$ as
\be
\rho^{\otimes m}=\sum_{\textbf{k} }  q_{\textbf{k} }\ |\psi_\textbf{k}\rangle\langle\psi_\textbf{k}|\ =\sum_{\textbf{k}\in\mathcal{T}_\delta }  q_{\textbf{k} }\ |\psi_\textbf{k}\rangle\langle\psi_\textbf{k}|+\sum_{\textbf{k}\notin\mathcal{T}_\delta }  q_{\textbf{k} }\ |\psi_\textbf{k}\rangle\langle\psi_\textbf{k}|\  .
 \ee
Based on this decomposition we can 
define a TI operation for preparing a state close to $\rho^{\otimes m}$: First, we sample $\textbf{k}\in \mathbb{S}^n$ with probability $q_{\textbf{k} }$. If $\textbf{k}$ is in the typical set $\mathcal{T}_\delta$, then we prepare state $|\psi_\textbf{k}\rangle$. Otherwise, we prepare  a fixed incoherent  state $\sigma_\text{inv}$, e.g., the maximally  mixed state. 
The resulting state is
\be
\tilde{\rho}_m=\sum_{\textbf{k}\in\mathcal{T}_\delta }   q_{\textbf{k} } |\psi_\textbf{k}\rangle\langle\psi_\textbf{k}|+ p_\text{err}\  \sigma_\text{inv}\ ,
 \ee
 where  
\be
p_\text{err}=\sum_{\textbf{k}\notin\mathcal{T}_\delta }  p_{\textbf{k} }=1-\sum_{\textbf{k}\in\mathcal{T}_\delta }  p_{\textbf{k} }\ .
\ee
  Then, using the standard typicality arguments, we know that for any fixed $\delta>0$, in the limit $m$ goes to infinity, the probability $p_\text{err}$ goes to 0, which means with probability approaching 1 the sampled string is in the typical set $\mathcal{T}_\delta$. This, in turn implies that  the trace distance between the output   state $\tilde{\rho}_m$  and the desired state $\rho^{\otimes m}$ vanishes, i.e.
 \be\label{app:limit}
\lim_{m\rightarrow \infty}\frac{1}{2}\|\rho^{\otimes m}-\tilde{\rho}_m\|_1 =0\ .
 \ee
Therefore, in the following we focus on the coherence cost of preparing $|\psi_\textbf{k}\rangle$ for $\textbf{k}$ in the typical set $\mathcal{T}_\delta$.  Up to a permutation, which is a TI unitary, this state  can be written as
 \be
\bigotimes_{l\in S} |\psi_l\rangle^{\otimes n_l(\textbf{k})}\ ,
\ee
where, the typicality of string $\textbf{k}$ implies
\be
n_l(\textbf{k}) \le m \times (q_l+\delta) \ .
\ee
Therefore, to obtain $|\psi_\textbf{k}\rangle$ it suffices to prepare 
\be
|\Psi(m)\rangle=\bigotimes_{l\in \mathbb{S}} |\psi_l\rangle^{\otimes \left \lceil{m\times (q_l+\delta) }\right \rceil}\ ,
\ee
and then, possibly discard some
 subsystems and permute the remaining ones. 
 
Then, we apply lemma  \ref{app:lem12}. Note that for all $l\in \mathbb{S}$, we have $|\langle\psi_l|e^{-iH \tau}|\psi_l\rangle|=1$, and for any $0<t<\tau$ there is, at least, one state $|\psi_l\rangle$  such that  $|\langle\psi_l|e^{-iH t}|\psi_l\rangle|<1$. Therefore, all the assumptions of this lemma are satisfied. Applying this lemma we conclude that if $\textbf{k}$ is in the typical set, then  there exists a sequence of TI operations that implements the state conversion 
\be
  \Theta_\text{c-bit}^{\otimes\lceil R m\rceil} \xrightarrow{TI}\stackrel{\epsilon}{ \approx}|\psi_\textbf{k}\rangle\  \ \  \text{as } m\rightarrow\infty, \epsilon\rightarrow 0\ ,
\ee
provided that
 \be
R > \sum_l (q_l+\delta)  \frac{V_H(\psi_l)}{V_{c-bit}}\ .
\ee
Note that since  $ \sum_l q_l V_H(\psi_l)=F_H(\rho)$ and $V_H(\psi)\le \|H\|^2$,  the latter condition is satisfied if
\be
R > \frac{F_H(\rho)+\delta \|H\|^2}{F_{c-bit}}=(\frac{\tau}{2\pi})^2  \times [F_H(\rho)+\delta \|H\|^2]\ ,
\ee
where $F_{c-bit}=4 V_{c-bit}= (2\pi/\tau)^2$.

This, in turns implies  there exists a sequence of  TI operations that implements 
\be
 \Theta_\text{c-bit}^{\otimes\lceil R m\rceil} \xrightarrow{TI}\stackrel{\epsilon}{ \approx}\tilde{\rho}_m=\sum_{\textbf{k} }  q_{\textbf{k} } |\psi_\textbf{k}\rangle\langle\psi_\textbf{k}|+ p_\text{err}\  \sigma_\text{inv}   \ \  \text{as } m\rightarrow\infty, \epsilon\rightarrow 0\ . 
\ee
Combining this with Eq.(\ref{app:limit}),  and using the fact that $\delta>0$ can be chosen arbitrarily small, we find that for any $R>(\frac{\tau}{2\pi})^2 F_H(\rho)$ there exists a sequence of TI operations that implements
\be
 \Theta_\text{c-bit}^{\otimes\lceil R m\rceil} \xrightarrow{TI}\stackrel{\epsilon}{ \approx}\rho^{\otimes m}   \ \  \text{as } m\rightarrow\infty, \epsilon\rightarrow 0\ .
\ee
This proves $C^\text{TI}_c(\rho) \le (\frac{\tau}{2\pi})^2 F_H(\rho)$. The proof of $C^\text{TI}_c(\rho) \ge (\frac{\tau}{2\pi})^2 F_H(\rho)$ follows from theorem \ref{Thm:mono} in Sec.(\ref{Sec:Fisher-mon}) which implies Quantum Fisher Information cannot increase in the iid regime.

\end{document}